\documentclass[aps,prx,onecolumn,nofootinbib,notitlepage, 10pt, longbibliography]{revtex4-1}

\usepackage{amsmath, amssymb, amsthm, amsfonts, graphicx}
\usepackage{hyperref}
\hypersetup{
	bookmarksnumbered=true, 
	unicode=false, 
	pdfstartview={FitH}, 
	pdftitle={}, 
	pdfauthor={}, 
	pdfsubject={}, 
	pdfcreator={}, 
	pdfproducer={}, 
	pdfkeywords={}, 
	pdfnewwindow=true, 
	colorlinks=true, 
	linkcolor=blue, 
	citecolor=blue, 
	filecolor=blue, 
	urlcolor=blue 
}
\usepackage{complexity}
\usepackage{thmtools}
\usepackage[capitalize,nameinlink]{cleveref} 
\usepackage{verbatim}
\usepackage{qcircuit}
\newcommand{\ket}[1]{| #1\rangle}        
\newcommand{\bra}[1]{\langle #1|}        
\newcommand{\braket}[2]{\langle #1 | #2 \rangle} 
\newcommand{\ketbra}[2]{| #1 \rangle\!\langle #2 |} 

\newcommand{\ii}{\mathbb{I}}		

\newtheorem{theorem}{Theorem}

\newtheorem{corollary}[theorem]{Corollary}
\newtheorem{lemma}[theorem]{Lemma}

\newcommand{\eq}[1]{Eq.~\hyperref[eq:#1]{(\ref*{eq:#1})}}
\renewcommand{\sec}[1]{\hyperref[sec:#1]{Section~\ref*{sec:#1}}}
\newcommand{\app}[1]{\hyperref[app:#1]{Appendix~\ref*{app:#1}}}
\newcommand{\tab}[1]{\hyperref[tab:#1]{Table~\ref*{tab:#1}}}
\newcommand{\fig}[1]{\hyperref[fig:#1]{Figure~\ref*{fig:#1}}}
\newcommand{\figa}[2]{\hyperref[fig:#1]{Figure~\ref*{fig:#1}#2}}
\newcommand{\figx}[2]{\hyperref[fig:#1]{Figure~\ref*{fig:#1}(#2)}}
\newcommand{\thm}[1]{\hyperref[thm:#1]{Theorem~\ref*{thm:#1}}}
\newcommand{\lem}[1]{\hyperref[lem:#1]{Lemma~\ref*{lem:#1}}}
\newcommand{\cor}[1]{\hyperref[cor:#1]{Corollary~\ref*{cor:#1}}}
\newcommand{\defn}[1]{\hyperref[def:#1]{Definition~\ref*{def:#1}}}
\newcommand{\alg}[1]{\hyperref[alg:#1]{Algorithm~\ref*{alg:#1}}}
\newcommand{\prob}[1]{\hyperref[prob:#1]{Problem~\ref*{prob:#1}}}
\newcommand{\threatM}[1]{\hyperref[threat:#1]{Threat Model \ref*{threat:#1}}}

\usepackage{tikz}
\usepackage{mathtools}
\usepackage{amsmath, amssymb, amsthm, amsfonts, graphicx}
\usepackage{verbatim}
\usepackage{pgfmath}
\usetikzlibrary{backgrounds,fit,decorations.pathreplacing,calc}
\newcommand{\ketb}[2]{\ensuremath{\left|#1\right\rangle_{#2}}}

\makeatletter
\tikzset{
	adjust fit placement/.style={
		every fit/.style={
			text height/.expanded=\the\pgf@y+.5\ht\pgfnodeparttextbox -.5\dp\pgfnodeparttextbox,
		}
	}
}
\makeatother
\tikzset{adjust fit placement}

\tikzstyle{operator} = [draw,fill=white,minimum size=1.5em,thin] 
\tikzstyle{joint} = [draw,fill=white,rounded corners=1.0mm, minimum size=1.25em,thin] 
\tikzstyle{ctrl} = [draw,fill,shape=circle,minimum size=5pt,inner sep=0pt]
\tikzstyle{oplus} = [inner sep=-1pt]
\tikzstyle{vdots} = [above=-5mm]
\tikzstyle{bigBoxSpace} = [text width = 35pt]
\tikzstyle{doubleLine} = [fill=transparent, double distance=14pt] 
\tikzstyle{doubleLineNarrow}  = [fill=transparent, double distance=5pt] 
\tikzstyle{operatorDoubleLine} = [operator, minimum size=20pt] 
\tikzstyle{ctrlDoubleLine} = [ctrl, above = 4.5pt]
\tikzstyle{ctrlDoubleLineDown} = [ctrl, below = 4.5pt]
\tikzstyle{doubleLineVdots} = [above=-8.5pt]
\tikzstyle{operatorDoubleRegister} = [operator, text width = 30pt] 
\tikzstyle{operatorDoubleRegisterWidth} = [text width = 30pt]

\begin{document}
\title{Hamiltonian Simulation in the Interaction Picture}
\author{Guang Hao Low and Nathan Wiebe}
\affiliation{Quantum Architectures and Computation, Microsoft Research, Redmond, Washington, USA}
\begin{abstract}
	We present a low-space overhead simulation algorithm based on the truncated Dyson series for time-dependent quantum dynamics. This  algorithm is applied to simulating time-independent Hamiltonians by transitioning to the interaction picture, where some portions are made time-dependent. This can provide a favorable complexity trade-off as the algorithm scales exponentially better with derivatives of the time-dependent component than the original Hamiltonian. We show that this leads to an exponential improvement in gate complexity for simulating some classes of diagonally dominant Hamiltonian.  Additionally we show that this can reduce the gate-complexity scaling for simulating $N$-site Hubbard models for time $t$ with arbitrary long-range interactions as well as reduce the cost of quantum chemistry simulations within a similar-sized plane-wave basis to $\widetilde{\mathcal{O}}(N^2t)$ from $\widetilde{\mathcal{O}}(N^{11/3}t)$. We also show a quadratic improvement in query complexity for simulating sparse time-dependent Hamiltonians, which may be of independent interest.
\end{abstract}
\date{\today}
\maketitle

\section{Introduction}
Simulating quantum dynamics has become in recent years an an increasingly sophisticated field whose growth has been buoyed up by a host of recent successes in both general purpose simulation methods~\cite{Aharonov2003Adiabatic,Berry2007Efficient,Wiebe2010,Childs2012,Berry2012,Berry2014Exponential,Berry2015Hamiltonian,berry2015simulating,Novo2016corrected,Low2016qubitization,Low2016HamSim,Low2017USA} as well as in chemistry and material simulation~\cite{whitfield2011simulation,babbush2017low,bauer2016hybrid}. The majority of the advances that we have seen in the field have not come from identifying ways to exploit the structure of the Hamiltonian; rather, most have arisen from either better analysis of the simulation methods or from designing more efficient ways to implement the propagators.  In this work, we show a method that can explicitly take advantage of structures within the Hamiltonian to further reduce the complexity of simulations.

The central intuition behind our work stems from the interaction picture.  Quantum computation is often discussed in the Schr\"odinger picture wherein the time dynamics of the quantum state is given by

\begin{equation}
\partial_t \ket{\psi(t)} =-iH(t) \ket{\psi(t)},
\end{equation}
for the general case where the Hamiltonian $H(t)$ is time-dependent.  The quantum state carries the entirety of the dynamics here.  Alternatively, one can work in the Heisenberg picture where the time evolution is absorbed into the operators that are being measured.

  The interaction picture can be viewed as a compromise between the Heisenberg and the Schr\"odinger pictures.  In the interaction picture, some of the dynamics are carried by the operators and others by the state.  This view allows one to focus on effects of the interaction, and is particularly fruitful in manual calculations when interactions are perturbative corrections to the free-theory. For instance, if the Hamiltonian is $H=A+B$ then an analytic evaluation of the time-ordered propagator of the interaction picture Hamiltonian $H_I(t)=e^{i A t}Be^{-i A t}$ is possible by perturbative expansions based on Green's functions and Feynman diagrams.  Without assuming any kind of perturbative limit, this division in our case is precisely what allows us to gain an advantage for certain quantum simulation problems.

Simulation of a Hamiltonian that is time-independent in the Schr\"{o}dinger picture can be much more challenging to simulate in the interaction picture. There, evolution by a time-independent Hamiltonian $H$ is transformed in the rotating frame $\ket{\psi_I(t)}=e^{i A t}\ket{\psi(t)}$ to evolution by a time-dependent Hamiltonian $H_I(t)=e^{i A t}Be^{-i A t}$. This follows from an elementary manipulation of the Schr{\"o}dinger equation:
\begin{align}
\label{eq:schrodinger_equation}
i\partial_{t}\ket{\psi(t)}=(A+B)\ket{\psi(t)}
\longrightarrow
i\partial_{t}\ket{\psi_I(t)}=e^{i A t}Be^{-i A t}\ket{\psi_I(t)}.
\end{align}
Implementing the time-ordered propagator $\mathcal{T}[\exp{(-i\int_0^t H(s)\mathrm{d}s)}]$ that solves~\cref{eq:schrodinger_equation} on a quantum computer requires time-dependent simulation algorithms. These are generally more complicated than time-independent algorithms, and exhibit different cost trade-offs that do not appear favorable. For instance, an order-$k$ time-dependent Trotter-Suzuki product formula~\cite{Wiebe2010} has cost that scales with the rate of change of $H(t)$ like $\mathcal{O}(e^{\mathcal{O}(k)}(t\Lambda)^{1+1/(2k)})$\footnote{The standard big-$\mathcal{O}$ notation defines $f(n)\in{\mathcal{O}}(g(n))$ for positive functions $f(n), g(n)>0$ as the existence of absolute constants $a>0,b>0$ such that for any $n>a$, $f(n)\le b g(n)$. We also use $f(n)=\tilde{\mathcal{O}}(g(n))$ when $f(n)\le b g(n)\operatorname{polylog}(g(n))$, and $f(n)=\Omega(g(n))$ when $f(n)\ge b g(n)$, and $f(n)=\Theta(g(n))$ when both $f(n)\in{\mathcal{O}}(g(n))$ and $f(n)=\Omega(g(n))$ are true.}, where 
$\Lambda=\max_{s}\|\dot{H}(s)\|^{1/2}\in{\mathcal{O}}(\|[A,B]\|^{1/2})$. More advanced techniques based on compressed fractional queries~\cite{Berry2014Exponential} appear to scale better like $\sim t\|B\|\frac{\log{(\Lambda t/\epsilon)}}{\log\log{(\Lambda t/\epsilon)}}$ but in terms of queries to a unitary oracle that obscures the gate complexity as it expresses Hamiltonian matrix elements at different times in binary, and may be difficult to implement in practice. One proposed technique~\cite{berry2015simulating} directly implements a truncated Dyson series of a the time-ordered propagator and argues, though without proof, a similar scaling in terms of queries to a different type of oracle.

We show that simulation in the interaction picture can substantially improve the efficiency of time-independent simulation. In~\cref{sec:truncated_dyson_series}, we complete the general time-dependent simulation algorithm by a truncated Dyson series proposed by~\cite{berry2015simulating} by providing a rigorous analysis of the approximation and explicit circuit constructions, with improvements in gate and space complexity over previously expected costs. In~\cref{sec:interaction_picture}, we identify situations where the gate complexity of implementing these queries scale with the interaction strength $\mathcal{O}(\|B\|)$, and not the larger uninteresting component $\mathcal{O}(\|A\|)$. Such are the cases where simulation in the interaction picture is advantageous. In~\cref{sec:Hubbard}, we showcase the potential of interaction-picture simulation by an electronic structure application in the plane-wave basis. We rigorously bound the cost of simulating the time-evolution of $N$ spin-orbitals subject to long-range electron-electron interactions to $\tilde{\mathcal{O}}(N^2 t)$ gates, which is close to a quadratic improvement over prior art of $\tilde{\mathcal{O}}(N^{11/3} t)$~\cite{babbush2017low}. In~\cref{sec:sparse_Hamiltonian_simulation}, we present a complexity theoretic perspective of our work by considering the abstract problem of simulating time-dependent sparse Hamiltonians in the standard query model. We obtain a quadratic improvement in sparsity scaling~\cite{Berry2014Exponential}, and find optimized algorithms for simulating diagonally dominant Hamiltonians.

\section{Outline of paper} A detailed summary of main results in each section follows.  
\\

\cref{sec:truncated_dyson_series} -- {\bf Time-dependent Hamiltonian simulation by a truncated Dyson series}
\\
We present our main algorithmic contribution: a general time-dependent simulation algorithm, described in~\cref{Thm:Compressed_TDS} with a rigorous analysis of its performance and explicit circuit constructions, that is based on synthesizing an approximate Dyson series for general time-dependent Hamiltonians $H(t)$ characterized by spectral-norm $\alpha \ge \max_t \|H(t)\|$ and average rate-of-change $\langle\|\dot{H}\|\rangle$. Bounds on the approximation error of truncating and discretizing the Dyson series are proven in~\cref{sec:proof_truncation_discretization_dyson}, which is used to obtain the cost of simulating the time-ordered evolution operator. In~\cref{sec:compresson_gadget}, this cost is determined to be $\mathcal{O}\big(\alpha t\frac{\log{(\alpha t/\epsilon)}}{\log\log{(\alpha t /\epsilon)}}\big)$ queries. Compared to the original proposal by~\cite{berry2015simulating}, worked out in~\cref{sec:dyson_series_alg_duplicated_registers}, our approach has a gate complexity that scales with $\mathcal{O}(\log{(\langle\|\dot H\|\rangle)})$,  instead of the worst case $\mathcal{O}\big(\log{(\max_{t}\|\dot H(t)\|)}\big)$. The qubit overhead is also reduced by a multiplicative factor of $\mathcal{O}(\log{\big(\frac{t}{\epsilon}\langle\|\dot H\|\rangle\big)})$. The trick we use is of independent interest as it also reduces the space overhead of the time-independent truncated Taylor series algorithm~\cite{berry2015simulating}, discussed in~\cref{sec:truncated_taylor_algorithm} for completeness.  
\\

\cref{sec:interaction_picture} -- {\bf Interaction picture simulation}
\\
 We apply this truncated Dyson series algorithm to simulate time-evolution by a time-independent Hamiltonian $H=A+B$ in the interaction picture. In~\cref{sec:implementation_HAM_T}, we evaluate the gate complexity of constructing the query $\operatorname{HAM-T}$ for an interaction picture Hamiltonian $H_I(t)=e^{i A t}Be^{-i A t}$. This leads to a simulation, described in~\cref{thm:int_pic_sim}, of $e^{-i(A+B)t}$ using 
\begin{align}
\label{eq:int_pic_sim_cost_intro}
\mathcal{O}\left(\alpha_B t \operatorname{polylog}((\alpha_A+\alpha_B)t/\epsilon)\right)
\end{align}
queries to a unitary oracle $O_B$ such that $(\bra{0}_a\otimes \openone_s) O_B (\ket{0}_a\otimes \openone_s) = \frac{B}{\alpha_B}$, queries to unitary time-evolution $e^{iA\tau}$ by $A$ alone for time $\tau=\mathcal{O}(\alpha_B^{-1})$, and additional primitive quantum gates. The parameter $\alpha_A\ge \|A\|$ is also any upper bound on the spectral norm of $A$. This may be compared to state-of-art Schr\"odinger picture simulation algorithms for time-independent Hamiltonians, which require
\begin{align}
\mathcal{O}\left((\alpha_B+\alpha_A) t \operatorname{polylog}((\alpha_A+\alpha_B)t/\epsilon)\right)
\end{align}
queries to $O_B$, queries an analogous oracle for $O_A$, and additional primitive quantum gates. Our result~\cref{eq:int_pic_sim_cost_intro} is then advantageous in cases where roughly $\|A\| \gg \|B\|$ and the gate complexity of $e^{iA\tau}$ is of the same order as $O_B$. In other words, the dominant scaling in gate complexity is the interaction strength $\|B\|$, and not the larger uninteresting component $\|A\|$.
\\

\cref{sec:Hubbard} -- {\bf Application to the Hubbard model with long-ranged interactions} 
\\
 We demonstrate the advantage of Hamiltonian simulation in the interaction picture over time-independent simulation algorithms with the example of a general second-quantized Hubbard model on $N$ lattice sites in an arbitrary number of dimensions.

The model we consider allows for arbitrary single-site disorder, in addition to arbitrary periodic translationally invariant kinetic hopping terms and long-ranged density-density interactions. Provided that the energy of the kinetic term is extensive, our interaction-picture algorithm has gate complexity $\tilde{\mathcal{O}}\left(N^2 t\right)$. Most remarkably, the potential energy only needs to be polynomial in $N$, which is an extremely lax constraint. In particular, this model generalizes electronic structure simulations in the plane-wave basis~\cite{babbush2017low} (which has potential energy $\mathcal{O}(N^2)$), considered in~\cref{sec:Chemistry}. In this case, our result achieves almost a quadratic improvement over the prior art of $\tilde{\mathcal{O}}\left(N^{11/3} t\right)$ gates. This is complementary to recent work by~\cite{Haah2018quantum} which achieves $\tilde{\mathcal{O}}(N t)$ scaling, but under the much stronger assumption of short-range exponentially decaying interactions.
\\

\cref{sec:sparse_Hamiltonian_simulation} -- {\bf Application to sparse Hamiltonian simulation}
\\
We consider a complexity-theoretic generalization of the technology we develop for time-dependent simulation and simulation in the interaction picture. This is through the standard query model for black-box $d$-sparse Hamiltonian simulation, which assumes access to a unitary oracle that provides the positions and values of non-zero entries of the Hamiltonian. Each each row has at most $d$ non-zero entries, and the maximum absolute value of any entry is $\|H\|_{\rm max}$. This information is also provided as a function of a time index for time-dependent Hamiltonians. In~\cref{sec:spase_ham_sim_time_dependent}, we consider this time-dependent case and describe in~\cref{Cor:sparse_time_dependent_simulation} how the time-ordered evolution operator may be simulated using 
$
\mathcal{O}\big(td\|H\|_{\rm max}\frac{\log{(td\|H\|_{\rm max}/\epsilon)}}{\log\log{(td\|H\|_{\rm max}/\epsilon)}}\big)
$
queries. Though linear scaling with respect to $d$ is well-known in the time-independent case~\cite{Berry2012}, this is a quadratic improvement in sparsity scaling over prior art for the time-dependent case~\cite{Berry2014Exponential}. An analogous treatment for simulating sparse time-independent Hamiltonian in the interaction picture in~\cref{sec:sparse_diagonally_dominant},~\cref{thm:sparse_diagonal_dominant} has an identical query complexity, except that $\|H\|_{\rm max}$ is replaced by the maximum absolute value of any off-diagonal entry. This improvement is particularly advantageous for the simulation of diagonally dominant Hamiltonians which arise in many physical systems expressed within an appropriate basis.

\section{Time-dependent Hamiltonian simulation by a truncated Dyson series}
\label{sec:truncated_dyson_series}
In the Schr\"{o}dinger picture, the dynamics of a quantum state is given by $i\partial_{t}\ket{\psi(t)}=H(t)\ket{\psi(t)}$ --  given the initial state $\ket{\psi(0)}$ at time $t=0$, the time-evolved state is $\ket{\psi(t)}=U(t)\ket{\psi(0)}$.
If $H(t)$ is time-independent $U(t)$ can always be written as $e^{-iHt}$.  In the time-dependent case the time evolution operator no longer is $e^{-iHt}$ and indeed it does not in general have a closed-form expression.  The following notation is customarily used to represent the time-evolution operator, $U(t): \ket{\psi(0)} \mapsto \ket{\psi(t)}$, in the case where $H:\mathbb{R} \mapsto \mathbb{C}^{N\times N}$ is a piecewise continuous function:
\begin{equation}
U(t) = \lim_{r\rightarrow \infty} \prod_{j=1}^{r} e^{-iH(t(j-1)/r)t/r}:= \mathcal{T} e^{-i \int_{0}^t H(s) \mathrm{d}s},
\end{equation}
where $\mathcal{T}$ is known as the time-ordering operator.

The fact that time-dependent dynamics lacks a closed form makes simulating its dynamics slightly more challenging than the time-independent case.  This arises because approximations, such as Taylor series,
fail to give a simple series expansion for $U(t)$ unless $[H(t),H(t')]= 0$.  Fortunately, there exists a more general expansion known as the Dyson series that fills the exact same role that the Taylor series fills for the time-independent case.
 For any $t>0$ and bounded $\|H(t)\|$, the Dyson series gives the following absolutely convergent expansion for $U(t)$
\begin{align}
\label{eq:dyson_series}
U(t)
=
\openone
-i\int^t_0 H(t_1)\mathrm{d} t_1
-\int^{t}_{t_2}\int^{t_2}_{0}H(t_2) H(t_1)\mathrm{d} t_1\mathrm{d} t_2
+i\int^{t}_{t_3}\int^{t_3}_{t_2}\int^{t_2}_{0}H(t_3)H(t_2) H(t_1)\mathrm{d} t_1\mathrm{d} t_2\mathrm{d} t_3
+\cdots.
\end{align}
This may be compactly represented using the time-ordering operator $\mathcal{T}$ which sorts any sequence of $k$ operators according to the times $t_j$ of their evaluation, that is, $\mathcal{T}\left[H(t_k)\cdots H(t_2)H(t_1)\right]=H(t_{\sigma(k)})\cdots H(t_{\sigma(2)})H(t_{\sigma(1)})$, where $\sigma$ is a permutation such that $t_{\sigma(1)}\le t_{\sigma(2)}\le \cdots \le t_{\sigma(k)}$. For instance, $\mathcal{T}\left[H(t_2)H(t_1)\right]=\theta(t_2-t_1)H(t_2)H(t_1)+\theta(t_1-t_2)H(t_1)H(t_2)$ using the Heaviside step function $\theta$. With this notation, the propagator is formally expressed as a time-ordered evolution operator $U(t)=\mathcal{T}\left[e^{-i\int_0^t H(s) \mathrm{d}s}\right]$ defined as
\begin{align}
\label{eq:dyson_series_time_ordered}
\mathcal{T}\left[e^{-i\int_0^t H(s) \mathrm{d}s}\right]
=
\sum^\infty_{k=0} (-i)^k D_k,
\quad
D_k=\frac{1}{k!}\int_0^t\cdots\int_0^t \mathcal{T}\left[H(t_k)\cdots H(t_1)\right]\mathrm{d}^k t.
\end{align}

The aim of our work is to approximate $U(t)$ within error $\epsilon$ (as measured by the spectral-norm $\|\cdot\|$ of the difference between the approximation and the true dynamics) for any $t\ge 0$.  We achieve this by constructing the Dyson series expansion $U(t) \approx \sum_{k=0}^K(-i)^k D_k$ and truncate it at finite order $K$ to control the error.  This idea was suggested earlier by~\cite{berry2015simulating}, though without proof. 

\subsection{Input model}
The cost of our algorithm is expressed in terms of unitary oracles $\operatorname{HAM}$ and $\operatorname{HAM-T}$ that encode Hamiltonians in a so-called standard-form~\cite{Low2016qubitization,Low2017USA}. When $H$ is time-independent, we assume access to the following oracle.
\begin{restatable}[Time-independent matrix encoding]{definition}{Blockencoding}
	\label{eq:standard-form-TI}
	Given a matrix ${H} \in \mathbb {C}^{2^{n_s}\times 2^{n_s}}$, and a promise $\|H\|\le\alpha$ assume there exists a unitary oracle $\operatorname{HAM}\in \mathbb {C}^{2^{n_a+n_s}\times 2^{n_a+n_s}}$ such that
	\begin{align}
	\operatorname{HAM} = 
\left(\begin{matrix}
	H/\alpha & \cdot \\
	\cdot & \cdot
\end{matrix}\right)
\quad
\Rightarrow \quad(\bra{0}_a\otimes \ii_s)\operatorname{HAM}(\ket{0}_a\otimes \ii_s) = \frac{H}{\alpha}.
\end{align}
\end{restatable}
Use of this is justified as it generalizes a variety of different input models~\cite{Low2016qubitization}. As an example, $H=\sum^{2^{n_a}}_{j=1}a_j U_j$~\cite{Childs2012} could be a linear combination of $l=2^{n_a}$ unitaries. Then the circuit depicted in~\cref{fig:standard-form} implements $\operatorname{HAM}$ with normalization constant $\alpha=\sum^{l}_{j=1}a_j$ using the unitary oracles
\begin{align}
\operatorname{HAM}=(\operatorname{PREP}^\dag \otimes \openone_s)\cdot\operatorname{SEL}\cdot(\operatorname{PREP}\otimes \openone_s),
\quad
\operatorname{PREP}\ket{0}_a=\sum^{l}_{j=1}\sqrt{\frac{a_j}{\alpha}}\ket{j}_a,
\quad
\operatorname{SEL}=\sum_{j=1}^{l}\ket{j}\bra{j}_a\otimes U_j.
\label{eq:LCU}
\end{align}
These unitaries each cost $\mathcal{O}(l)$ gates -- $\operatorname{PREP}$ is implemented by an a algorithm for preparing arbitrary  $l$-dimensional quantum states~\cite{shende2006synthesis}, and $\operatorname{SEL}$ is implemented by binary-tree control logic~\cite{Childs2017Speedup}. 
\begin{figure*}[t]
		\begin{tabular}{ccc}
	    \begin{tikzpicture}[thick]
	    %
	    \matrix[row sep=0.2cm, column sep=0.2cm] (circuit) {
	    	\node (q1) {\ketb{0}{a}}; &  \node {\textbackslash}; &
	    	\node[joint] (H11) {}; & 
	    	\coordinate (end1); \\
	    	\node (q2) {\ketb{\psi}{s}}; & \node {\textbackslash}; &
	    	\node[operator] (H21) {$\operatorname{HAM}$}; &
	    	\coordinate (end2);\\
	    };
	    \begin{pgfonlayer}{background}
	    \draw[thick] (H11) -- (H21);
	    \draw[thick] (q1) -- (end1) (q2) -- (end2); 
	    \end{pgfonlayer}
	    \end{tikzpicture}
	    &
	    \begin{tikzpicture}[thick]
	    \matrix[row sep=0.2cm, column sep=0.2cm] (circuit) {
	    	\node (q0) {\ketb{m}{d}}; &\node {\textbackslash}; &
	    	\node[joint] (H01) {T}; & 
	    	\coordinate (end0); \\
	    	\node (q1) {\ketb{0}{a}}; &\node {\textbackslash}; &
	    	\node[joint] (H11) {}; &
	    	\coordinate (end1); \\
	    	\node (q2) {\ketb{\psi}{s}}; &\node {\textbackslash}; &
	    	\node[operator] (H21) {$\operatorname{HAM-T}$}; &
	    	\coordinate (end2);\\
	    };
	    \begin{pgfonlayer}{background}
	    \draw[thick] (H01) -- (H21);
	    \draw[thick] (q0) -- (end0) (q1) -- (end1) (q2) -- (end2); 
	    \end{pgfonlayer}
	    \end{tikzpicture}
	    
	    &
	    \begin{tikzpicture}[thick]
	    \matrix[row sep=0.2cm, column sep=0.2cm] (circuit) {
	    	\node (q1) {\ketb{0}{a}};&\node {\textbackslash}; & \node[operator] (123) {PREP}; & \node[joint] (H11) {}; & \node[operator] (123) {$\operatorname{PREP}^\dag$}; &
	    	\coordinate (end1); \\
	    	\node (q2) {\ketb{\psi}{s}}; &\node {\textbackslash}; & &
	    	\node[operator] (H21) {$\operatorname{SEL}$}; & &
	    	\coordinate (end2);\\
	    };
	    \begin{pgfonlayer}{background}
	    \draw[thick] (H11) -- (H21);
	    \draw[thick]  (q1) -- (end1) (q2) -- (end2); 
	    \end{pgfonlayer}
	    \end{tikzpicture}
 \end{tabular}
	\caption{\label{fig:standard-form}Quantum circuit representation of (left) an oracle $\operatorname{HAM}$ from~\cref{eq:standard-form-TI} encoding a time-independent Hamiltonian, (center) an oracle $\operatorname{HAM-T}$ from~\cref{def:HAM-T} encoding a time-dependent Hamiltonian, and (right) an example implementation of $\operatorname{HAM}$ from with a linear-combination of unitaries from~\cref{eq:LCU}. Bold horizontal lines with a backslash `$\backslash$' depict registers that in general comprise of multiple qubits. Vertical lines connecting boxes depict unitaries that act jointly on all registers covered by the boxes. A small square box marked by `$\operatorname{T}$' indicates control by a time index.}
\end{figure*}
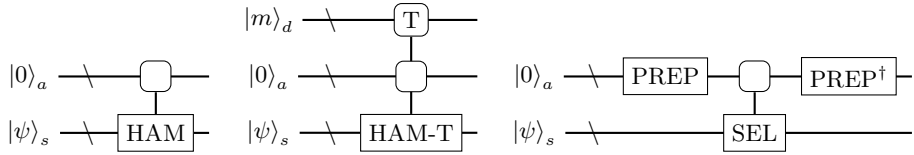

A direct time-dependent generalization of~\cref{eq:standard-form-TI} is the unitary oracle $\operatorname{HAM-T}$ that encodes the Hamiltonian $H(s)$ defined over time $s\in[0,t]$, with $t>0$. The continuous parameter $s$ is discretized into an integer number of $M>0$ time bins which we index by $m\in\{0,1,\cdots, M-1\}$. 
\begin{restatable}[Time-dependent matrix encoding]{definition}{BlockencodingTD}
	\label{def:HAM-T}
	Given a matrix ${H}(s) : [0,t] \rightarrow \mathbb {C}^{2^{n_s}\times 2^{n_s}}$, integer $M>0$, and a promise $\|H\|\le\alpha$, assume there exists a unitary oracle $\operatorname{HAM-T}\in \mathbb {C}^{M2^{n_a+n_s}\times M2^{n_a+n_s}}$ such that
	\begin{align}
	\operatorname{HAM-T} &= 
	\left(\begin{matrix}
	H/\alpha & \cdot \\
	\cdot & \cdot
	\end{matrix}\right),
	\quad H=\operatorname{Diagonal}[H(0),H(t/M),\cdots,H((M-1)t/M)],
	\\\nonumber
	\quad
	&\Rightarrow (\bra{0}_a\otimes \openone_s) \operatorname{HAM-T} (\ket{0}_a\otimes \openone_s) = \sum^{M-1}_{m=0}\ket{m}\bra{m} \otimes \frac{H(mt/M)}{\alpha}.
	\end{align}
\end{restatable}
In later sections where the time-dependent simulation algorithm is applied to the interaction picture, we `open up' this oracle and discuss the gate complexity of its implementation. We assume that the query complexity to a controlled-unitary black-box is the same as that to the original black-box. In general, this will not affect the gate complexity. Though there are often cleverer ways to implement an arbitrary controlled-unitary, in the worst-case, all quantum gates may be replaced by their controlled versions with a constant multiplicative overhead.

\subsection{Truncated Dyson series algorithm}
We now state our main algorithm for general time-dependent simulation.
\begin{restatable}[Hamiltonian simulation by a truncated Dyson series]{theorem}{mainThm}
	\label{Thm:Compressed_TDS}
	Let $H(s) : [0,t] \rightarrow \mathbb{C}^{2^{n_s}\times 2^{n_s}}$, let it be promised that $\max_{s}\|H(s)\|\le\alpha$ and $\langle\|\dot H\|\rangle=\frac{1}{t}\int^t_{0} \left\|\frac{\mathrm{d} H(s)}{ \mathrm{d} s}\right\| \mathrm{d}s$ and assume $M\in{\mathcal{O}}\left( \frac{t^2}{\epsilon}\left({\langle \|\dot{H}\| \rangle}  +{\max_s \|H(s)\|^2}\right)\right)$ in \cref{def:HAM-T}. 
	 For all $t\in[0,\frac{1}{2\alpha}]$ and $\epsilon> 0$, an operation $W$  can be implemented  such that $\left\|W-\mathcal{T}\left[ e^{-i\int_0^t H(s) \mathrm{d} s}\right]\right\| \le \epsilon$ with failure probability $\mathcal{O}(\epsilon)$ with the following costs.
	\begin{enumerate}
		\item Queries to $\operatorname{HAM-T}$: $\mathcal{O}\left(\frac{\log{(1/\epsilon)}}{\log\log{(1 /\epsilon)}}\right)$,
		\item Qubits: $n_s + \mathcal{O}\left(n_a+\log{\left( \frac{t^2}{\epsilon}\left({\langle \|\dot{H}\| \rangle}  +{\max_s \|H(s)\|^2}\right)\right)}\right)$,
		\item Primitive gates: $\mathcal{O}\left(\left(n_a + \log{\left( \frac{t^2}{\epsilon}\left({\langle \|\dot{H}\| \rangle}  +{\max_s \|H(s)\|^2}\right)\right)}\right)\frac{\log{(1/\epsilon)}}{\log\log{(1/\epsilon)}}\right)$.
		\end{enumerate}
\end{restatable}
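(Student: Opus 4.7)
The plan is to approximate $U(t)=\mathcal{T}[e^{-i\int_0^t H(s)\mathrm{d}s}]$ by two successive approximations of the Dyson series in \cref{eq:dyson_series_time_ordered}---a truncation to order $K$ and a discretization of each $k$-fold integral into $M$ uniform time bins---then synthesize the resulting finite sum as a linear combination of unitaries (LCU) driven by the $\operatorname{HAM-T}$ oracle. Because $t\in[0,1/(2\alpha)]$ keeps the LCU normalization below $2$, a single round of oblivious amplitude amplification converts the block-encoded sum into the desired operation $W$.

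For the truncation I would bound $\|D_k\|\le (\alpha t)^k/k!$ directly from \cref{eq:dyson_series_time_ordered} and use $\alpha t\le 1/2$ to make the discarded tail at most $\sum_{k>K}(1/2)^k/k!$; then choosing $K=\mathcal{O}(\log(1/\epsilon)/\log\log(1/\epsilon))$ keeps this under $\epsilon/3$. For the discretization I would replace each $H(t_j)$ in $D_k$ by $H(m_j t/M)$ with $m_j=\lfloor M t_j/t\rfloor$ and decompose the resulting lattice sum into two contributions. On strictly ordered indices $m_1<\cdots<m_k$, the accumulated sampling error per summand is $\mathcal{O}((t/M)\langle\|\dot H\|\rangle)$ per factor and $\alpha^{k-1}$ on the other factors, giving $\mathcal{O}(k\,t^k\alpha^{k-1}(t/M)\langle\|\dot H\|\rangle/k!)$. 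On the coincidence locus where two discrete indices agree---which carries no volume in the continuous ordered simplex but is nonzero in the discrete sum---the spurious excess is $\mathcal{O}(k^2 t^k\alpha^k/(M\,k!))$. Summing over $0\le k\le K$ and invoking $\sum_k (\alpha t)^k/k!\le e^{1/2}$ yields a total discretization error of $\mathcal{O}(t^2(\langle\|\dot H\|\rangle+\alpha^2)/M)$, so the stated choice of $M$ makes it at most $\epsilon/3$. These error bounds are the content of \cref{sec:proof_truncation_discretization_dyson}.

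For the quantum implementation, the discretized $k$-th term is a weighted sum over $k$-tuples $(m_1,\ldots,m_k)$ of products of $k$ block-encoded Hamiltonians, each produced by one query to $\operatorname{HAM-T}$. I would prepare a $\mathcal{O}(\log K)$-qubit order register in the Dyson-coefficient superposition $\propto\sum_k\sqrt{(\alpha t)^k/k!}\,\ket{k}$ and $k$ time-index registers in the uniform superposition over $\{0,\ldots,M-1\}$, and invoke $\operatorname{HAM-T}$ sequentially so that its implicit time register selects the correct Hamiltonian slice at each level. Naively this costs $K\log M$ ancilla qubits for the time indices; the compression gadget deferred to \cref{sec:compresson_gadget} collapses these into a single shared $\log M$-qubit register reused across layers, bringing the qubit count down to the claimed $n_s+\mathcal{O}(n_a+\log M)$. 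With $\alpha t\le 1/2$, the LCU normalization $S=\sum_k(\alpha t)^k/k!\le e^{1/2}<2$, so after a trivial padding of the superposition to make the success amplitude exactly $1/2$, one round of robust oblivious amplitude amplification---which tolerates the $\mathcal{O}(\epsilon)$ block-encoding error---produces $W$ with $\|W-U(t)\|\le\epsilon$ and failure probability $\mathcal{O}(\epsilon)$. Counting $\mathcal{O}(K)$ queries per LCU invocation and $\mathcal{O}(n_a+\log M)$ primitive gates per ancilla operation then yields the three cost bullets.

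The main obstacle is the discretization analysis: the coincidence slices where sampled time indices collide have no analogue in the continuous ordered Dyson integral and must be bounded without spoiling the $1/k!$ suppression that keeps the sum over $k$ convergent. Separating the ordered and degenerate contributions while preserving the factorial denominators---so that the resulting error sum is dominated by its lowest-order terms with the expected $\langle\|\dot H\|\rangle+\alpha^2$ coefficient---is the key technical step. A secondary challenge is verifying the compression gadget: showing that a single time register can be cleanly reused across $K$ layers of $\operatorname{HAM-T}$ without introducing correlations that spoil the LCU coefficients, which is the innovation cited as improving the qubit scaling over the duplicated-register construction of~\cite{berry2015simulating}.
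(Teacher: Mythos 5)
Your overall strategy is the paper's: truncate the Dyson series at $K\in\mathcal{O}(\log(1/\epsilon)/\log\log(1/\epsilon))$ using $\|D_k\|\le(\alpha t)^k/k!$, discretize into $M$ bins with the error split into an ordered ``bulk'' contribution scaling with $\langle\|\dot H\|\rangle$ and a ``coincidence'' contribution scaling with $\alpha^2$ (this is exactly the bulk/boundary decomposition of \cref{Lem:Riemann_of_Dyson}), then realize the finite sum as an LCU over $\operatorname{HAM-T}$ queries with a single round of robust oblivious amplitude amplification at $\beta=2$. The error analysis and the cost accounting both line up with \cref{sec:proof_truncation_discretization_dyson} and \cref{sec:compresson_gadget}.

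There is, however, one genuine gap in the implementation sketch. Your error bound is stated for the strictly ordered sums $B_k=\sum_{0\le m_1<\cdots<m_k<M}H(m_k\Delta)\cdots H(m_1\Delta)$, but your circuit prepares $k$ \emph{uniform} (hence unordered and uncorrelated) time-index superpositions and then asserts that the compression gadget merges them into one register. These are two separate problems that you have conflated. The compression gadget of \cref{Thm:compression_gadget} only handles the coherent counting of successful $\ket{0}_a$ projections across the $K$ non-unitary block-encodings (registers $b$ and $c$ with modular adders); it does nothing to impose the ordering $m_1<\cdots<m_k$. The paper enforces ordering with a single shared time register by interleaving, between consecutive $\operatorname{HAM-T}$ calls, a block-encoding of the strictly upper-triangular matrix $G=\frac{1}{M}\sum_{i<j}\ket{i}\bra{j}_d$ built from a comparator ($\operatorname{LT}$), so that $U^\dag(DG)^{k-1}DU$ telescopes to $B_k/M^k$. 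Without this (or the duplicated-register $\operatorname{PREP}_K$ of \cref{sec:dyson_series_alg_duplicated_registers}, which is precisely the space overhead the theorem avoids), your circuit block-encodes the unordered sums $\tilde B_k/k!$ plus collision terms rather than $B_k$, and the object you amplify is not the one your error analysis controls. Relatedly, your normalization $\sum_k(\alpha t)^k/k!\le e^{1/2}$ presumes the ordered superposition is prepared deterministically; with the comparator construction the $1/k!$ appears as a subnormalization of the block-encoding and the LCU normalization becomes $\beta=\sum_k t^k\le 1/(1-t)$, which is why the paper fixes $t\approx 1/2$ to hit $\beta=2$ exactly. Neither issue is fatal, but the ordering mechanism is a required ingredient, not a detail.
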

The proof and circuit construction of~\cref{Thm:Compressed_TDS} is given in~\cref{sec:compresson_gadget}. This algorithm simulates time-evolution for short durations $|\tau|\le \frac{1}{2\alpha}$, which we call a segment. Thus simulation for longer durations $|t|>\frac{1}{2\alpha}$ require multiple segments that each query a different oracle $\operatorname{HAM-T}_j$ encoding $H(s)$ over a different time domain. The complexity of this multi-segment approach is as follows.
\begin{restatable}[Multi-segment Hamiltonian simulation by a truncated Dyson series]{corollary}{corMain}
	\label{Thm:Compressed_TDS_Uniform}
	Let $H(s) : [0,t] \rightarrow \mathbb{C}^{2^{n_s}\times 2^{n_s}}$, and let it be promised that $\max_{s}\|H(s)\|\le\alpha$, and $\langle\|\dot H\|\rangle=\frac{1}{t}\int^{t}_{0} \left\|\frac{\mathrm{d} H(s)}{ \mathrm{d} s}\right\| \mathrm{d}s$. Let $\tau=t/\left\lceil 2\alpha t\right\rceil$ and assume $H_j(s)=H((j-1)\tau + s): s\in[0,\tau]$ is accessed by an oracle $\operatorname{HAM-T}_j$ of the form specified in~\cref{def:HAM-T} with $M\in{\mathcal{O}}\left( \frac{t}{\alpha\epsilon}\left({\langle \|\dot{H}\| \rangle}  +{\max_s \|H(s)\|^2}\right)\right)$.  For all $|t|\ge 0$ and $\epsilon> 0$, an operation $W$ can be implemented with failure probability at most $\mathcal{O}(\epsilon)$ such that $\left\|W-\mathcal{T}\left[ e^{-i\int_0^t H(s) \mathrm{d} s}\right]\right\| \le \epsilon$ with the following cost.
	\begin{enumerate}
		\item Queries to all $\operatorname{HAM-T}_j$: $\mathcal{O}\left(\alpha t\frac{\log{(\alpha t/\epsilon)}}{\log\log{(\alpha t /\epsilon)}}\right)$,
		\item Qubits: $n_s + \mathcal{O}\left(n_a+\log{\left( \frac{t}{\alpha\epsilon}\left({\langle \|\dot{H}\| \rangle} +{\max_s \|H(s)\|^2}\right)\right)}\right)$,
		\item Primitive gates: $\mathcal{O}\left(\alpha t\left(n_a +\log{\left( \frac{t}{\alpha\epsilon}\left({\langle \|\dot{H}\| \rangle}  +{\max_s \|H(s)\|^2}\right)\right)}\right)\frac{\log{(\alpha t/\epsilon)}}{\log\log{(\alpha t/\epsilon)}}\right)$.
	\end{enumerate}
\end{restatable}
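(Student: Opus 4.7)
The plan is to reduce \cref{Thm:Compressed_TDS_Uniform} directly to \cref{Thm:Compressed_TDS} by partitioning $[0,t]$ into $r:=\lceil 2\alpha t\rceil$ contiguous subintervals of equal length $\tau = t/r\le 1/(2\alpha)$ and concatenating the single-segment algorithm on each piece through its oracle $\operatorname{HAM-T}_j$. The exact time-ordered propagator factorizes as
\begin{align*}
\mathcal{T}\Bigl[e^{-i\int_0^t H(s)\,\mathrm{d}s}\Bigr] = \prod_{j=r}^{1} \mathcal{T}\Bigl[e^{-i\int_{(j-1)\tau}^{j\tau} H(s)\,\mathrm{d}s}\Bigr]
\end{align*}
(product read right-to-left), so approximating each factor by a unitary $W_j$ obtained from \cref{Thm:Compressed_TDS} at per-segment spectral-norm error $\epsilon/r$ and failure probability $\mathcal{O}(\epsilon/r)$ and concatenating the $W_j$ gives, by a standard telescoping argument using sub-multiplicativity of the spectral norm and unitarity of the exact factors, total error $\le r(\epsilon/r)=\epsilon$; a union bound yields total failure probability $\mathcal{O}(\epsilon)$.

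Next I would check that the per-segment discretization $M_{\text{seg}}$ supplied by each $\operatorname{HAM-T}_j$ satisfies the hypothesis of \cref{Thm:Compressed_TDS}. That theorem applied with duration $\tau$ and tolerance $\epsilon/r$ requires
\begin{align*}
M_{\text{seg}} \in \mathcal{O}\!\left(\frac{\tau^2 r}{\epsilon}\bigl(\langle\|\dot H_j\|\rangle_\tau + \max_{s}\|H_j(s)\|^2\bigr)\right),
\end{align*}
and using $\tau r=t$ together with $\tau\le 1/(2\alpha)$, plus the trivial bound $\max_j\langle\|\dot H_j\|\rangle_\tau \le r\langle\|\dot H\|\rangle$ (the rate-of-change can concentrate in at most one subinterval), one obtains $M_{\text{seg}}\in\mathcal{O}\bigl(\tfrac{t}{\alpha\epsilon}(\langle\|\dot H\|\rangle+\max_s\|H(s)\|^2)\bigr)$ up to a polynomial factor in $\alpha t$ that contributes only an additive $\log(\alpha t)$ to $\log M_{\text{seg}}$ in the resource counts, matching the corollary's stated hypothesis on $M$.

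Finally I would aggregate resources. Each of the $r=\Theta(\alpha t)$ segments uses $\mathcal{O}(\log(r/\epsilon)/\log\log(r/\epsilon))$ queries to its oracle and a primitive-gate count scaling linearly with that weighted by $(n_a+\log M_{\text{seg}})$; substituting $r=\Theta(\alpha t)$ gives the claimed query bound $\mathcal{O}\bigl(\alpha t\log(\alpha t/\epsilon)/\log\log(\alpha t/\epsilon)\bigr)$ and primitive-gate bound. The qubit count does not accumulate across segments because the ancillae used by the single-segment algorithm are heralded back to $\ket{0}$ on success (by the LCU/block-encoding structure underlying \cref{Thm:Compressed_TDS}) and are reused for the next segment, so the space requirement is just that of one segment with $M_{\text{seg}}$ as above. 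The main technical nuisance I anticipate is error-budget bookkeeping: propagating $\epsilon/r$ and $\mathcal{O}(\epsilon/r)$ through the triangle inequality and union bound, and confirming that the $r$-fold concentration of $\langle\|\dot H_j\|\rangle_\tau$ relative to $\langle\|\dot H\|\rangle$ is harmless. Both concerns are benign because only logarithms of these quantities appear in the qubit and gate bounds, so the $r$-fold gap is absorbed into an additive $\log(\alpha t)$ without changing the scaling.
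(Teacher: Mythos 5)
Your proposal is correct and follows essentially the same route as the paper's proof: split $[0,t]$ into $L=\Theta(\alpha t)$ segments of duration $\tau\le 1/(2\alpha)$, apply \cref{Thm:Compressed_TDS} to each segment with error budget $\epsilon/L$, and combine via the telescoping product bound (using $\|W_j\|\le 1$ and unitarity of the exact factors) together with a union bound on the failure probability. You are in fact slightly more careful than the paper in noting that the per-segment average $\langle\|\dot H_j\|\rangle_\tau$ can exceed the global $\langle\|\dot H\|\rangle$ by a factor of $L$ when the rate of change concentrates in one subinterval, and correctly observing that this only enters the qubit and gate counts through $\log M$ and is absorbed as an additive $\log(\alpha t)$.
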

\begin{proof}
	The time-ordered evolution operator $\mathcal{T}\left[ e^{-i\int_0^t H(s) \mathrm{d} s}\right]=\prod^L_{j=1}\mathcal{T}\left[ e^{-i\int_{t_{j-1}}^{t_j} H(s) \mathrm{d} s}\right]$ may be broken into $L=\mathcal{O}(\alpha t)$ segments, where $[0,t]=\cup^L_{j=1}[t_{j-1},t_j]$ and $0=t_0<t_1<\cdots<t_L=t$ and $t_{j}-t_{j-1}\in \mathcal{O}(1/\alpha)$. Each segment is then simulated using~\cref{Thm:Compressed_TDS} to error $\delta$. From the proof of~\cref{Thm:Compressed_TDS} in~\cref{sec:compresson_gadget}, each segment is a unitary quantum circuit $V_j$ such that the (near-unitary) operations $W_j=(\bra{0}\otimes \openone_{s}) V_j (\ket{0}\otimes \openone_{s})$ where
	\begin{align}
	\left\|(\bra{0}\otimes \openone_{s}) V_j (\ket{0}\otimes \openone_{s})-\mathcal{T}\left[ e^{-i\int_{t_{j-1}}^{t_j} H(s) \mathrm{d} s}\right]\right\| \le \delta.
	\end{align}
	To obtain the error of applying these $W_j$ in sequence, note that in general if $A_j$ and $B_j$ are a sequence of arbitrary bounded operators and $\|\cdot \|$ is a sub-multiplicative norm then it is straightforward to show using an inductive argument and a triangle inequality that for all positive integer $L$,
	\begin{align}
	\label{eq:product_norm_error}
	\left\|\prod_{j=1}^L A_j - \prod_{j=1}^L B_j\right\|\le \sum_{k=1}^L \left( \prod_{j=1}^{k-1} \left\|A_j \right\|\right)\left\| A_k-B_k\right\|\left( \prod_{j=k+1}^L \left\|B_j\right\|\right).
	\end{align}
	Let us simply notation by setting $A_j=W_j$ and $B_j=\mathcal{T}\big[ e^{-i\int_{t_{j-1}}^{t_j} H(s) \mathrm{d} s}\big]$. Using the fact that $\|W_j\|\le 1$ and $\|U_j\|=1$,~\cref{eq:product_norm_error} yields
	\begin{align}
	\left\|\prod^L_jW_j-\mathcal{T}\left[ e^{-i\int_{0}^{t} H(s) \mathrm{d} s}\right]\right\|\le L\delta.
	\end{align}  
	As $W_j$ is obtained by applying $V_j$ on the $\ket{0}$ state in the ancilla register followed by a projection back onto $\bra{0}$ state, let $\Pi=(\openone_\cdot-\ket{0}\bra{0}_\cdot)\otimes \openone_s$ be this projector. Then
	\begin{align}
	\left\|(\openone-\Pi) V_j(\ket{0}\otimes \openone_s)\right\|\le\sqrt{1-(1-\delta)^2}=\sqrt{2\delta-\delta^2}\le\sqrt{2\delta}.
	\end{align}  
	Thus the failure probability of projecting onto $\bra{0}$ is $\le 2\delta$. With $L$ repetitions, this failure probability is $\le 1-(1-\delta)^L\in \mathcal{O}(L\delta)$. Thus we identify $W=W_L\cdots W_1$ and choose choose $\delta=\epsilon/L$ to ensure that error of $W$ and the failure probability of its application is at most $\mathcal{O}(\epsilon)$.  The result then follows from the fact that $L \in \Theta(\alpha t)$.
\end{proof}

\subsection{Discretization and truncation error}
Note that~\cref{Thm:Compressed_TDS} and~\cref{Thm:Compressed_TDS_Uniform} all make a particular choice of $M$, which controls the number of points at which $H(s)$ is evaluated in the oracle of~\cref{def:HAM-T}. This is determined precisely by the error incurred in truncating the Dyson series at some finite order $k=K\ge0$, and evaluating $H(t_j)$ at some finite number of $M$ time-steps of size $\Delta=t/M$. Thus we have the approximation
\begin{align}
\mathcal{T}\left[e^{-i\int_0^t H(s) \mathrm{d}s}\right]
\approx 
\sum^K_{k=0} (-i)^k D_k
\approx
\sum^\infty_{k=0} \frac{(-it)^k }{k! M^k}\tilde{B}_k,
\quad
\tilde{B}_k=\sum_{m_1, \cdots, m_k=0}^{M-1}\mathcal{T}\left[H\left({m_k \Delta}\right)\cdots H\left({m_1 \Delta}\right)\right],
\end{align}
which converges to $U(t)$ in the limit $K,M\rightarrow \infty$ if $H(t)$ is Riemann integrable.
The time-ordering operator in $\tilde{B}_k$ may be removed with a slightly different approximation
\begin{align}
\tilde{B}_k= k! B_k+C_k,
\quad 
B_k=\sum_{0\le m_1 <\cdots < m_k < M}H\left({m_k \Delta}\right)\cdots H\left({m_1 \Delta}\right),
\end{align}
where $C_k$ captures terms where at least one pair of indices $m_j=m_{k}$ collide for $j\neq k$. Given a target error $\epsilon$ and failure probability $\mathcal{O}(\epsilon)$, the required $K$ and $M$ are given by the following.
\begin{restatable}[Error from truncating and discretizing the Dyson series]{lemma}{truncatedDysonError}
\label{Thm:Truncated_Dyson_Algorithm}
Let $H(s): [0,t] \mapsto \mathbb{C}^{N\times N}$ be differentiable and  $\langle\|\dot{H}\|\rangle:=\frac{1}{t}\int^t_{0} \left\|\frac{\mathrm{d} H(s)}{ \mathrm{d} s}\right\| \mathrm{d}s$.  For any $\epsilon\in (0,2^{1-e}]$, an approximation to the time ordered operator exponential of $-iH(s)$ can be constructed such that
$$
\left\|\mathcal{T}\left[e^{-i\int_0^t H(s) \mathrm{d}s}\right]- \sum^K_{k=0} {(-i t/M)^k} B_k\right\|\le \epsilon, \quad B_k=\sum_{0\le m_1 <\cdots < m_k < M}H\left({m_k t/M}\right)\cdots H\left({m_1 t/M}\right),
$$
if we take all the following are true.
\begin{enumerate}
	\item $\max_s\|H(s)\|t\le \ln 2$.
	\item $K= \left\lceil-1+\frac{2\ln(2/\epsilon)}{\ln\ln(2/\epsilon) +1}\right\rceil$.
	\item $M \ge  \max\left\{\frac{16 t^2}{\epsilon} \left(\langle \|\dot{H}\| \rangle  +\max_s \|H(s)\|^2\right),K^2\right\}$.
\end{enumerate}
\end{restatable}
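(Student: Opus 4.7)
My plan is to bound $\|U(t)-\sum_{k=0}^K(-it/M)^k B_k\|$ by three independently sized pieces, inserted via triangle inequality: (i) the cost of replacing $H(s)$ by its piecewise-constant restriction $\tilde H(s):=H(\Delta\lfloor s/\Delta\rfloor)$ with $\Delta:=t/M$; (ii) truncating the Dyson series for $\tilde H$ at order $K$; and (iii) rewriting each truncated Dyson integral $\tilde D_k$ as the strictly-ordered discrete sum $\Delta^k B_k$, absorbing the coincident-index terms into an error. Let $\alpha:=\max_s\|H(s)\|$; I will use $\alpha t\le\ln 2$ throughout.

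For (i), I would use Duhamel's identity $\tilde U(t)-U(t)=-i\int_0^t \tilde U(t,s)(\tilde H(s)-H(s))U(s,0)\,\mathrm ds$, so that $\|U(t)-\tilde U(t)\|\le\int_0^t\|H(s)-\tilde H(s)\|\,\mathrm ds$. The bin-wise estimate $\|H(s)-H(m\Delta)\|\le\int_{m\Delta}^s\|\dot H(u)\|\,\mathrm du$ summed over bins gives $\|U(t)-\tilde U(t)\|\le\Delta\int_0^t\|\dot H(u)\|\,\mathrm du=t^2\langle\|\dot H\|\rangle/M$. For (ii), time-ordering plus $\|\tilde H(s)\|\le\alpha$ yields $\|\tilde D_k\|\le(\alpha t)^k/k!$, so $\sum_{k>K}\|\tilde D_k\|\le\sum_{k>K}(\ln 2)^k/k!\le 2(\ln 2)^{K+1}/(K+1)!$ by the ratio test. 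I would then verify, via Stirling's lower bound $(K+1)!\ge((K+1)/e)^{K+1}$ and some elementary manipulation, that the prescribed $K=\lceil -1+2\ln(2/\epsilon)/(\ln\ln(2/\epsilon)+1)\rceil$ forces this tail below $\epsilon/2$; the hypothesis $\epsilon\le 2^{1-e}$ ensures $\ln\ln(2/\epsilon)\ge 1$ so the implicit inversion is meaningful. (The same truncation bound also controls $\|U(t)-\sum_{k=0}^K(-i)^k D_k\|$ directly, so the pieces combine without double counting.)

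For (iii), I would partition the ordered simplex $\{0\le t_1\le\cdots\le t_k\le t\}$ by which time bins its coordinates fall into. On the cells where $t_j\in[m_j\Delta,(m_j+1)\Delta)$ with \emph{strictly} increasing indices $m_1<m_2<\cdots<m_k$, the integrand $\tilde H(t_k)\cdots\tilde H(t_1)$ equals $H(m_k\Delta)\cdots H(m_1\Delta)$ and the cell volume is $\Delta^k$, so summing gives exactly $\Delta^k B_k$. The complementary "collision" region has total volume $(t^k/k!)\bigl(1-\prod_{j=1}^{k-1}(1-j/M)\bigr)\le t^k k(k-1)/(2M\,k!)$ by the elementary inequality $1-\prod_j(1-x_j)\le\sum_j x_j$ for $x_j\in[0,1]$, where the auxiliary constraint $M\ge K^2$ guarantees $j/M\le 1$ throughout the range considered. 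Bounding the integrand by $\alpha^k$ and summing over $k=2,\ldots,K$ yields
\begin{equation}
\Big\|\sum_{k=0}^K(-i)^k\tilde D_k-\sum_{k=0}^K(-i\Delta)^k B_k\Big\|\le\sum_{k=2}^K\frac{(\alpha t)^k k(k-1)}{2M\,k!}\le\frac{(\alpha t)^2 e^{\alpha t}}{2M}\le\frac{(\alpha t)^2}{M}.
\end{equation}

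Assembling the three contributions gives a total error bounded by $(t^2/M)\bigl(\langle\|\dot H\|\rangle+\alpha^2\bigr)+\epsilon/2$, which is at most $\epsilon$ under the stated $M\ge 16 t^2(\langle\|\dot H\|\rangle+\max_s\|H(s)\|^2)/\epsilon$—the generous constant $16$ absorbs the slack from the several $\le\ln 2$ and $\le 2$ bounds used above. The principal obstacle I anticipate is the Stirling-type inversion in step (ii): quantitatively pinning down that the superlogarithmic formula for $K$ achieves the $\epsilon/2$ target is the only step that is more than routine; steps (i) and (iii) are clean calculus and volume counting.
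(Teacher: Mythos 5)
Your proposal is correct, and the discretization half of it takes a genuinely different route from the paper. The paper bounds $\bigl\|\sum_{k\le K}(-i)^kD_k-\sum_{k\le K}(-i\Delta)^kB_k\bigr\|$ directly by a hypercube-by-hypercube analysis: a ``bulk'' term, in which the Riemann-corner approximation error on each strictly-ordered cell is controlled via $\|H(s+\delta)-H(s)\|\le\int_0^\delta\|\dot H\|$, and a ``boundary'' term counting cells with coincident bin indices via binomial-coefficient estimates. You instead interpose the piecewise-constant Hamiltonian $\tilde H(s)=H(\Delta\lfloor s/\Delta\rfloor)$ and pay for the replacement once, at the level of the full propagator, through Duhamel's identity, which yields the clean bound $\Delta\int_0^t\|\dot H\|=t^2\langle\|\dot H\|\rangle/M$; after that, the truncated Dyson integrals for $\tilde H$ equal $\Delta^kB_k$ \emph{exactly} on the strictly-ordered cells, and only the collision volume $\frac{t^k}{k!}\bigl(1-\prod_{j<k}(1-j/M)\bigr)\le\frac{t^kk(k-1)}{2Mk!}$ remains. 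This buys slightly sharper constants (no $e^{\alpha t}$ prefactor on the $\langle\|\dot H\|\rangle$ term) and a more transparent separation of the two error sources, at the cost of invoking the variation-of-parameters formula for propagators of merely piecewise-continuous generators (harmless here). The truncation step is essentially identical to the paper's Lemma on truncation error — same tail bound $\mathcal{O}\bigl((\alpha t)^{K+1}/(K+1)!\bigr)$, same Stirling/Lambert-$W$ inversion to confirm the stated $K$ with $\epsilon_1=\epsilon/2$ — and this is indeed, as you flag, the only step requiring a careful quantitative check; the prescribed $K$ matches the paper's formula under the substitution $\epsilon_1=\epsilon/2$, and the hypothesis $\epsilon\le2^{1-e}$ is exactly what makes $\ln\ln(2/\epsilon)\ge1$. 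Your final assembly $\frac{t^2}{M}(\langle\|\dot H\|\rangle+\alpha^2)+\frac{\epsilon}{2}\le\frac{\epsilon}{16}+\frac{\epsilon}{2}\le\epsilon$ under the stated $M$ is sound, and you use $M\ge K^2$ only to keep $j/M\le1$ in the product bound, which is weaker than what the hypothesis grants.
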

A detailed proof is presented in~\cref{sec:proof_truncation_discretization_dyson}. On a quantum computer, it is possible to compute the $B_k$ exactly and efficiently even if they sum over exponentially many points $M$. In contrast, computing these Riemann sums on classical computer would be prohibitive, even by approximate Monte-Carlo sampling, which is exacerbated by the sign problem. However, this efficient quantum computation crucially assumes that information describing the Hamiltonian at different times are made accessible in a certain coherent manner -- in our case, this is information accessed through black-box unitary oracles described by~\cref{def:HAM-T}.

\section{Interaction picture simulation}
\label{sec:interaction_picture}
Time-independent Hamiltonians $H$ become time-dependent $H_I(t)$ in the interaction picture. Simulating this requires the use of time-dependent Hamiltonian simulation algorithms, which scale with parameters of $H_I(t)$ that differ from those for the time-independent case. For certain broad classes of Hamiltonian identified in~\cref{sec:implementation_HAM_T}, these different dependencies allow us to improve the gate complexity of approximating the time-evolution operator $e^{-iHt}$ by instead performing simulation in the interaction picture using the truncated Dyson series algorithm~\cref{Thm:Compressed_TDS}.

The interaction picture can be viewed as an intermediate between the Schr\"odinger and Heisenberg pictures wherein some of the dynamics is absorbed into the state and the remainder is absorbed into the dynamics of the operators. If the Hamiltonian in the Schr\"odinger picture $H=A+B$ generates time-evolution like $\ket{\psi(t)} =e^{-iHt} \ket{\psi(0)}$, then the Hamiltonian in the interaction picture is $H_I(t) = e^{iAt}Be^{-iAt}$ and $i\partial_{t} \ket{\psi_I(t)} =H_I(t) \ket{\psi_I(t)}$ with $\ket{\psi_I(t)}=e^{i A t}\ket{\psi(t)}$ for all $t$.  These relations can easily be seen by substituting into the Schr\"odinger equation:
\begin{align}
i\partial_t \ket{\psi_I(t)} &=i\partial_t\left( e^{iA t} \ket{\psi(t)}\right)
=e^{iA t}(-A + H) \ket{\psi(t)}=e^{iA t}Be^{-iA t} e^{iA t} \ket{\psi(t)}\nonumber\\
&= H_{I}(t) \ket{\psi_I(t)}.
\end{align}
Note that if we started with time-dependent $B(t)$, that is $H(t)= A+B(t)$, the interaction picture Hamiltonian is $H_I(t) = e^{iAt}B(t)e^{-iAt}$. Our following results generalize easily to this situation, and so we consider time-independent $B$ for simplicity.

The advantage of this representation is a Hamiltonian with a smaller norm $\|H_I(t)\|=\|B\| \le \|A\|+\|B\|$, but at the price of introducing time-dependence. The following notation is commonly used to express this time-evolution operator $\mathcal{T}\left[e^{-i\int_0^t H(s) \mathrm{d} s}\right]= \lim_{r\rightarrow \infty} \prod_{j=1}^r e^{-i H(jt/r) t/r}$ where this product is implicitly defined to be time ordered. Given an initial state $\ket{\psi(0)}$, the state after evolution for $t>0$ in the Schr\"{o}dinger picture may thus be written as
\begin{align}
\label{eq:Interaction_picture_state_evolutionA}
\ket{\psi(t)} &= e^{-iAt}\ket{\psi_I(t)}=e^{-iAt}\mathcal{T}\left[e^{-i\int_0^t H_I(s) \mathrm{d} s}\right]\ket{\psi_I(0)}
=e^{-iAt}\mathcal{T}\left[ e^{-i\int_0^t H_I(s) \mathrm{d} s}\right]\ket{\psi(0)}
=e^{-i(A+B)t}\ket{\psi(0)}.
\end{align}
As this is true for any $t$, evolution by the full duration is identical to evolution by $L$ shorter segments of duration $\tau= t/L$, such as
\begin{align}
\label{eq:Interaction_picture_state_evolution}
\ket{\psi(t)} &= (e^{-i(A+B)\tau})^L\ket{\psi(0)}
=\left(e^{-iA\tau}\mathcal{T}\left[ e^{-i\int_0^\tau H_I(s) \mathrm{d} s}\right]\right)^L\ket{\psi(0)}.
\end{align}

Using the simulation algorithm~\cref{Thm:Compressed_TDS} to simulate each segment in~\cref{eq:Interaction_picture_state_evolution} leads to the following result.
\begin{lemma}[Query complexity of Hamiltonian simulation in the interaction picture]
	\label{thm:int_pic_sim_query}
Let $A\in\mathbb{C}^{2^{n_s}\times 2^{n_s}}$, $B\in\mathbb{C}^{2^{n_s}\times 2^{n_s}}$, let $\alpha_A$ and $\alpha_B$ be known constants such that $\|A\|\le \alpha_A$ and $\|B\|\le \alpha_B$.  Assume the existence of a unitary oracle that implements the Hamiltonian within the interaction picture, denoted $\operatorname{HAM-T}\in \mathbb{C}^{2^{n_s+n_a}\times 2^{n_s+n_a}}$ which implicitly depends on the time-step size $\tau\in\mathcal{O}(\alpha_B^{-1})$ and number of time-steps $M\in{\mathcal{O}}\left( \frac{t}{\epsilon}\left(\alpha_A+\alpha_B\right)\right)$,  such that
\begin{align}
\label{eq:int_pic_sim_Ham-T}
(\bra{0}_a\otimes \openone_s) \operatorname{HAM-T} (\ket{0}_a\otimes \openone_s) &= \sum^{M-1}_{m=0}\ket{m}\bra{m} \otimes \frac{e^{iA\tau m/M}Be^{-iA\tau m/M}}{\alpha_B},
\end{align}
For all $t\ge 2\alpha_B\tau$, the time-evolution operator $e^{-i(A+B)t}$ may be approximated to error $\epsilon$ with the following cost.
\begin{enumerate}
	\item Simulations of $e^{-iA\tau}$: $\mathcal{O}(\alpha_B t)$, 
	\item Queries to $\operatorname{HAM-T}$: $\mathcal{O}\left(\alpha_B t\frac{\log{(\alpha_B t/\epsilon)}}{\log\log{(\alpha_B t /\epsilon)}}\right)$,
	\item Qubits: $n_s + \mathcal{O}\left(n_a+\log{\left( \frac{t}{\epsilon}\left(\alpha_A+\alpha_B\right)\right)}\right)$,
	\item Primitive gates: $\mathcal{O}\left(\alpha_B t\left(n_a + \log{\left( \frac{t}{\epsilon}\left(\alpha_A+\alpha_B\right)\right)}\right)\frac{\log{(\alpha_B t/\epsilon)}}{\log\log{(\alpha_B t/\epsilon)}}\right)$.
\end{enumerate}
\end{lemma}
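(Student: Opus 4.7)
The plan is to apply the segmentation identity~\cref{eq:Interaction_picture_state_evolution},
\begin{equation}
e^{-i(A+B)t} = \Bigl(e^{-iA\tau}\,\mathcal{T}\bigl[e^{-i\int_0^{\tau} H_I(s)\,\mathrm{d}s}\bigr]\Bigr)^{L},
\end{equation}
with $L = \Theta(\alpha_B t)$ segments of duration $\tau \in \mathcal{O}(\alpha_B^{-1})$, and to synthesize each time-ordered factor by invoking the truncated Dyson series simulator~\cref{Thm:Compressed_TDS} on $H_I(s) = e^{iAs}Be^{-iAs}$ restricted to $[0,\tau]$, while counting the $L$ extra rotations $e^{-iA\tau}$ as a separate resource. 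The oracle \cref{eq:int_pic_sim_Ham-T} is exactly the $\operatorname{HAM-T}$ object required by~\cref{Thm:Compressed_TDS} for a single segment, so the same oracle is reused across all segments.

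To invoke~\cref{Thm:Compressed_TDS} inside a single segment I first verify its hypotheses for $H_I$. The uniform norm bound $\|H_I(s)\| = \|B\| \le \alpha_B$ lets me take $\alpha = \alpha_B$, and the short-time constraint $\alpha\tau \le 1/2$ holds by the definition of $\tau$. Differentiating, $\dot{H}_I(s) = i\,e^{iAs}[A,B]e^{-iAs}$, so $\|\dot{H}_I(s)\| \le \|[A,B]\| \le 2\alpha_A\alpha_B$ and hence $\langle\|\dot{H}_I\|\rangle + \max_s\|H_I(s)\|^2 \le 2\alpha_A\alpha_B + \alpha_B^2$. Substituting $t\to\tau$ and per-segment target error $\delta$ into the $M$-bound of~\cref{Thm:Compressed_TDS} gives
\begin{equation}
M \in \mathcal{O}\Bigl(\tfrac{\tau^2}{\delta}\bigl(\alpha_A\alpha_B+\alpha_B^2\bigr)\Bigr) \subseteq \mathcal{O}\Bigl(\tfrac{\alpha_A+\alpha_B}{\alpha_B\,\delta}\Bigr),
\end{equation}
which, with $\delta = \Theta(\epsilon/L) = \Theta(\epsilon/(\alpha_B t))$, collapses to $M \in \mathcal{O}(t(\alpha_A+\alpha_B)/\epsilon)$, matching the hypothesis on the oracle.

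Error composition across segments is then handled exactly as in the proof of~\cref{Thm:Compressed_TDS_Uniform}. Each segment yields a near-unitary $W_j$ with $\|W_j - \mathcal{T}[e^{-i\int_0^\tau H_I(s)\,\mathrm{d}s}]\| \le \delta$; the sub-multiplicative product bound~\cref{eq:product_norm_error}, with the exact insertions of $e^{-iA\tau}$ treated as unitaries of norm one, yields total spectral-norm error $\mathcal{O}(L\delta) = \mathcal{O}(\epsilon)$, and the per-segment projection failure amplitude $\mathcal{O}(\sqrt{\delta})$ accumulates to total failure probability $\mathcal{O}(L\delta) = \mathcal{O}(\epsilon)$. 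Multiplying the per-segment costs of~\cref{Thm:Compressed_TDS} (with $1/\delta = \Theta(\alpha_B t/\epsilon)$) by $L = \Theta(\alpha_B t)$ reproduces the four advertised counts: $\mathcal{O}(\alpha_B t)$ applications of $e^{-iA\tau}$, $\mathcal{O}(\alpha_B t\,\log(\alpha_B t/\epsilon)/\log\log(\alpha_B t/\epsilon))$ queries to $\operatorname{HAM-T}$, the stated qubit count (dominated by the per-segment ancilla plus a $\log M$ time register), and the stated primitive-gate count obtained by multiplying the per-query primitive cost of~\cref{Thm:Compressed_TDS} by the total number of queries.

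The step to handle with care, and the only place where anything interesting happens, is the derivative estimate $\|\dot{H}_I\| \le \|[A,B]\| \le 2\alpha_A\alpha_B$: because $\alpha_A$ enters the complexity only through the discretization parameter $M$, and $M$ itself appears only logarithmically in the qubit and gate counts, the entire $\alpha_A$ dependence is pushed inside a polylogarithm, leaving the query scaling linear in $\alpha_B t$ rather than in $(\alpha_A+\alpha_B)t$ as a direct time-independent simulation would incur. Everything else is routine bookkeeping inherited from~\cref{Thm:Compressed_TDS} and the telescoping argument of~\cref{Thm:Compressed_TDS_Uniform}.
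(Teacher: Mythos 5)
Your proposal is correct and follows essentially the same route as the paper's own proof: segment into $L=\Theta(\alpha_B t)$ pieces of duration $\tau\in\mathcal{O}(\alpha_B^{-1})$, invoke \cref{Thm:Compressed_TDS} per segment with per-segment error $\delta=\Theta(\epsilon/L)$, bound $\langle\|\dot{H}_I\|\rangle$ via $\|[A,B]\|\le 2\alpha_A\alpha_B$ to get the stated $M$, and compose errors with the telescoping product bound. Your verification of the hypotheses is if anything slightly more explicit than the paper's.
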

\begin{proof}
	According to~\cref{Thm:Compressed_TDS_Uniform}, the maximum time duration of simulation $\tau$ in each segment of~\cref{eq:Interaction_picture_state_evolution} is limited to $\tau\in{\mathcal{O}}(\alpha_B^{-1})$. Thus there are $L\in{\mathcal{O}}(\alpha_B t)$ segments. Each segment also requires one application of $e^{-iA\tau}$, thus the query complexity to $e^{-iA\tau}$ is $L$. Note that for all $t\le \frac{1}{2\alpha_B}$, only one application of $e^{-iAt}$ is required.	Each segment also requires one application of $\mathcal{T}\left[ e^{-i\int_0^\tau H_I(s) \mathrm{d} s}\right]$, which we approximate with $\operatorname{TDS}$ from~\cref{Thm:Compressed_TDS}. 
	By a triangle inequality, it suffices to simulate each segment with error $\delta=\mathcal{O}(\epsilon /L)$ in order to obtain a total error $\epsilon$. Simulating each segment makes $\mathcal{O}\left(\frac{\log{(\alpha_B t/\epsilon)}}{\log\log{(\alpha_B t /\epsilon)}}\right)$ queries to $\operatorname{HAM-T}$. Thus simulation for the full duration makes $\mathcal{O}\left(\alpha_B t\frac{\log{(\alpha_B t/\epsilon)}}{\log\log{(\alpha_B t /\epsilon)}}\right)$ queries to $\operatorname{HAM-T}$. 
	
	The number of qubits, primitive gates, and discretization points $M$ required are obtained directly from the conditions of~\cref{Thm:Compressed_TDS_Uniform}. We obtain the stated $M$ by substituting the facts $\max_s \|H_I(s)\|\le \|B\|$, $\langle\|\dot H\|\rangle=\|[A,B]\|\le 2\|A\|\|B\|$, and $\tau\in{\mathcal{O}}(\alpha_B^{-1})$. Thus it suffices to choose
\begin{align}
\label{eq:int_pic_sim_M_points}
M\in{\mathcal{O}}\left( \frac{\alpha_B t\tau^2}{\epsilon}\left({\langle \|\dot{H}\| \rangle}  +{\max_s \|H(s)\|^2}\right)\right)
\subseteq{\mathcal{O}}\left( \frac{ t}{\epsilon}\left(\frac{\|A\|\|B\|}{\alpha_B} + \alpha_B\right)\right)\subseteq {\mathcal{O}}\left( \frac{ t}{\epsilon}\left(\alpha_A + \alpha_B\right)\right).
\end{align}

\end{proof}

\subsection{Comparison with simulation of time-independent Hamiltonians in the Schr\"odinger picture}
\label{sec:implementation_HAM_T}
We now compare the cost of simulation in the interaction picture using the truncated Dyson series, with state-of-art simulation in the Schr\"odinger picture with time-independent Hamiltonians using the truncated Taylor series approach~\cite{berry2015simulating} outlined in~\cref{sec:truncated_taylor_algorithm}. Up to logarithmic factors, this comparison is valid as the truncated Taylor series algorithm cost differs from optimal algorithms~\cite{Low2016HamSim,Low2016qubitization} by only logarithmic factors. For any Hamiltonian $H=A+B$, let us assume access to the oracles
\begin{align}
\label{eq:AB_standard_form_encoding}
(\bra{0}_a\otimes \openone_s) O_A (\ket{0}_a\otimes \openone_s) = \frac{A}{\alpha_A},
\quad
(\bra{0}_a\otimes \openone_s) O_B (\ket{0}_a\otimes \openone_s) = \frac{B}{\alpha_B},
\end{align}
which have gate complexity $C_A,C_B$ respectively, and encode $A,B$ using $n_a$ additional qubits. Note that in every case where $O_A$ and $O_B$ act non-trivially on $\ket{0}_a$, the cost $C_A,C_B \ge n_a$. The gate complexity of time-independent simulation $e^{-i(A+B)t}$ by prior art is then
\begin{align}
\label{eq:sch_pic_sim_cost}
C_{\text{TTS}}&\in{\mathcal{O}}\left((C_A+C_B)(\alpha_A+\alpha_B)t\frac{\log{((\alpha_A+\alpha_B)t/\epsilon)}}{\log\log{((\alpha_A+\alpha_B)t /\epsilon)}}\right)
\end{align}
In contrast, we prove the following theorem for simulation in the interaction picture.
\begin{theorem}[Gate complexity of Hamiltonian simulation in the interaction picture]
	\label{thm:int_pic_sim}
	Let $A\in\mathbb{C}^{2^{n_s}\times 2^{n_s}}$, $B\in\mathbb{C}^{2^{n_s}\times 2^{n_s}}$ be time-independent Hamiltonians that are promised to obey $\|A\|\le \alpha$, $\|B\|\le \alpha_B$, such that
	\begin{enumerate}
		\item $B$ is encoded in an oracle $(\bra{0}_a\otimes \openone_s) O_B (\ket{0}_a\otimes \openone_s) = \frac{B}{\alpha_B}$ using $n_a$ additional qubits and $C_B\ge n_a$ gates.
		\item $e^{-iAs}$ is approximated to error $\epsilon$	
		using $C_{e^{-iAs}}[\epsilon]\in{\mathcal{O}}\left(|s| {\rm log}^{\gamma}(s/\epsilon)\right)$ gates for some $\gamma>0$ and any $|s|\ge0$.
	\end{enumerate}
	For all $t>0$, the time-evolution operator $e^{-i(A+B)t}$ may be approximated to error $\epsilon$ with  gate complexity
\begin{align}
\label{eq:int_pic_sim_cost}
C_{\text{TDS}}&\in{\mathcal{O}}\left(
\alpha_B t
\left(C_B+C_{e^{-iA/\alpha_B}}\left[\frac{\epsilon}{\alpha_B t\log{(\alpha_B)}}\right]\log{\left(\frac{t(\alpha_A+\alpha_B)}{\epsilon}\right)}
\right)\frac{\log{(\alpha_B t/\epsilon)}}{\log\log{(\alpha_B t 	/\epsilon)}}
\right)
\\\nonumber
&={\mathcal{O}}\left(
\alpha_B t
\left(C_B+C_{e^{-iA/\alpha_B}}\left[\epsilon\right]\right)
\operatorname{polylog}(t(\alpha_A+\alpha_B)/\epsilon)
\right).
\end{align}
\end{theorem}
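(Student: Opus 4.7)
The query bound of~\cref{thm:int_pic_sim_query} already delivers the leading $\alpha_B t$ scaling, so the task reduces to (i) showing that $\operatorname{HAM-T}$ of~\cref{eq:int_pic_sim_Ham-T} admits an implementation whose dominant cost is one call to $O_B$ plus a controlled free evolution by $A$, and (ii) balancing the error budget so that the resulting gate count fits into~\cref{eq:int_pic_sim_cost}.

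First I would decompose
\begin{equation*}
\operatorname{HAM-T}=U_A^\dagger\cdot(\openone_d\otimes O_B)\cdot U_A,\qquad U_A=\sum_{m=0}^{M-1}\ketbra{m}{m}_d\otimes e^{-iA\tau m/M},
\end{equation*}
where $U_A$ acts only on the time-index register $d$ and the system $s$, and $O_B$ acts only on the ancilla $a$ and the system $s$, so the two commute outside the targeted $B$-action and this product reproduces $e^{iA\tau m/M}(B/\alpha_B)e^{-iA\tau m/M}$ when projected onto $\bra{0}_a$. Writing the index in binary as $m=\sum_{k=0}^{\lceil\log_2 M\rceil-1}m_k 2^k$, the operator $U_A$ becomes a product of $\lceil\log_2 M\rceil$ singly-bit-controlled exponentials $e^{-iA\tau 2^k/M}$ with durations $s_k=\tau 2^k/M$. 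Under the hypothesis $C_{e^{-iAs}}[\eta]\in\mathcal{O}(|s|\log^\gamma(s/\eta))$, the sum $\sum_k C_{e^{-iAs_k}}[\eta]$ is a geometric series in $|s_k|$ whose linear parts sum to at most $\tau$, so the total is bounded by $\mathcal{O}(C_{e^{-iA\tau}}[\eta])=\mathcal{O}(C_{e^{-iA/\alpha_B}}[\eta])$ since $\tau\in\mathcal{O}(\alpha_B^{-1})$.

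Second I would apportion the total error $\epsilon$. Let $Q=\mathcal{O}(\alpha_B t\,\log(\alpha_B t/\epsilon)/\log\log(\alpha_B t/\epsilon))$ be the number of $\operatorname{HAM-T}$ queries from~\cref{thm:int_pic_sim_query} and $L=\mathcal{O}(\alpha_B t)$ the number of outer free-evolution segments $e^{-iA\tau}$ appearing in~\cref{eq:Interaction_picture_state_evolution}. Allocating $\mathcal{O}(\epsilon/Q)$ error to each $\operatorname{HAM-T}$ and $\mathcal{O}(\epsilon/L)$ to each outer segment, and combining them via the triangle-and-product inequality used in the proof of~\cref{Thm:Compressed_TDS_Uniform}, keeps the total error at $\mathcal{O}(\epsilon)$. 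Within each $\operatorname{HAM-T}$, the $2\lceil\log_2 M\rceil$ controlled exponentials must each be synthesized to accuracy $\mathcal{O}(\epsilon/(Q\log M))$; since $M\in\mathcal{O}(t(\alpha_A+\alpha_B)/\epsilon)$ this target is of order $\epsilon/(\alpha_B t\,\operatorname{polylog})$, which is consistent with the $\epsilon/(\alpha_B t\log\alpha_B)$ argument appearing inside $C_{e^{-iA/\alpha_B}}$ in~\cref{eq:int_pic_sim_cost}.

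Summing over all segments and queries then gives $Q\,C_B$ gates from the $O_B$ calls, $\mathcal{O}(Q\cdot C_{e^{-iA/\alpha_B}}[\epsilon/(\alpha_B t\log\alpha_B)]\cdot\log(t(\alpha_A+\alpha_B)/\epsilon))$ gates from the controlled exponentials inside all $\operatorname{HAM-T}$'s, $\mathcal{O}(L\cdot C_{e^{-iA/\alpha_B}}[\epsilon/L])$ gates from the $L$ outer free evolutions (absorbed into the same bound because $C_{e^{-iAs}}$ is at worst polylogarithmic in $1/\eta$), and the primitive-gate contribution already accounted for in~\cref{thm:int_pic_sim_query}; all fit into the stated $C_{\text{TDS}}$. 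The most delicate step will be (i): verifying that the binary decomposition of $U_A$ truly telescopes under a hypothesis as weak as polylogarithmic scaling in $\log(s/\eta)$, so that the additional $\log M$ factor from the binary expansion only appears multiplicatively outside the query-cost $C_{e^{-iA/\alpha_B}}[\cdot]$ rather than inflating the leading $\alpha_B t$ scaling.
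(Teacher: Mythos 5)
Your proposal follows essentially the same route as the paper's proof: the identical sandwich decomposition $\operatorname{HAM-T}=U_A^\dagger(\openone_d\otimes O_B)U_A$, the binary expansion of the controlled free evolution into $\mathcal{O}(\log M)$ controlled exponentials each synthesized to accuracy $\mathcal{O}(\delta/\log M)$, the same per-query error allocation $\delta\in\mathcal{O}(\epsilon/Q)$, and the same final tally with the $\log M=\log(t(\alpha_A+\alpha_B)/\epsilon)$ factor multiplying $C_{e^{-iA/\alpha_B}}$. The only cosmetic difference is your geometric-series "telescoping" of the durations $s_k$, which the paper does not use (it simply bounds each exponential's duration by $\tau$ and multiplies by the count $\log M$); since your final accounting retains the $\log M$ factor anyway, this does not affect the result.
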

\begin{proof}
Expressing~\cref{thm:int_pic_sim_query} solely in terms of gate complexity requires an expression $C_{\operatorname{HAM-T}}[\delta]$ for the cost of approximating the oracle $\operatorname{HAM-T}$ to error $\delta$. One possible construction is
\begin{align}
\operatorname{HAM-T}&= \left(\sum^{M-1}_{m=0}\ket{m}\bra{m}_d \otimes \openone_a\otimes  e^{iA\tau m/M}\right)\cdot(\openone_d \otimes O_B)\cdot\left(\sum^{M-1}_{m=0}\ket{m}\bra{m}_d \otimes \openone_a\otimes e^{-iA\tau m/M}\right), 
\quad \tau\in\mathcal{O}(\alpha_B^{-1}).
\end{align}
Note that the controlled-Hamiltonian evolution operator $\sum^{M-1}_{m=0}\ket{m}\bra{m}_d \otimes e^{iA\tau m/M}$ may be implemented by $\lceil\log_2{(M)}\rceil$ controlled-$e^{iA\tau/M},e^{iA2\tau/M}, e^{iA4\tau/M},\cdots,e^{iA 2^{\lceil\log_2{(M)}\rceil}\tau/M}$. By approximating each controlled-$e^{iAj\tau/M}$ with error $\mathcal{O}(\delta/\log{(M)})$, the overall error will be bounded by $\mathcal{O}(\delta)$. As we assume that $C_{e^{-iAt}}[\delta/\log{(M)}]\in{\mathcal{O}}\left(|t| \log^{\gamma}{(t\log{(M)}/\delta)}\right)$, each controlled-$e^{iAj\tau/M}$ costs at most ${\mathcal{O}}\left(\tau \log^{\gamma}{(\tau\log{(M)}/\delta)}\right)$ gates. Thus the cost of all controlled-$e^{iAj\tau/M}$ sums to $
{\mathcal{O}}\left(\tau \log{(M)}\log^{\gamma}{(\tau\log{(M)}/\delta)}\right)
={\mathcal{O}}\left(\tau \log{(M)}\log^{\gamma}{(\tau/\delta)}\right)$. By adding the cost of $O_B$, 
\begin{align}
C_{\operatorname{HAM-T}}[\delta]
\in{\mathcal{O}}\left(C_B+ \tau\log{(M)}\log^{\gamma}{(\tau/\delta)}\right)
={\mathcal{O}}\left(C_B+ \frac{1}{\alpha_B}\log{(M)}\log^{\gamma}{\left(\frac{1}{\alpha_B\delta}\right)}\right)
.
\end{align}

From~\cref{thm:int_pic_sim_query}, approximating $e^{-iHt}$ to error $\mathcal{O}(\epsilon)$ requires $\mathcal{O}\left(\alpha_B t\frac{\log{(\alpha_B t/\epsilon)}}{\log\log{(\alpha_B t /\epsilon)}}\right)$ queries to $\operatorname{HAM-T}$. We obtain this overall error by approximating each $\operatorname{HAM-T}$ query with error $\delta\in\mathcal{O}\left(\epsilon\frac{\log\log{(\alpha_B t /\epsilon)}}{\alpha_B t\log{(\alpha_B t/\epsilon)}}\right)$. Thus
\begin{align}
C_{\operatorname{HAM-T}}[\delta]\in{\mathcal{O}}\left(C_B+\frac{1}{\alpha_B} \log{(M)}\log^{\gamma}{\left(\frac{1}{\epsilon}\frac{ t\log{(\alpha_B t/\epsilon)}}{\log\log{(\alpha_B t /\epsilon)}}\right)}\right)
={\mathcal{O}}\left(C_B+\frac{1}{\alpha_B} \log{(M)}\log^{\gamma}{\left(\frac{t \log{(\alpha_B)}}{\epsilon}\right)}\right).
\end{align}

From~\cref{thm:int_pic_sim_query}, we also require $\mathcal{O}(\alpha_Bt)$ queries to $e^{-iA\tau}$, and $\mathcal{O}\left((n_a + \log{(M)})\alpha_B t\frac{\log{(\alpha_B t/\epsilon)}}{\log\log{(\alpha_B t/\epsilon)}}\right)$ primitive gates. It suffices to approximate each $e^{-iA\tau}$ to error $\mathcal{O}(\epsilon/(\alpha_Bt))$. By adding all these contributions, the total gate complexity of simulation in the interaction picture is
\begin{align}
&\mathcal{O}\left(\alpha_B t\left(\left(C_{\operatorname{HAM-T}}[\delta]+(n_a + \log{(M)})\right)\frac{\log{(\alpha_B t/\epsilon)}}{\log\log{(\alpha_B t /\epsilon)}}+C_{e^{-iA\tau}}[\epsilon/(\alpha_Bt)]\right)\right)
\\\nonumber
&=\mathcal{O}
\left(
	\alpha_B t
	\left(
		\left(C_B+\frac{1}{\alpha_B} \log{(M)}\log^{\gamma}{\left(\frac{t \log{(\alpha_B)}}{\epsilon}\right)}+(n_a + \log{(M)})
		\right)\frac{\log{(\alpha_B t/\epsilon)}}{\log\log{(\alpha_B t 	/\epsilon)}} + \frac{1}{\alpha_B} \log^{\gamma}{\left(\frac{t}{\epsilon}\right)}
	\right)
\right)
\\\nonumber
&=\mathcal{O}
\left(
\alpha_B t
\left(C_B+n_a+\frac{1}{\alpha_B} \log^{\gamma}{\left(\frac{t\log{(\alpha_B)}}{\epsilon}\right)}\log{\left(M\right)}
\right)\frac{\log{(\alpha_B t/\epsilon)}}{\log\log{(\alpha_B t 	/\epsilon)}}
\right)
\\\nonumber
&=\mathcal{O}
\left(
\alpha_B t
\left(C_B+C_{e^{-iA/\alpha_B}}\left[\frac{\epsilon}{\alpha_B t\log{(\alpha_B)}}\right]\log{\left(\frac{t(\alpha_A+\alpha_B)}{\epsilon}\right)}
\right)\frac{\log{(\alpha_B t/\epsilon)}}{\log\log{(\alpha_B t 	/\epsilon)}}.
\right).
\end{align}
\end{proof}

From comparing~\cref{eq:sch_pic_sim_cost,eq:int_pic_sim_cost}, we may immediately state sufficient criteria for when simulation in the interaction picture is advantageous over simulation in the Schr\"odinger picture.
\begin{enumerate}
	\item The upper bound on the spectral norms $\alpha_A\ge \|A\|$, $\alpha_B\ge \|B\|$ of the encoding in~\cref{eq:AB_standard_form_encoding} satisfy $\alpha_A\gg \alpha_B$. Generally speaking, this is correlated with term $A$ representing fast dynamics $\|A\|\gg \|B\|$.
	\item The gate complexity of time-evolution by $A$ alone for time $\tau\in{\mathcal{O}}{(\alpha_B^{-1})}$ is comparable to that of synthesizing the oracle $O_B$, that is $C_{e^{iA\alpha^{-1}_B}} \in{\mathcal{O}}{(C_B)}$. 
\end{enumerate}
Note that satisfying condition (2) depends strongly on the structure of $A,B$. For instance, a simulation of $A$ for time $\tau \in{\mathcal{O}}{(\alpha^{-1}_B)}$ using generic time-independent techniques has gate complexity $\tilde{\mathcal{O}}\left(C_A \alpha_A / \alpha_B\right)$. As we are interested in the case $\|A\|/ \|B\| \gg 1$, this quantity could be large and scale poorly with the problem size. One very strong sufficient assumption is that $e^{-iAt}$ is cheap and can be fast-forwarded, such that the gate complexity $C_{e^{-iAt}}[\epsilon]\in{\mathcal{O}}{\left(\operatorname{polylog}(t/\epsilon)\right)}$ is constant up to logarithmic factor. This turns out to be a reasonable assumption in the application we next consider.

\section{Application to the Hubbard model with long-ranged interactions}
\label{sec:Hubbard}
We now apply the technology developed in~\cref{sec:truncated_dyson_series} and~\cref{sec:interaction_picture} for Hamiltonian simulation in the interaction picture to physical problems of practical interest. We focus on the periodic Hubbard model in $d$-dimensions with $N$ lattice sites subject to local disorder and translational-invariant two-body couplings that may be long-ranged in general. We perform a gate complexity comparison with simulation by time-independent techniques, and later in~\cref{sec:Chemistry}, we specialize this model to that of quantum chemistry simulations in the plane-wave and dual basis~\cite{babbush2017low}.

The periodic Hubbard Hamiltonian we consider has the form $H=T+U+V$, where $T$ is the kinetic energy hopping operator, $U$ is the local single-site potential, and $V$ is a symmetric translationally-invariant two-body density coupling term between opposite spins. In the dual basis, $H$ is expressed in terms of single-site Fermionic creation and annihilation operators $\{a_{\vec{x}\sigma},a_{\vec{y}\sigma'}\}=\{a^\dag_{\vec{x}\sigma},a^\dag_{\vec{y}\sigma'}\}=0$, $\{a_{\vec{x}\sigma},a^\dag_{\vec{y}\sigma'}\}=\delta_{\vec{x}\vec{y}}\delta_{\sigma\sigma'}$, and the number operator $n_{\vec{x}\sigma}=a^\dag_{\vec{x}\sigma}a_{\vec{x}\sigma}$. The subscript $\vec{x}\in [-N^{1/d},N^{1/d}]^d$ indexes one of $N$ lattice sites in $d$ dimensions, and $\sigma \in\{-1,1\}$ is a spin-$\frac{1}{2}$ index. Explicitly,
\begin{align}
\label{eq:Hubbdard_Ham}
H=
\sum_{\vec{x},\vec{y},\sigma}T(\vec{x}-\vec{y})a^\dag_{\vec{x}\sigma}a_{\vec{y}\sigma}
+
\sum_{\vec{x},\sigma}U(\vec{x},\sigma)n_{\vec{x}\sigma}
+
\sum_{(\vec{x},\sigma)\neq(\vec{y},\sigma')}V(\vec{x}-\vec{y})n_{\vec{x}\sigma}n_{\vec{y}\sigma'},
\end{align}
where the coefficients $T(\vec{s}), U(\vec{s},\sigma), V(\vec{s})$ are real functions of the $\vec{s}\in [-N^{1/d},N^{1/d}]^d$. 

Further simplification of~\cref{eq:Hubbdard_Ham} is possible as the kinetic energy operator is diagonal in the plane-wave basis. This basis related to the dual basis by a unitary rotation $\operatorname{FFFT}$, an acronym for `Fast-Fermionic-Fourier-Transform'~\cite{babbush2017low} that implements a Fourier transform over the lattice site indices, resulting in Fermionic creation and annihilation operators $c^\dag_{\vec{p}\sigma},c_{\vec{p}\sigma}$. 
\begin{align}
\label{eq:fermionic_fourier_basis}
c_{\vec{p}\sigma}=\frac{1}{\sqrt{N}}\sum_{\vec{x}}a_{\vec{x}\sigma}\;e^{i 2\pi\vec{p}\cdot\vec{x}/N^{1/d}}
= \operatorname{FFFT}^\dag a_{\vec{p}\sigma}\operatorname{FFFT},
\quad
c^\dag_{\vec{p}\sigma}
=\frac{1}{\sqrt{N}}\sum_{\vec{x}}a^\dag_{\vec{x}\sigma}\;e^{-i 2\pi\vec{p}\cdot\vec{x}/N^{1/d}}
= \operatorname{FFFT}^\dag a^\dag_{\vec{p}\sigma}\operatorname{FFFT},
\end{align}
By substituting the Fourier transform of the kinetic term $T(\vec{s}) = \frac{1}{N}\sum_{\vec{p}}\tilde{T}(\vec{p})\;e^{-i 2\pi\vec{p}\cdot\vec{s}/N^{1/d}}$, an equivalent expression for the Hubbard Hamiltonian is
\begin{align}
\label{eq:Hubbard_Fourier}
H=
\operatorname{FFFT}^\dag\cdot \left(\sum_{\vec{x},\sigma}\tilde{T}(\vec{x}){n}_{\vec{x}\sigma}\right)\cdot \operatorname{FFFT}
+
\sum_{\vec{x},\sigma}U(\vec{x},\sigma)n_{\vec{x}\sigma}
+
\sum_{(\vec{x},\sigma)\neq(\vec{y},\sigma')}V(\vec{x}-\vec{y})n_{\vec{x}\sigma}n_{\vec{y}\sigma'},
\end{align}
where each term is now diagonal in their respective bases.

A simulation of this Hamiltonian on a qubit quantum computer requires a map from its Fermionic operators to spin operators. One possibility is the Jordan-Wigner transformation, which requires some map from Fermionic indices to spin indices, such as $f(\vec{x},\sigma)= N\frac{1-\sigma}{2}+\left(\sum^{d-1}_{j=0}\vec{x}_j N^{j/d}\right)$. Subsequently, we replace
\begin{align}
a^\dag_{\vec{x}\sigma}
\rightarrow \frac{1}{2}(X_{f(\vec{x},\sigma)}-i Y_{f(\vec{x},\sigma)})\bigotimes^{f(\vec{x},\sigma)-1}_{j=0}Z_{j},
\quad
a_{\vec{x}\sigma}
\rightarrow \frac{1}{2}(X_{f(\vec{x},\sigma)}+i Y_{f(\vec{x},\sigma)})\bigotimes^{f(\vec{x},\sigma)-1}_{j=0}Z_{j}.
\end{align}
Note the very useful property where number operators map to single-site spin operators $n_{\vec{x}\sigma} \rightarrow \frac{1}{2}(I-Z_{f(\vec{x},\sigma)})$ under this encoding. Moreover, $\operatorname{FFFT}$ can be implemented using $\mathcal{O}(N \log{(N)})$ primitive quantum gates~\cite{Ferris2014} in the Jordan-Wigner representation.

Let us evaluate the worst-case gate-complexity for time-evolution by $H$ in  the Schr\"odinger picture. As an example of state-of-art using the truncated Taylor series approach in~\cref{sec:truncated_taylor_algorithm}, $e^{-i(T+U+V)t}$ may be simulated using $\mathcal{O}((\alpha_T+\alpha_U+\alpha_V)t \log{(t(\alpha_T+\alpha_U+\alpha_V)/\epsilon)})$ queries to oracles that encode $T$, $U$, and $V$ as follows.
\begin{align}
(\bra{0}_{\text{a}}\otimes \openone_{\text{s}}) O_T (\ket{0}_{\text{a}}\otimes \openone_{\text{s}}) = \frac{T}{\alpha_T},
\quad
(\bra{0}_{\text{a}}\otimes \openone_{\text{s}}) O_{U} (\ket{0}_{\text{a}}\otimes \openone_{\text{s}}) = \frac{U}{\alpha_U},
\quad
(\bra{0}_{\text{a}}\otimes \openone_{\text{s}}) O_{V} (\ket{0}_{\text{a}}\otimes \openone_{\text{s}}) = \frac{V}{\alpha_V}.
\label{eq:standard-form-Hubbard}
\end{align}
 The cost of simulation depends strongly on the coefficients $\tilde{T}(\vec{p}), U(\vec{x}), V(\vec{x})$. The most straightforward approach synthesizes these oracles using the linear-combination of unitaries outline in~\cref{eq:LCU}.
 For instance, $\operatorname{O}_T=(\operatorname{PREP_T}^\dag \otimes \operatorname{FFFT}^\dag)\cdot\operatorname{SEL}_T\cdot(\operatorname{PREP}_T\otimes \operatorname{FFFT})$, where
 \begin{align}
\operatorname{PREP}_T\ket{0}_a=\sum_{\vec{p},\sigma}\sqrt{\frac{\tilde{T}(\vec{p})}{\alpha_T}}\ket{\vec{p},\sigma}_a,
\quad
\operatorname{SEL}_T=\sum_{\vec{p},\sigma}\ket{\vec{p},\sigma}\bra{\vec{p},\sigma}_a\otimes n_{\vec{p}\sigma},
\quad
\alpha_T=\sum_{\vec{p},\sigma}|\tilde{T}(\vec{p})|.
\label{eq:LCU_Hubbard}
\end{align}
 and similarly for $U$ and $V$. As there are $\mathcal{O}(N)$ distinct coefficients in the worst-case, each of $\operatorname{PREP}_{T,U,V}$ costs $\mathcal{O}(N)$. As $V$ has $\mathcal{O}(N^2)$ terms, $\operatorname{SEL}_V$ has the largest cost of $\mathcal{O}(N^2)$. Thus overall gate complexity is $\mathcal{O}(N^2(\alpha_T+\alpha_U+\alpha_V) t \log{(t(\alpha_T+\alpha_U+\alpha_V)/\epsilon)})$. As there are $\mathcal{O}(N^2)$ coefficients, $\max\{\alpha_T,\alpha_U,\alpha_V\}\in{\mathcal{O}}{(\alpha_T+N^2)}$, and so the cost of simulation is
 \begin{align}
 \label{eq:hubbard_sch_pic_sim_cost}
\mathcal{O}\left(N^2(\alpha_T+N^2) t \log{\left(\frac{t(\alpha_T+N^2)}{\epsilon}\right)}\right).
 \end{align}

The worst-case gate-complexity may be substantially improved by instead simulating $H$ in the interaction picture using the truncated Dyson series algorithm in~\cref{sec:truncated_dyson_series}. The key idea is to simulate in the rotating frame of the interactions $e^{-i(U+V)t}$, where the Hamiltonian becomes time-dependent like $H_I(t)=e^{i(U+V)t} T e^{-i(U+V)t}$. Using the same oracle $O_T$ in~\cref{eq:standard-form-Hubbard} for the kinetic term, the cost of time-evolution $e^{-i(T+U+V)t}$ by this technique is given by~\cref{eq:int_pic_sim_cost}:
\begin{align}
\label{eq:hubbard_int_pic_sim_cost}
C_{\text{TDS}}&\in{\mathcal{O}}\left(
\left(N+C_{e^{-i(U+V)/\alpha_T}}\left[\frac{\epsilon}{\alpha_T t\log{(\alpha_T)}}\right]\log{\left(\frac{t(\|U+V\|+\alpha_T)}{\epsilon}\right)}
\right)\alpha_T t\frac{\log{(\alpha_T t/\epsilon)}}{\log\log{(\alpha_T t 	/\epsilon)}}\right)
\\\nonumber
&\in{\mathcal{O}}\left(\left(N+C_{e^{i(U+V)/\alpha_T}}[\epsilon]\right)\alpha_T t \operatorname{polylog}((\|U+V\|+\alpha_T)t/\epsilon)\right).
\end{align}
All that remains is to bound the cost of time-evolution by the term $C_{e^{i(U+V)/\alpha_T}}[\epsilon]$. Using the fact that this is diagonal in the Pauli $Z$ basis, the Hamiltonian may be fast-forwarded and so has cost that is independent of the evolution time and error. Thus the most straightforward approach decomposes 
\begin{align}
e^{i(U+V)t}=\left(
\prod_{\vec{x},\sigma}e^{-i U(\vec{x},\sigma)n_{\vec{x}\sigma}t}
\right)
\left(
\prod_{(\vec{x},\sigma)\neq(\vec{y},\sigma')}e^{-i V(\vec{x}-\vec{y})n_{\vec{x}\sigma}n_{\vec{y}\sigma'}t}
\right).
\end{align}
There are $\mathcal{O}{(N^2)}$ exponentials, and so $C_{e^{i(U+V)t}}[\epsilon]\in{\mathcal{O}}(N^2)$, which is independent of $\epsilon$ in the primitive gate set of arbitrary two-qubit rotations, hence 
$
C_{\text{TDS}}\in{\mathcal{O}}\left(
N^2\alpha_Tt\log{\left(\frac{t(\|U+V\|+\alpha_T)}{\epsilon}\right)}
\frac{\log{(\alpha_T t/\epsilon)}}{\log\log{(\alpha_T t 	/\epsilon)}}\right).
$
Compared to~\cref{eq:hubbard_sch_pic_sim_cost}, we already a factor $\mathcal{O}{(N)}$ improvement in cases where the kinetic energy is extensive, meaning that $\alpha_T\in{\mathcal{O}}{(N)}$. 

A further improvement to 
\begin{align}
\label{eq:Int_pic_sim_Hubbard_best_cost}
C_{\text{TDS}}\in{\mathcal{O}}(N\alpha_T t\operatorname{polylog}(N(\|U+V\|+\alpha_T)t/\epsilon))
\end{align}
is possible through a more creative evaluation to reduce the gate complexity of $e^{i(U+V)t}$ from $\mathcal{O}{(N^2)}$ to $\mathcal{O}{(N\log{(N)})}$. Clearly, $C_{e^{iUt}}\in{\mathcal{O}}{(N)}$ with $N$ commuting terms poses no problem. The difficulty lies in constructing time-evolution by the two-body term $e^{iVt}$ such that $C_{e^{iVt}}\in{\mathcal{O}}{(N\log{N})}$. As $V$ is a sum of $\mathcal{O}(N^2)$ commuting terms, a gate cost $\mathcal{O}(N^2)$ appear unavoidable. However, this may be reduced by exploiting the translation symmetry of its coefficients with a discrete Fourier transform. Assuming $V(\vec{x})=V(-\vec{x})$ is real and symmetric, its discrete Fourier transform  $\tilde{V}{(\vec k)}=\sum_{\vec{x}}V(\vec{x})e^{i2\pi \vec{x}\cdot\vec{k}/N^{1/d}}$ only has real coefficients. Let we re-write $V$ from~\cref{eq:Hubbdard_Ham} as
\begin{align}
V &
= \sum_{(\vec{x},\sigma)\neq(\vec{y},\sigma')} V(\vec{x}-\vec{y}) n_{\vec x \sigma} n_{\vec y \sigma'}
=
\sum_{(\vec{x},\sigma)\neq(\vec{y},\sigma')}
\frac{1}{N}\sum_{\vec{k}}\tilde{V}{(\vec k)}e^{-i2\pi (\vec{x}-\vec y)\cdot\vec{k}/N}
n_{\vec x \sigma} n_{\vec y \sigma'}
\\\nonumber
&=
\sum_{\vec{k}}\frac{\tilde{V}{(\vec k)}}{N}
\left(
\sum_{(\vec x, \sigma) , (\vec y, \sigma')}
e^{-i2\pi (\vec{x}-\vec y)\cdot\vec{k}/N}
n_{\vec x, \sigma} n_{\vec y, \sigma'}
-
\sum_{\vec x, \sigma}
n_{\vec x, \sigma}
\right)
\\\nonumber
&=
\sum_{\vec{k}}
\tilde{V}{(\vec k)}
\underbrace{\left(\frac{1}{\sqrt{N}}
	\sum_{\vec x}
	e^{-i2\pi \vec{x}\cdot\vec{k}/N}
	\sum_{\sigma}n_{\vec x, \sigma} \right)
}_{\tilde{\chi}_k}
\underbrace{
	\left(\frac{1}{\sqrt{N}}
	\sum_{\vec y}
	e^{i2\pi \vec{y}\cdot\vec{k}/N}
	\sum_{\sigma'}n_{\vec y, \sigma'}
	\right)
}_{\tilde{\chi}_k^\dag}
-
\sum_{\vec p, \sigma}
\left(\sum_{\vec{k}}
\tilde{V}{(\vec k)}\right)
n_{\vec p, \sigma}
.
\end{align}
Our strategy for implementing $e^{-iV t}$ is based on the following observation: Suppose we had a unitary oracle $O_{\vec{A}}\ket{j}\ket{0}_o\ket{0}_{\text{garb}}=\ket{j}\ket {A_j}_o\ket{g(j)}_{\text{garbage}}$ that on input $\ket{j}\in \mathbb{C}^{\operatorname{dim}[\vec{A}]}$, outputs on the $l$-qubit $o$ register, the value of the $j^\text{th}$ element of some complex vector $\vec{A}$, together with some garbage state $\ket{g(j)}_{\text{garb}}$ of lesser interest required to make the operation reversible. One may then perform a phase rotation that depends on $A_j$ as follows:
\begin{align}
\label{eq:phase_gadget}
\ket{j}\ket{0}_o\ket{0}_{\text{garb}}\ket{0}
\underset{O_{\vec{A}}}{\rightarrow} 
\ket{j}\ket{A_j}_o\ket{g(j)}_{\text{garb}}\ket{0}
\underset{\operatorname{PHASE}}{\rightarrow} 
e^{-i A_j t}\ket{j}\ket{A_j}_o\ket{g(j)}_{\text{garb}}\ket{0}
\underset{O^\dag_{\vec{A}}}{\rightarrow} 
e^{-i A_j t}\ket{j}\ket{0}_o\ket{0}_{\text{garb}}\ket{0}.
\end{align}
If $A_j$ were represented in binary, say, $A_j=\sum^{l-1}_{k=0}q_k 2^{-k}$, $\operatorname{PHASE}$ could be implemented using $\mathcal{O}(l)$ controlled-phase $\ket{0}\bra{0}\otimes I + \ket{1}\bra{1}\otimes e^{-it2^{-k}Z}$ rotations. 

Thus, we construct a unitary $O_{V,\text{binary}}$ with the property that
\begin{align}
O_{V,\text{binary}}\left(\bigotimes_{\vec{x},\sigma}\ket{n_{\vec{x},\sigma}}\right)\ket{0}\ket{0}_{\text{garb}}
&=
\left(\bigotimes_{\vec{x},\sigma}\ket{n_{\vec{x},\sigma}}\right)\left|f(\vec{n})\right\rangle\ket{g(\vec{n})}_{\text{garb}},
\\\nonumber
f(\vec{n})
&=
\sum_{(\vec x, \sigma) \neq (\vec y, \sigma')} V(\vec{x}-\vec{y}) n_{\vec p, \sigma} n_{\vec q, \sigma'},
\end{align} 
where the value $f(\vec{n})$ is encoded in $l\in{\mathcal{O}}(\log(1/\epsilon))$ bits. This is implemented by the following sequence, where we have omitted the garbage register for clarity.
\begin{align}
\left(\bigotimes_{\vec{x},\sigma}\ket{n_{\vec{x}\sigma}}\right)\ket{0}
\underset{\operatorname{ADD}}{\rightarrow}
\bigotimes_{\vec{x}}\ket{\sum_{\sigma}n_{\vec{x}\sigma}}
\underset{\operatorname{FFT}}{\rightarrow}
\bigotimes_{\vec{k}}\ket{\tilde{\chi}_{\vec{k}}}
\underset{\operatorname{|\cdot|^2}}{\rightarrow}
\bigotimes_{\vec{k}}\ket{|\tilde{\chi}_{\vec{k}}|^2}
\underset{\operatorname{\times V_k}}{\rightarrow}
\bigotimes_{\vec{k}}\ket{{V}({\vec{k}})|\tilde{\chi}_{\vec{k}}|^2},
\end{align}
The cost of $O_{V,\text{binary}}$ may be expressed in term of the four standard reversible arithmetical operations, addition, subtraction, division, and multiplication, which each cost $\mathcal{O}(\operatorname{poly}(l))$ primitive gates. 
The first steps $\operatorname{ADD}$ adds $\mathcal{O}(N)$ pairs of two bits $n_{\vec x, \sigma = 1} + n_{\vec x, \sigma = -1}$ and costs $\mathcal{O}(N)$ arithmetic operations. The second step $\operatorname{FFT}$ is a $d$-dimensional Fast-Fourier-Transform on $\mathcal{O}(N)$ binary numbers and requires $\mathcal{O}(N\log{(N)})$ arithmetic operations. The third step computes the absolute-value-squared of $\mathcal{O}(N)$ binary numbers, and uses $\mathcal{O}(N)$ arithmetic operations. The last step multiplies each $|\tilde{\chi}_k|^2$ with the corresponding $V_k$, and costs $\mathcal{O}(N)$ arithmetic operations. This last step may actually be avoided by rescaling the time parameter $e^{-iA_j t}\rightarrow e^{-iA_j V_k t}$ in~\cref{eq:phase_gadget}. Thus the total cost of $O_{V,\text{binary}}$ is $\mathcal{O}(N\log{(N)} \operatorname{poly}(l)) = \mathcal{O}(N\log{(N)}\operatorname{polylog}(1/\epsilon))$. Using one query to $O_{V,\text{binary}}$, $O^\dag_{V,\text{binary}}$, and $\mathcal{O}(N\log{(N)}\operatorname{polylog}(1/\epsilon))$ primitive quantum gates, we may thus apply $e^{-iV t}$ with a phase error $\mathcal{O}(\epsilon)$ for a fixed value of $t$. 
 
\subsection{Application to quantum chemistry in the plane-wave basis}
\label{sec:Chemistry}
The Hamiltonian that generates time-evolution for a state $\ket{\psi(t)}$ of interacting electrons in $d=3$ dimension consists of three operators: the electron kinetic energy $T$, the electron-nuclei potential energy $U$, and the electron-electron potential energy $V$. It was demonstrated by~\cite{babbush2017low} that this electronic structure Hamiltonian is a special case of the general Hubbard Hamiltonian of~\cref{eq:Hubbard_Fourier}. In the plane-wave basis,
\begin{align}
i\partial _t \ket{\psi(t)} &= H \ket{\psi(t)}
\\\nonumber
H_{P} &= \frac{1}{2} \sum_{\vec{p}, \sigma} |\vec{k}_{\vec p}|^2 \, c_{\vec{p},\sigma}^\dagger c_{\vec{p},\sigma} + \frac{4 \pi}{\Omega} \sum_{\substack{\vec{p} \neq \vec{q} \\ j,\sigma}} \left(-\zeta_j \frac{e^{i \, \vec{k}_{\vec q-\vec p} \cdot \vec R_j}}{|\vec k_{\vec p-\vec q}|^2}\right) c^\dagger_{\vec p, \sigma} c_{\vec q, \sigma} + \frac{2 \pi}{\Omega} \sum_{\substack{(\vec p, \sigma) \neq (\vec q, \sigma') \\ \vec\nu \neq 0}} \frac{c^\dagger_{\vec p,\sigma} c_{\vec q,\sigma'}^\dagger c_{\vec q + \vec \nu,\sigma'} c_{\vec p - \vec \nu,\sigma}}{|\vec k_{\vec \nu}|^2},
\end{align}
where $k_{\vec{p}}=2\pi \vec{p}/\Omega^{1/3}$, $\vec{p}\in [-N^{1/3},N^{1/3}]^3$,
$r_{\vec{p}}=\vec{p} (\Omega / N)^{1/3}$, $\Omega$ represents the volume of the simulation, and $\eta_j$ is the nuclear charge of the $j^{\text{th}}$ nucleus. Whereas $T$ is diagonal here, one may find an alternate basis where $U$ and $V$ are diagonal. This is the dual basis, defined through the unitary transform $\operatorname{FFFT}$ of~\cref{eq:fermionic_fourier_basis}. In this basis, let us define the state $\ket{\psi_D(t)}=\operatorname{FFFT}^\dag\ket{\psi(t)}$, which evolves under the Hamiltonian $H$, which is of exactly that of~\cref{eq:Hubbard_Fourier}, with coefficients
\begin{align}
\tilde{T}(\vec{p})=\frac{1}{2}|\vec{k}_{\vec p}|^2
\quad
U(\vec{p}) = -\frac{4 \pi}{\Omega }\sum_{\substack{\vec{\nu}\neq 0,\\j}}\frac{\zeta_j \, \cos\left[ \vec k_{\vec \nu} \cdot \left(\vec R_{\vec j} - \vec r_{\vec p}\right)\right]}{|\vec k_{\vec \nu}|^2},
\quad
V(\vec{s}) = \frac{2 \pi}{\Omega }\sum_{\vec{\nu}\neq 0}\frac{\cos \left[\vec k_{\vec\nu} \cdot \vec r_{\vec{s}}\right]}{|\vec k_{\vec\nu}|^2}.
\end{align}

Thus the cost of time-evolution by the electronic structure Hamiltonian $e^{-iHt}$ using the interaction picture is given by~\cref{eq:Int_pic_sim_Hubbard_best_cost}. The only dominant parameter that depends on the problem is the normalization factor 
\begin{align}
\alpha_T = \sum_{\vec{p},\sigma}\frac{|\vec{k}_{\vec p}|^2}{2}
\in{\mathcal{O}}\left(\int^{N^{1/3}}_{0}\frac{p^2}{\Omega^{2/3}}(4\pi p^2) dp\right)
=
\mathcal{O}\left(\frac{N^{5/3}}{\Omega^{2/3}}\right).
\end{align}
The spectral norms of the potential terms are bounded by $\|U+V\|\in\mathcal{O}\left(\frac{\|\zeta\|_1N^{5/3}}{\Omega^{1/3}}+\frac{N^{7/3}}{\Omega^{1/3}}\right)$, where $\|\zeta\|_1=\sum_{j}|\zeta_j|$~\cite{babbush2017low} -- the exact scaling is unimportant, so long as it is polynomial. Thus the total gate complexity of time-evolution under the assumption of constant density (i.e. $N/\Omega \in O(1)$) is
\begin{align}
C_{\text{TDS}}\in{\mathcal{O}}\left(\frac{N^{8/3}}{\Omega^{2/3}} t\operatorname{polylog}\left(\frac{(\|\zeta\|_1+N)Nt}{\epsilon}\right)\right)=\mathcal{O}\left(N^2 t\operatorname{polylog}\left(\|\zeta\|_1Nt\epsilon\right)\right).
\end{align}
In contrast, the cost of simulation in the plane-wave dual basis~\cite{babbush2017low} applies the `Qubitization' technique~\cite{Low2016qubitization} and has gate complexity that scales like $\tilde{\mathcal{O}}((\|\zeta\|_1+N)N^{8/3}t)$, and also has a polynomial dependence on  the nuclear charges $\zeta_j$. Our method outperforms this, and notably depends only poly-logarithmically on the sum of the nuclear charges.  This means that the cost of the  simulation method is largely insensitive to the nuclear charges present in the simulation, unlike Trotter-Suzuki simulation methods which sensitively depend on the nuclear charges~\cite{babbush2015chemical}.

\section{Application to sparse Hamiltonian simulation}
\label{sec:sparse_Hamiltonian_simulation}
In this section, we present a complexity-theoretic perspective of the improvements that are enabled by simulation with the truncated Dyson series, and simulation in the interaction picture. We do so by evaluating the query complexity for the simulation of sparse Hamiltonians $H$. Such Hamiltonians of dimension $N$ are called $d$-sparse if there are at most $d\in{\mathcal{O}}(\operatorname{polylog}(N))$ non-zero entries in every row, and the position and values of these entries may be efficiently output, in say a binary representation, by some classical circuit of size $\mathcal{O}(\operatorname{polylog}(N))$. This abstract model is useful in quantum complexity theory as a natural generalization these classical circuits leads to unitary quantum oracles that can be queried to access the same information, but now in superposition. With this model, we achieve in~\cref{sec:spase_ham_sim_time_dependent} time-dependent simulation with a square-root improvement with respect to the sparsity parameter, and gate complexity scaling with the average instead of worst-case rate-of-change $\|\dot H\|$. By moving to the interaction picture in~\cref{sec:sparse_diagonally_dominant}, we find more efficient time-independent simulation algorithms for diagonally-dominant Hamiltonians.

This model assumes that the Hamiltonian is input to the simulation routine through two oracles: $O_f$ and $O_H$.  $O_H$ is straight forward; it provides the values of the matrix elements of the Hamiltonian given a time index $\ket{t}$ and indices $\ket{x},\ket{y}$ to for the row and column of $H$ as follows
\begin{equation}
\label{eq:sparse_oracle_H}
O_H\ket{t,x,y,0} = \ket{t,x,y,H_{xy}(t)}.
\end{equation}
$O_f$ provides the locations of the non-zero matrix elements in any given row or column of $H$.  Specifically, let $f(x,j)$ give the column index of the $j^{\rm th}$ non-zero matrix element in row $x$ if it exists and an appropriately chosen zero element if it does not.  In particular, let $r_{t,j}$ be the list of column indices of these non-zero matrix elements in row $j$.  We then define, with a time-index $t$,
\begin{equation}
\label{eq:sparse_oracle_f}
O_f \ket{t,x,j} =\ket{t,x,f_t(x,j)}.
\end{equation}

\subsection{Simulation of sparse time-dependent Hamiltonians}
\label{sec:spase_ham_sim_time_dependent}
Applying the truncated Dyson series simulation algorithm to sparse Hamiltonians requires us to synthesize $\operatorname{HAM-T}$ in~\cref{def:HAM-T} from these oracles $O_H, O_F$. This is possible by a straightforward construction. 
\begin{lemma}[Synthesis of $\operatorname{HAM-T}$ from sparse Hamiltonian oracles]
	\label{thm:sparse_Ham-t}
	Let a time-dependent $d$-sparse Hamiltonian $H(s) : [0,t] \rightarrow \mathbb{C}^{N\times N}$ be be encoded in the oracles $O_H$ and $O_f$ from~\cref{eq:sparse_oracle_H,eq:sparse_oracle_f} to $n_p$ bits of precision. Then the unitary $\operatorname{HAM-T}$ such that for $\|H\|_{\max} := \max_s \|H(s)\|_{\max}$
	\begin{align}
	\label{eq:sparse_Ham-T}
	(\bra{0}_a\otimes \openone_s) \operatorname{HAM-T} (\ket{0}_a\otimes \openone_s) = \sum_{t}\ket{t}\bra{t}_d \otimes \frac{H(t)}{d\|H\|_{\max}},
	\end{align}
	can be implemented  with $O(1)$ queries to $O_f$ and $O_H$, and $\mathcal{O}\left(\operatorname{poly}(n_p)+\log{(N)}\right)$ primitive gates.
\end{lemma}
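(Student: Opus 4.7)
The plan is to reduce this to the standard block-encoding construction for sparse time-independent Hamiltonians (as in Berry--Childs--Kothari) and lift it to the time-dependent setting by treating the time index $|t\rangle_d$ as an additional control line. Since $O_H$ and $O_f$ as defined in~\cref{eq:sparse_oracle_H,eq:sparse_oracle_f} already take $t$ as an input register and are single unitaries on all $t$ simultaneously, the time-dependent circuit is literally the same as the time-independent one, with the time register riding along inside every oracle call. So the whole task reduces to giving a gate-efficient block-encoding at fixed $t$ and observing that it packages into a $t$-controlled family without overhead.

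For the fixed-$t$ block-encoding, I would introduce a position register $p$ of $\lceil\log_2 d\rceil$ qubits, a column register $c$ of $\log_2 N$ qubits, a value register $v$ of $n_p$ qubits, and a single amplitude qubit $r$. Starting from $|t\rangle_d|x\rangle_s|0\rangle_p|0\rangle_c|0\rangle_v|0\rangle_r$, the steps are: (i) prepare $d^{-1/2}\sum_{j=0}^{d-1}|j\rangle_p$ by Hadamards; (ii) one query to $O_f$ to write $|f_t(x,j)\rangle$ into $c$; (iii) one query to $O_H$ to write $|H_{x,f_t(x,j)}(t)\rangle$ into $v$; (iv) a value-controlled rotation on $r$ loading the amplitude $H_{xy}(t)/\|H\|_{\max}$, with signs/phases handled via an additional controlled phase gate; (v) uncompute $v$ with one query to $O_H^\dagger$; (vi) SWAP $s$ with $c$; (vii) uncompute $c$ with one query to $O_f^\dagger$ and uncompute $p$ with Hadamards. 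A direct computation then shows that
\begin{equation}
(\bra{0}_{p,c,v,r}\otimes \openone_{d,s})\, U \,(\ket{0}_{p,c,v,r}\otimes \openone_{d,s}) = \sum_t \ket{t}\bra{t}_d \otimes \frac{H(t)}{d\,\|H\|_{\max}},
\end{equation}
which is exactly~\cref{eq:sparse_Ham-T}. Counting the resources: the queries to $O_f, O_H$ and their daggers are each $O(1)$; the Hadamards on $p$ and the SWAP cost $\mathcal{O}(\log N)$; and the controlled rotation in step (iv) converts an $n_p$-bit binary value to a single-qubit rotation angle, which costs $\mathcal{O}(\operatorname{poly}(n_p))$ primitive gates.

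The one point requiring care, and what I would flag as the main obstacle, is the symmetrization argument ensuring that after the SWAP-and-uncompute in steps (vi)--(vii) the ancilla-projected operator is genuinely $H(t)/(d\|H\|_{\max})$ rather than some related non-Hermitian object. This relies on Hermiticity of $H(t)$ and on $O_f$ enumerating a symmetric sparsity pattern (i.e.\ $H_{xy}(t)\neq 0\iff H_{yx}(t)\neq 0$), so that both the ``forward'' column-oracle call in step (ii) and the uncomputation in step (vii) can consistently be made from either side of the SWAP. This is the same symmetrization that appears in the time-independent sparse block-encoding; once one confirms it at fixed $t$, the time-dependent version is immediate since every subroutine is applied coherently in $t$.
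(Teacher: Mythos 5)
Your high-level reduction is the same as the paper's: the oracles already carry the time index, so one constructs the standard sparse block-encoding at fixed $t$ and lets the $\ket{t}_d$ register ride along coherently through every subroutine; that part is sound and is exactly how the paper lifts the time-independent construction. The gap is in the fixed-$t$ circuit itself, specifically steps (iv), (vi) and (vii). First, loading the full amplitude $H_{xy}(t)/\|H\|_{\max}$ in a single controlled rotation is not what the construction needs: the matrix element must be split as a square root between a ``column'' preparation and a ``row'' preparation, each contributing $\sqrt{H_{xy}(t)/\|H\|_{\max}}/\sqrt{d}$, so that it is the \emph{overlap} of the two prepared states that reproduces $H_{xy}(t)/(d\|H\|_{\max})$. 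Second, and more seriously, your uncomputation cannot return the ancillas to $\ket{0}$. After the SWAP, the position register holds $\ket{j}$, the index of $y=f_t(x,j)$ within row $x$, and the column register holds $\ket{x}$. Applying $O_f^\dagger$ with the new system value $y$ maps $\ket{x}$ to $\ket{j'}$, where $j'$ is the index of $x$ within row $y$; even for a symmetric sparsity pattern $j'\neq j$ in general, so neither register ends in $\ket{0}$, Hadamards do not fix this, and the projection $(\bra{0}_a\otimes\openone_s)(\cdot)(\ket{0}_a\otimes\openone_s)$ annihilates most of the amplitude rather than producing $H(t)/(d\|H\|_{\max})$. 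Hermiticity of $H$ and symmetry of the sparsity pattern, which you flag as the resolution, do not rescue this.

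The paper's proof sidesteps the problem by never asking the neighbor-index registers to return to $\ket{0}$. It defines two state-preparation maps, $U_{\rm col}\ket{t}_d\ket{k}_s\ket{0}_a=\ket{t}_d\ket{\chi_k(t)}$ and an analogous $U_{\rm row}$ producing $\ket{\bar{\chi}_j(t)}$, each built from a uniform superposition over the $d$ neighbor indices, one $O_f$ call, one $O_H$ call and its inverse, and a square-root amplitude rotation, with \emph{orthogonal} failure flags ($\ket{1}$ versus $\ket{2}$) so that the rejected amplitude in one never overlaps the rejected amplitude in the other; then $\operatorname{HAM-T}:=U_{\rm row}^\dagger\cdot U_{\rm col}$, and the block-encoding identity follows from $\braket{\bar{\chi}_j(t)}{\chi_k(t)}=H_{j,k}(t)/(d\|H\|_{\max})$, with no requirement that any intermediate register be restored to $\ket{0}$. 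If you replace your steps (iv)--(vii) with this two-sided construction (your SWAP becomes the final step of $U_{\rm row}$), the rest of your argument, including the $O(1)$ query count, the $\mathcal{O}(\operatorname{poly}(n_p)+\log N)$ gate count, and the observation that the time register adds no overhead, goes through.
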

\begin{proof}
	The proof closely mimics the discussion by~\cite{Low2017USA}, but we formally restate the result here to make it manifestly applicable to the time-dependent case.  Let $U_{\rm col},U_{\rm row}$ be the following unitary transformations
	\begin{align}
	\label{eq:sparse_states}
	U_{\rm col} \ket{t}_d\ket{k}_{s}\ket{0}_{a} &:= \ket{t}_d\ket{\chi_{k}(t)}= \frac{1}{\sqrt{d}} \sum_{p\in r_k} \ket{t}_d\ket{k}_s\ket{p}_{a_1} \left(\sqrt{\frac{H_{p,k}(t)}{\|H\|_{\max}}}\ket{0}_{a_2} + \sqrt{1-\frac{|H_{k,p}(t)|}{\|H(t)\|_{\max}}}\ket{1}_{a_2} \right),
	\\\nonumber
	\bra{0}_{a}\bra{j}_s\bra{t}_d U_{\rm row}^\dagger  &:= \bra{\bar{\chi}_{j}(t)}\bra{t}_d= \frac{1}{\sqrt{d}} \sum_{q\in r_j}  \left(\sqrt{\frac{H_{j,q}(t)}{\|H\|_{\max}}}\bra{0}_{a_2} + \sqrt{1-\frac{|H_{q,j}(t)|}{\|H\|_{\max}}}\bra{2}_{a_2} \right)\bra{j}_{a_1}\bra{q}_{s}\bra{t}_d,
	\\\nonumber
	\braket{\bar{\chi}_{j}(t)}{\chi_{k}(t)}&=\frac{H_{j,k}}{d\|H\|_{\rm max}}.
	\end{align}
	Let $\ket{\psi} = \sum_{t,k} a_{t,k} \ket{t}_d\ket{k}_s$.  We then have that
	\begin{align}
	[\ketbra{0}{0}_{a}\otimes \openone] U_{\rm row}^\dagger \cdot U_{\rm col} \ket{t}_d\ket{\psi}\ket{0}_{a} 
	&= \ket{0}\bra{0}_{a}\otimes\sum_{t',j}(\ket{t'}\bra{t'}_d\otimes \ket{j}\bra{j}_{s} )\sum_{t,k} a_{t,k}U^{\dagger}_{\rm row} \cdot U_{\rm col}\ket{t}_d \ket{k}_s\ket{0}_{a}.
	\\\nonumber
	&=\ket{0}_{a}\sum_{t',j}\ket{t'}_d\ket{j}_s \sum_{t,k} a_{t,k}(\bra{\bar{\chi}_{j}(t')}\bra{t'}_d)(\ket{t}_d  \ket{\chi_{k}(t)})
	\\\nonumber
	&= \sum_{j,k} a_{t,k} \ket{0}_a\ket{t}_d\ket{j}_s \frac{H_{j,k}(t)}{d\|H\|_{\max}}= \frac{\ket{0}_sH(t)\ket{\psi}}{d\|H\|_{\max}} .
	\end{align}
	As this result holds for any input state $\ket{\psi}$, the choice $\operatorname{HAM-T}=U_{\rm row}^\dagger\cdot U_{\rm col}$ satisfies~\cref{eq:sparse_Ham-T}.

	The query cost then follows from the fact that $U_{\rm col}$ and $U_{\rm row}^\dagger$ can be implemented using $O(1)$ calls to $O_f$.  In particular, $U_{\rm col}$ can be prepared in the following steps:
	\begin{align}
	\ket{t}\ket{k}\ket{0} &\mapsto \ket{t}\ket{k} \frac{1}{\sqrt{d}}\sum_{\ell=1}^d\ket{\ell} \ket{0}\nonumber\\
	&\stackrel{\mapsto}{O_f}\ket{t}\ket{k} \frac{1}{\sqrt{d}}\sum_{p\in r_k}\ket{p} \ket{0}\nonumber\\
	&\stackrel{\mapsto}{O_H}\ket{t}\ket{k} \frac{1}{\sqrt{d}}\sum_{p\in r_k}\ket{p}\ket{H_{k,p}(t)} \ket{0}\nonumber\\
	&~\stackrel{\mapsto}{}\ket{t}\ket{k} \frac{1}{\sqrt{d}}\sum_{p\in r_k}\ket{p}\ket{H_{k,p}(t)}\left(\sqrt{\frac{H^{*}_{k,p}(t)}{\|H\|_{\max}}}\ket{0} + \sqrt{1-\frac{|H_{k,p}(t)|}{\|H\|_{\max}}}\ket{1} \right) \nonumber\\
	&\stackrel{\mapsto}{O_H^{-1}} \ket{t}\ket{k} \frac{1}{\sqrt{d}}\sum_{p\in r_k}\ket{p}\left(\sqrt{\frac{H_{p,k}(t)}{\|H\|_{\max}}}\ket{0} + \sqrt{1-\frac{|H_{p,k}(t)|}{\|H\|_{\max}}}\ket{1} \right)\ket{0}= U_{\rm col}\ket{t}\ket{k} \ket{0}.
	\end{align}
	Therefore accessing $U_{\rm col}$ unitaries requires $O(1)$ queries to the fundamental oracles as claimed, along with a arithmetic circuit, of size polynomial in the number of bits used to represent $\ket{H_{k,p}(t)}$, for computing trigonometric functions of the magnitudes of the complex-valued matrix elements as well as their arguments.  The argument that $U_{\rm row}^\dagger$ requires $O(1)$ queries follows in exactly the same manner, but with an additional final step that swaps the $s$ and $a_1$ registers.
	\end{proof}

Once  $\operatorname{HAM-T}$, is obtained, the complexity of simulation follows directly from previous results.
\begin{theorem}[Simulation of sparse time-dependent Hamiltonians]
	\label{Cor:sparse_time_dependent_simulation}
	Let a time-dependent $d$-sparse Hamiltonian $H(s) : [0,t] \rightarrow \mathbb{C}^{N\times N}$  have average rate-of-change $\langle\|\dot H\|\rangle=\frac{1}{t}\int^t_{0} \left\|\frac{\mathrm{d} H(s)}{ \mathrm{d} s}\right\| \mathrm{d}s$, and be encoded in the oracles $O_H$ and $O_f$ from~\cref{eq:sparse_oracle_H,eq:sparse_oracle_f} to $n_p$ bits of precision. Then the time-ordered evolution operator  $\mathcal{T}\left[ e^{-i\int_0^t H(s) \mathrm{d} s}\right]$ may be approximated with error $\epsilon$ and probability of failure $\mathcal{O}(\epsilon)$ using 
	\begin{enumerate}
		\item $\text{Queries}$ to $O_H$ and $O_f$:$\;\mathcal{O}\left(d\|H\|_{\max}t\frac{\log{(d\|H\|_{\max}t/\epsilon)}}{\log\log{(d\|H\|_{\max}t/\epsilon)}}\right)$.
		\item $\text{Qubits}:\;\mathcal{O}\left(n_p+\log{\left(\frac{Nt}{d\|H\|_{\max}\epsilon}\left({\langle\|\dot{H}\|\rangle}+{[d\|H\|_{\max}]^2}\right)\right)}\right)$.
		\item $\text{Primitive gates}:\;\mathcal{O}\left(d\|H\|_{\max}t\left(\operatorname{poly}(n_p)+ \log{\left(\frac{Nt}{d\|H\|_{\max}\epsilon}\left({\langle\|\dot{H}\|\rangle}+{[d\|H\|_{\max}]^2}\right)\right)} \frac{\log{(d\|H\|_{\max}t/\epsilon)}}{\log\log{(d\|H\|_{\max}t/\epsilon)}}\right)\right)$.
	\end{enumerate}
\end{theorem}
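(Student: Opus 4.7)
The proof is a direct composition of the sparse block-encoding construction in \cref{thm:sparse_Ham-t} with the multi-segment truncated Dyson series algorithm of \cref{Thm:Compressed_TDS_Uniform}. The plan is to realize the $\operatorname{HAM-T}$ oracle required by the latter using only $O(1)$ queries to $O_H$ and $O_f$, and then to invoke the resulting complexity bounds with the parameter $\alpha$ set to $d\|H\|_{\max}$.

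First I would check that $\alpha := d\|H\|_{\max}$ is a valid upper bound on $\max_s \|H(s)\|$; this follows from the standard bound $\|H(s)\| \le d\|H(s)\|_{\max}$ for any Hermitian $d$-sparse matrix. Then, for each of the $L = \Theta(d\|H\|_{\max} t)$ segments $[t_{j-1},t_j]$ of duration $\tau = 1/(2\alpha)$ prescribed by \cref{Thm:Compressed_TDS_Uniform}, I would invoke \cref{thm:sparse_Ham-t} (with the time variable shifted into the segment) to obtain $\operatorname{HAM-T}_j$ at a per-query cost of $O(1)$ calls to $O_H, O_f$ and $\mathcal{O}(\operatorname{poly}(n_p) + \log N)$ additional primitive gates. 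The discretization $M$ per segment is then fixed by \cref{Thm:Compressed_TDS_Uniform} to $M \in \mathcal{O}\bigl(\tfrac{t}{\alpha\epsilon}(\langle\|\dot H\|\rangle + \max_s \|H(s)\|^2)\bigr)$; upper bounding $\max_s \|H(s)\|^2 \le \alpha^2 = (d\|H\|_{\max})^2$ yields precisely the expression that appears inside the logarithm in the theorem statement.

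Finally I would tally the three resource counts. For queries, \cref{Thm:Compressed_TDS_Uniform} uses $\mathcal{O}(\alpha t \log(\alpha t/\epsilon)/\log\log(\alpha t/\epsilon))$ invocations of the various $\operatorname{HAM-T}_j$, each of which is $O(1)$ queries to $O_H$ and $O_f$. For qubits, the ancilla register of the sparse block-encoding has width $n_a \in \mathcal{O}(n_p + \log N)$ (a column-index register of $\log N$ qubits, a single amplitude qubit, and an $\operatorname{poly}(n_p)$-sized arithmetic workspace); adding the $n_s = \log N$ system qubits, the $\log M$ time index register, and the small Dyson control ancillae gives $\mathcal{O}(n_p + \log N + \log M)$, which collapses into the single logarithm claimed once the expression for $M$ is substituted. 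For primitive gates, the $\operatorname{poly}(n_p) + \log N$ overhead per $\operatorname{HAM-T}$ query combined with the $\mathcal{O}(\alpha t(n_a + \log M)\log(\alpha t/\epsilon)/\log\log(\alpha t/\epsilon))$ bookkeeping term from \cref{Thm:Compressed_TDS_Uniform} yields the stated bound.

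I do not foresee a substantive obstacle: the argument is essentially a reduction. The only points requiring care are choosing the arithmetic precision $n_p$ large enough that the approximation error in computing $\sqrt{H_{j,k}(t)/\|H\|_{\max}}$ inside $U_{\mathrm{col}}$ and $U_{\mathrm{row}}^\dagger$ does not spoil the $\epsilon$ target when amplified over $\mathcal{O}(\alpha t\log/\log\log)$ queries (only polylogarithmic $n_p$ is needed, which is already absorbed into $\operatorname{poly}(n_p)$), and verifying that the $\mathcal{O}(\epsilon)$ failure probability from \cref{Thm:Compressed_TDS_Uniform} carries over verbatim, which it does because the block-encoding produced by \cref{thm:sparse_Ham-t} is exact up to this arithmetic truncation.
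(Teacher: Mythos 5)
Your proposal is correct and follows essentially the same route as the paper: synthesize $\operatorname{HAM-T}$ from $O_H,O_f$ via \cref{thm:sparse_Ham-t} at $O(1)$ queries per invocation, set $\alpha=d\|H\|_{\max}$ using $\|H(s)\|\le d\|H(s)\|_{\max}$, and read off the query, qubit, and gate counts from the multi-segment truncated Dyson series result with the error rescaled by the $\Theta(\alpha t)$ segment count. The paper's proof is terser but makes exactly this reduction, including your observation that $\log N+\log M$ collapses into the single logarithm in the stated qubit bound.
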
 
\begin{proof}
From~\cref{Thm:Compressed_TDS}, the number of qubits required is $\mathcal{O}(n_s+n_a+n_d+\log\log{(1/\epsilon)})$. Values for these parameters are obtained from the construction of $\operatorname{HAM-T}$ in~\cref{thm:sparse_Ham-t}, which also requires an additional $n_p$ qubits for the bits of precision to which $H$ is encoded. In this construction, $n_a\in{\mathcal{O}}\left(n_s\right)\in{\mathcal{O}}\left(\log{(N)}\right)$. $n_d$ is obtained from the number of time-discretization points required by~\cref{Thm:Compressed_TDS}. Simulation for time $t$ is implemented by simulating segments of duration $\Theta{(\alpha^{-1})}$. As there are $\Theta(t\alpha)$ segments, we rescale $\epsilon\rightarrow \Theta(\epsilon/(t\alpha))$.
\end{proof}
Compared to prior art~\cite{Berry2014Exponential}, this is a quadratic improvement in sparsity $d$. Furthermore, instead of scaling with the worst-case $\mathcal{O}\left(\log{(\max_s\|\dot{H}(s)\|)}\right)$, we obtain scaling with average rate-of-change $\langle\|\dot H\|\rangle$. If we further assume that the computed matrix elements $H_{j,k}$ are not exact, the number of bit of precision scales like $n_p\in{\mathcal{O}}\left(\log{(\|H\|t/\epsilon)}\right)$~\cite{Berry2015Hamiltonian}. Note several generic improvements to~\cref{Cor:sparse_time_dependent_simulation} are possible, but will not be pursued further as they are straightforward. For instance, if $\|H(t)\|_{\rm max}$ as a function of time is known, we may use step sizes of varying size by encoding each segment $t\in[t_{j},t_{j+1}]$ with the largest $\max_{t\in[t_{j},t_{j+1}]}\|H(t)\|_{\rm max}$, rather than the worst-case $\max_{t}\|H(t)\|_{\rm max}$. 

\subsection{Simulation of sparse time-independent Hamiltonians in the interaction picture}
\label{sec:sparse_diagonally_dominant}
We now turn our attention to time-independent $d$-sparse Hamiltonians $H=A+B$ where $A$ is diagonal and $B_{k,k}=0$ for all $k$. In particular, we consider the case of diagonally dominant Hamiltonians, where $\|A\| \ge d \|B\|_{\rm max}$. Given norms for each of these terms $\|A\|$ and $\|B\|_{\rm max}$, it is straightforward to simulate time-evolution $e^{-iHt}$ in the Schr\"odinger picture. For instance, using the truncated Taylor series approach in~\cref{eq:sch_pic_sim_cost}, one obtains a query complexity of $\mathcal{O}\left(t(d\|B\|_{\rm max}+\|A\|)\operatorname{polylog}(t, d,\|A\|,\|B\|_{\rm max},\epsilon)\right)$. By instead simulating $H_I(t)=e^{iAt}Be^{-iAt}$ in the interaction picture, the dependence on $\|A\|$ can be removed, which is particularly useful in cases of strong diagonal dominance $\|A\| \ge d \|B\|_{\rm max}$, of which the Hubbard model with long-ranged interactions in~\cref{sec:Hubbard} is an example. Similar to our results for time-dependent sparse Hamiltonian simulation in~\cref{sec:spase_ham_sim_time_dependent}, this is easily proven by mapping the input oracles $O_H$ and $O_f$ for matrix values and positions to the oracles of~\cref{thm:int_pic_sim} for the more general result.
\begin{theorem}[Simulation of sparse diagonally dominant Hamiltonians]
	\label{thm:sparse_diagonal_dominant}
	Let a time-independent $d$-sparse Hamiltonian $H=A+B\in\mathbb{C}^{N\times N}$ be encoded in the oracles $O_H$ and $O_f$ from~\cref{eq:sparse_oracle_H,eq:sparse_oracle_f} to $n_p$ bits of precision, and be characterized by spectral norm spectral norm $\|A\|$ for the diagonal component and max-norm $\|B\|_{\rm max}$ for the off-diagonal component. Let $\alpha_B = d\|B\|_{\rm max}$. Then the time-evolution operator $e^{-iHt}$ may be approximated with error $\epsilon$ using
\begin{enumerate}
	\item Queries to $O_H$ and $O_f$: $\mathcal{O}\left(\alpha_B t\frac{\log{(\alpha_B t/\epsilon)}}{\log\log{(\alpha_B t /\epsilon)}}\right)$.
	\item Qubits: $\mathcal{O}\left(n_p+\log{(N)}+\log{\left(\frac{ t}{\epsilon}\left({\|A\|}+\alpha_B\right)\right)}\right)$.
	\item Primitive gates: $\mathcal{O}\left(\alpha_B t\left(\log{(N)} + \operatorname{poly}(n_p)\log{\left(\frac{ t}{\epsilon}\left({\|A\|}+\alpha_B\right)\right)}\right)\frac{\log{(\alpha_B t/\epsilon)}}{\log\log{(\alpha_B t/\epsilon)}}\right)$.
\end{enumerate}

\end{theorem}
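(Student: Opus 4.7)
The strategy is to reduce the claim to Theorem~\ref{thm:int_pic_sim} by supplying the two ingredients that theorem requires, extracted directly from the sparse oracles $O_H, O_f$: a block-encoding $O_B$ of $B$ with subnormalization $\alpha_B = d\|B\|_{\max}$, and an efficient (in fact fast-forwardable) implementation of $e^{-iAs}$.

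First, I would construct $O_B$ by a minor adaptation of Lemma~\ref{thm:sparse_Ham-t}. The off-diagonal part $B$ is itself $d$-sparse with $\|B\|_{\max} \le \|H\|_{\max}$, so the same $U_{\mathrm{col}}, U_{\mathrm{row}}$ template applies; after querying $O_H$ to load $H_{k,p}$, a reversible comparator zeroes the loaded value whenever $p=k$, turning the query into an effective query to $B$. The resulting $O_B$ satisfies $(\bra{0}_a\otimes\openone)O_B(\ket{0}_a\otimes\openone)=B/\alpha_B$ with $O(1)$ queries to $(O_H,O_f)$ and $C_B\in\mathcal{O}(\mathrm{poly}(n_p)+\log N)$ primitive gates, where the $\log N$ term covers the comparator and index arithmetic.

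Second, since $A$ is diagonal, it is fast-forwardable. To apply $e^{-iAs}$ on a basis state $\ket{k}$, query $O_H$ once to write $A_{k,k}=H_{k,k}$ into an $n_p$-bit ancilla, apply a binary-controlled phase $e^{-iA_{k,k}s}$ using $\mathcal{O}(n_p)$ single-qubit $Z$-rotations, then uncompute with $O_H^\dagger$. This gives $C_{e^{-iAs}}[\epsilon]\in\mathcal{O}(\mathrm{poly}(n_p)\,\mathrm{polylog}(1/\epsilon))$ essentially independent of $s$, which fits (and beats) the polynomial-in-$s$ hypothesis of Theorem~\ref{thm:int_pic_sim}.

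Third, I would substitute these into Theorem~\ref{thm:int_pic_sim} with $\alpha_A=\|A\|$ and $\alpha_B=d\|B\|_{\max}$. The theorem delivers $\tilde{\mathcal{O}}(\alpha_B t)$ queries to $O_B$, each of which uses $O(1)$ queries to $(O_H,O_f)$, yielding the stated query complexity. Plugging $C_B$ and $C_{e^{-iA/\alpha_B}}[\,\cdot\,]$ into the gate expression of Theorem~\ref{thm:int_pic_sim}, the $C_{e^{-iA/\alpha_B}}$ term collapses into the $\log N$ and $\mathrm{poly}(n_p)$ contributions; the qubit count is $n_s=\log N$ together with the $\mathcal{O}(n_p+\log N)$ ancillas of the block-encoding and the $\mathcal{O}(\log(t(\|A\|+\alpha_B)/\epsilon))$ time-discretization register dictated by Corollary~\ref{Thm:Compressed_TDS_Uniform} (noting $\langle\|\dot H_I\|\rangle\le 2\|A\|\|B\|$ and $\max_s\|H_I(s)\|\le\|B\|$, which absorb into the $\|A\|+\alpha_B$ factor inside the logarithm).

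No single step is deep; the main obstacle is bookkeeping the three precision budgets so that they recombine into the claimed $\epsilon$ while contributing only polylogarithmic factors. Specifically, I must allocate the rotation precision $n_p$, the approximation error $\delta$ to which each $\operatorname{HAM\text{-}T}$ query is synthesized inside Corollary~\ref{Thm:Compressed_TDS_Uniform}, and the per-segment tolerance for $e^{-iA\tau}$ (with $\tau\in\mathcal{O}(\alpha_B^{-1})$) so that their cumulative effect over the $\mathcal{O}(\alpha_B t)$ segments remains below $\epsilon$. Choosing each roughly as $\epsilon/\mathrm{poly}(\alpha_B t\log(\|A\|/\epsilon))$ suffices and inflates the gate count only by the $\log(\alpha_B t/\epsilon)/\log\log(\alpha_B t/\epsilon)$ factor already present in the statement.
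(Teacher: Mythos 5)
Your proposal matches the paper's proof essentially step for step: build $O_B$ from $O_H,O_f$ via the Lemma~\ref{thm:sparse_Ham-t} construction restricted to off-diagonal entries, exploit that diagonal $A$ is fast-forwardable with $\mathcal{O}(1)$ queries to $O_H$, and substitute into the interaction-picture simulation theorem. The only detail worth flagging is that the time-controlled evolution $\sum_m\ket{m}\bra{m}\otimes e^{iA\tau m/M}$ inside $\operatorname{HAM-T}$ must reuse a single computation of $\ket{H_{k,k}}$ for all $\mathcal{O}(\log M)$ controlled phases (as the paper notes) so that each $\operatorname{HAM-T}$ query still costs only $\mathcal{O}(1)$ queries to $O_H$; your query-counting implicitly assumes this and the fix is immediate.
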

\begin{proof}
	This follows immediately by combining the query complexity of~\cref{thm:int_pic_sim_query} to $e^{-iA t}$ and $\operatorname{HAM-T}$ that encodes the Hamiltonian $H_I(t)=e^{iAt}Be^{-iAt}$ in the rotating frame, with the query complexity of the approach in~\cref{thm:int_pic_sim} for synthesizing these oracles using the input oracles $O_H$ and $O_f$. One possible decomposition of $\operatorname{HAM-T}$ is	
	\begin{align}
	\operatorname{HAM-T}&= \left(\sum^{M-1}_{m=0}\ket{m}\bra{m}_d \otimes \openone_a\otimes  e^{iA\tau m/M}\right)(\openone_d \otimes O_B)\left(\sum^{M-1}_{m=0}\ket{m}\bra{m}_d \otimes \openone_a\otimes e^{-iA\tau m/M}\right),
	\\\nonumber
	(\bra{0}_a\otimes \openone_s) O_B (\ket{0}_a\otimes \openone_s) & = \frac{B}{\alpha_B},
	\end{align}
	where $\alpha_B = d\|B\|_{\rm max}$, $\tau \in{\mathcal{O}}(\alpha_B^{-1})$ and $M\in{\mathcal{O}}\left( \frac{ t}{\epsilon}\left({\|A\|}+\alpha_B\right)\right)$. Note that with this construction, $(\bra{0}_a\otimes \openone_s) \operatorname{HAM-T} (\ket{0}_a\otimes \openone_s) = \sum_{m=0}^{M-1}\ket{m}\bra{m}_d \otimes \frac{H_I(\tau m/M)}{d\|B\|_{\max}}$. 

First, let us synthesize $\left(\sum^{M-1}_{m=0}\ket{m}\bra{m}_d \otimes \openone_a\otimes  e^{iA\tau m/M}\right)$ using $\mathcal{O}(1)$ queries to the input oracles $O_H$, and $\mathcal{O}\left(\log{(N)}+n_p \log{(M)}\right)$ primitive gates. Since $A$ is diagonal, $e^{-i A t}$ can be simulated for any $t>0$ using only two queries. This is implemented by the following steps:
 \begin{align}
 \ket{k}\ket{0}\ket{0}\ket{0} &
 \mapsto  \ket{k}\ket{k}\ket{0}\ket{0}
 \\\nonumber
 &\underset{O_H}{\mapsto} \ket{k}\ket{k}\ket{H_{k,k}}\ket{0}
 \\\nonumber
 &\mapsto  \ket{k}\ket{k}\ket{H_{k,k}}e^{-i H_{k,k} Z t}\ket{0}
 = e^{-i H_{k,k}t}\ket{k}\ket{k}\ket{H_{k,k}}\ket{0}
 \\\nonumber
 &\underset{O^\dag_H}{\mapsto} e^{-i H_{k,k}t}\ket{k}\ket{k}\ket{0}\ket{0}
 \\\nonumber
 &\rightarrow e^{-i H_{k,k}t}\ket{k}\ket{0}\ket{0}\ket{0}.
 \end{align}
 Step one uses $n_s\in{\mathcal{O}}(\log{(N)})$ $\operatorname{CNOT}$ gates to copy the computational basis state $\ket{k}$. Step three applies $\mathcal{O}(n_p)$ phase rotation with angle controlled by the bits of $\ket{H_{k,k}}$, and the value of $t$, which is given beforehand. Subsequently, $\left(\sum^{M-1}_{m=0}\ket{m}\bra{m}_d \otimes \openone_a\otimes  e^{iA\tau m/M}\right)$ may be implemented by a sequence of rotations with angles increasing in a geometric series, and each controlled by a different qubit in the $d$ register, e.g. controlled-$e^{iA\tau 2^{-1}/M},e^{iA\tau 2^{-2}/M}, e^{iA\tau 2^{-n+p}/M}$. Naively, this requires $\mathcal{O}(\log{(M)})$ queries. However, it is only necessary to compute $\ket{H_{k,k}}$ once as the entire sequence of controlled-phases may be applied after step three. Similarly, it is only necessary to copy the computational basis state $\mathcal{O}(1)$ times.

Second, let us synthesize $O_B$ using $\mathcal{O}(1)$ queries to the input oracles $O_H$ and $O_f$. How this is done should be clear from~\cref{thm:sparse_Ham-t}, by omitting the time-index, and preparing the state in~\cref{eq:sparse_states} when the input indices $k=p$. This has gate complexity $\mathcal{O}(\operatorname{poly}(n_p)+\log{(N)})$.
Thus $e^{-iA \tau}$  and $\operatorname{HAM-T}$ combined have query complexity $\mathcal{O}(1)$ to $O_H$ and $O_f$, and gate complexity $\mathcal{O}(\operatorname{poly}(n_p)\log{(M)}+\log{(N)})$. By substituting into~\cref{thm:int_pic_sim_query}, we obtain the stated results.
\end{proof}

This provides a formal proof that the query complexity of simulating a Hamiltonian, within the interaction picture, is independent of the magnitude of the diagonal elements of the Hamiltonian, up to logarithmic factors.  

\section{Conclusion}
We demonstrate in this work that simulating quantum dynamics within the interaction picture rather than the Schr\"{o}dinger picture can be advantageous. This requires a good time-dependent simulation algorithm -- our formulation of the truncated Dyson series simulation algorithm is rigorous and achieves space savings over the original proposal~\cite{berry2015simulating}. When applied to simulating Hubbard models with long-ranged interactions as well as quantum chemistry within a plane-wave basis, we find that the gate complexity scales with time $t$ as $\widetilde{\mathcal{O}}({N^2t})$ for systems with $N$ sites, assuming that kinetic energy is an extensive property. In the black-box model of time-dependent sparse Hamiltonians, we find that simulation in the truncated Dyson series framework generally reduces the query complexity from $\widetilde{\mathcal{O}}(d^2)$ to $\widetilde{\mathcal{O}}(d)$. Combined with the interaction picture, this also reduces the scaling with respect to the magnitude of diagonal components from linear to logarithmic, which is particularly relevant to the common case of diagonally dominant Hamiltonians.

Some straightforward extensions of our work on time-dependent simulation are possible. For instance, the step-size of our algorithm may be adaptively chosen to scale with the worst-case norm of the Hamiltonian in each segment rather than with worst-case across all segments. Furthermore, the query complexity $\tilde{\mathcal{O}}(d\|H\|_{\rm max})$ of our time-dependent sparse Hamiltonian simulation algorithm may be easily improved in combination with~\cite{Low2017USA} to scale like $\tilde{\mathcal{O}}\left(\sqrt{d \|H\|_\text{max}\|H\|_1}\right)$ if the induced one-norm $\|H\|_1$ of the Hamiltonian is also known beforehand. Our technique of interaction picture simulation is also applicable to many other quantum systems, particularly quantum field theories. Identifying other such physical systems would be of great interest.

More generally, the complexity of time-independent quantum simulation for generic Hamiltonians, given minimal information, appears to be nearly resolved~\cite{Low2016HamSim}. Thus future advancements, such as this work, will likely focus on exploiting the detailed structure of Hamiltonians of interest. The promise of results in similar directions is exemplified by recent work that exploit the geometric locality of interactions~\cite{Haah2018quantum}, or the sizes of different terms in a Hamiltonian~\cite{Hadfield2018divideandconquer}. The challenge will be finding characterizations of Hamiltonians that are sufficiently specific so as to enable a speedup, yet sufficiently general so as to include problems of practical and scientific value.

\section{Acknowledgments}
We thank Isaac Chuang, Jeongwan Haah, Robin Kothari, and Matthias Troyer for insightful discussions.  We further thank Dominic Berry for pointing out important typographic errors in a previous version of the manuscript.

\bibliography{biblio}

\begin{thebibliography}{23}%
\makeatletter
\providecommand \@ifxundefined [1]{%
 \@ifx{#1\undefined}
}%
\providecommand \@ifnum [1]{%
 \ifnum #1\expandafter \@firstoftwo
 \else \expandafter \@secondoftwo
 \fi
}%
\providecommand \@ifx [1]{%
 \ifx #1\expandafter \@firstoftwo
 \else \expandafter \@secondoftwo
 \fi
}%
\providecommand \natexlab [1]{#1}%
\providecommand \enquote  [1]{``#1''}%
\providecommand \bibnamefont  [1]{#1}%
\providecommand \bibfnamefont [1]{#1}%
\providecommand \citenamefont [1]{#1}%
\providecommand \href@noop [0]{\@secondoftwo}%
\providecommand \href [0]{\begingroup \@sanitize@url \@href}%
\providecommand \@href[1]{\@@startlink{#1}\@@href}%
\providecommand \@@href[1]{\endgroup#1\@@endlink}%
\providecommand \@sanitize@url [0]{\catcode `\\12\catcode `\$12\catcode
  `\&12\catcode `\#12\catcode `\^12\catcode `\_12\catcode `\%12\relax}%
\providecommand \@@startlink[1]{}%
\providecommand \@@endlink[0]{}%
\providecommand \url  [0]{\begingroup\@sanitize@url \@url }%
\providecommand \@url [1]{\endgroup\@href {#1}{\urlprefix }}%
\providecommand \urlprefix  [0]{URL }%
\providecommand \Eprint [0]{\href }%
\providecommand \doibase [0]{http://dx.doi.org/}%
\providecommand \selectlanguage [0]{\@gobble}%
\providecommand \bibinfo  [0]{\@secondoftwo}%
\providecommand \bibfield  [0]{\@secondoftwo}%
\providecommand \translation [1]{[#1]}%
\providecommand \BibitemOpen [0]{}%
\providecommand \bibitemStop [0]{}%
\providecommand \bibitemNoStop [0]{.\EOS\space}%
\providecommand \EOS [0]{\spacefactor3000\relax}%
\providecommand \BibitemShut  [1]{\csname bibitem#1\endcsname}%
\let\auto@bib@innerbib\@empty
\bibitem [{\citenamefont {Aharonov}\ and\ \citenamefont
  {Ta-Shma}(2003)}]{Aharonov2003Adiabatic}%
  \BibitemOpen
  \bibfield  {author} {\bibinfo {author} {\bibfnamefont {Dorit}\ \bibnamefont
  {Aharonov}}\ and\ \bibinfo {author} {\bibfnamefont {Amnon}\ \bibnamefont
  {Ta-Shma}},\ }\bibfield  {title} {\enquote {\bibinfo {title} {{Adiabatic
  Quantum State Generation and Statistical Zero Knowledge}},}\ }in\ \href
  {\doibase 10.1145/780542.780546} {\emph {\bibinfo {booktitle} {Proceedings of
  the Thirty-fifth Annual ACM Symposium on Theory of Computing}}},\ \bibinfo
  {series and number} {STOC '03}\ (\bibinfo  {publisher} {ACM},\ \bibinfo
  {address} {New York, NY, USA},\ \bibinfo {year} {2003})\ pp.\ \bibinfo
  {pages} {20--29}\BibitemShut {NoStop}%
\bibitem [{\citenamefont {Berry}\ \emph {et~al.}(2007)\citenamefont {Berry},
  \citenamefont {Ahokas}, \citenamefont {Cleve},\ and\ \citenamefont
  {Sanders}}]{Berry2007Efficient}%
  \BibitemOpen
  \bibfield  {author} {\bibinfo {author} {\bibfnamefont {Dominic~W}\
  \bibnamefont {Berry}}, \bibinfo {author} {\bibfnamefont {Graeme}\
  \bibnamefont {Ahokas}}, \bibinfo {author} {\bibfnamefont {Richard}\
  \bibnamefont {Cleve}}, \ and\ \bibinfo {author} {\bibfnamefont {Barry~C}\
  \bibnamefont {Sanders}},\ }\bibfield  {title} {\enquote {\bibinfo {title}
  {{Efficient Quantum Algorithms for Simulating Sparse Hamiltonians}},}\ }\href
  {\doibase 10.1007/s00220-006-0150-x} {\bibfield  {journal} {\bibinfo
  {journal} {Commun. Math. Phys.}\ }\textbf {\bibinfo {volume} {270}},\
  \bibinfo {pages} {359--371} (\bibinfo {year} {2007})}\BibitemShut {NoStop}%
\bibitem [{\citenamefont {Wiebe}\ \emph {et~al.}(2010)\citenamefont {Wiebe},
  \citenamefont {Berry}, \citenamefont {H{\o}yer},\ and\ \citenamefont
  {Sanders}}]{Wiebe2010}%
  \BibitemOpen
  \bibfield  {author} {\bibinfo {author} {\bibfnamefont {Nathan}\ \bibnamefont
  {Wiebe}}, \bibinfo {author} {\bibfnamefont {Dominic}\ \bibnamefont {Berry}},
  \bibinfo {author} {\bibfnamefont {Peter}\ \bibnamefont {H{\o}yer}}, \ and\
  \bibinfo {author} {\bibfnamefont {Barry~C}\ \bibnamefont {Sanders}},\
  }\bibfield  {title} {\enquote {\bibinfo {title} {{Higher order decompositions
  of ordered operator exponentials}},}\ }\href
  {http://stacks.iop.org/1751-8121/43/i=6/a=065203} {\bibfield  {journal}
  {\bibinfo  {journal} {Journal of Physics A: Mathematical and Theoretical}\
  }\textbf {\bibinfo {volume} {43}},\ \bibinfo {pages} {65203} (\bibinfo {year}
  {2010})}\BibitemShut {NoStop}%
\bibitem [{\citenamefont {Childs}\ and\ \citenamefont
  {Wiebe}(2012)}]{Childs2012}%
  \BibitemOpen
  \bibfield  {author} {\bibinfo {author} {\bibfnamefont {Andrew~M}\
  \bibnamefont {Childs}}\ and\ \bibinfo {author} {\bibfnamefont {Nathan}\
  \bibnamefont {Wiebe}},\ }\bibfield  {title} {\enquote {\bibinfo {title}
  {{Hamiltonian Simulation Using Linear Combinations of Unitary Operations}},}\
  }\href {http://dl.acm.org/citation.cfm?id=2481569.2481570} {\bibfield
  {journal} {\bibinfo  {journal} {Quantum Information {\&} Computation}\
  }\textbf {\bibinfo {volume} {12}},\ \bibinfo {pages} {901--924} (\bibinfo
  {year} {2012})}\BibitemShut {NoStop}%
\bibitem [{\citenamefont {Berry}\ and\ \citenamefont
  {Childs}(2012)}]{Berry2012}%
  \BibitemOpen
  \bibfield  {author} {\bibinfo {author} {\bibfnamefont {Dominic~W}\
  \bibnamefont {Berry}}\ and\ \bibinfo {author} {\bibfnamefont {Andrew~M}\
  \bibnamefont {Childs}},\ }\bibfield  {title} {\enquote {\bibinfo {title}
  {{Black-box {H}amiltonian Simulation and Unitary Implementation}},}\ }\href
  {http://dl.acm.org/citation.cfm?id=2231036.2231040} {\bibfield  {journal}
  {\bibinfo  {journal} {Quantum Info. Comput.}\ }\textbf {\bibinfo {volume}
  {12}},\ \bibinfo {pages} {29--62} (\bibinfo {year} {2012})}\BibitemShut
  {NoStop}%
\bibitem [{\citenamefont {Berry}\ \emph {et~al.}(2014)\citenamefont {Berry},
  \citenamefont {Childs}, \citenamefont {Cleve}, \citenamefont {Kothari},\ and\
  \citenamefont {Somma}}]{Berry2014Exponential}%
  \BibitemOpen
  \bibfield  {author} {\bibinfo {author} {\bibfnamefont {Dominic~W}\
  \bibnamefont {Berry}}, \bibinfo {author} {\bibfnamefont {Andrew~M}\
  \bibnamefont {Childs}}, \bibinfo {author} {\bibfnamefont {Richard}\
  \bibnamefont {Cleve}}, \bibinfo {author} {\bibfnamefont {Robin}\ \bibnamefont
  {Kothari}}, \ and\ \bibinfo {author} {\bibfnamefont {Rolando~D}\ \bibnamefont
  {Somma}},\ }\bibfield  {title} {\enquote {\bibinfo {title} {{Exponential
  improvement in precision for simulating sparse Hamiltonians}},}\ }in\ \href
  {\doibase 10.1145/2591796.2591854} {\emph {\bibinfo {booktitle} {Proceedings
  of the 46th Annual ACM Symposium on Theory of Computing - STOC '14}}},\
  \bibinfo {series and number} {STOC '14}\ (\bibinfo  {publisher} {ACM Press},\
  \bibinfo {address} {New York, New York, USA},\ \bibinfo {year} {2014})\ pp.\
  \bibinfo {pages} {283--292}\BibitemShut {NoStop}%
\bibitem [{\citenamefont {Berry}\ \emph
  {et~al.}(2015{\natexlab{a}})\citenamefont {Berry}, \citenamefont {Childs},\
  and\ \citenamefont {Kothari}}]{Berry2015Hamiltonian}%
  \BibitemOpen
  \bibfield  {author} {\bibinfo {author} {\bibfnamefont {Dominic~W}\
  \bibnamefont {Berry}}, \bibinfo {author} {\bibfnamefont {Andrew~M}\
  \bibnamefont {Childs}}, \ and\ \bibinfo {author} {\bibfnamefont {Robin}\
  \bibnamefont {Kothari}},\ }\bibfield  {title} {\enquote {\bibinfo {title}
  {{Hamiltonian Simulation with Nearly Optimal Dependence on all
  Parameters}},}\ }in\ \href {\doibase 10.1109/FOCS.2015.54} {\emph {\bibinfo
  {booktitle} {2015 IEEE 56th Annual Symposium on Foundations of Computer
  Science}}},\ \bibinfo {series and number} {FOCS '15}\ (\bibinfo  {publisher}
  {IEEE},\ \bibinfo {address} {Washington, DC, USA},\ \bibinfo {year} {2015})\
  pp.\ \bibinfo {pages} {792--809}\BibitemShut {NoStop}%
\bibitem [{\citenamefont {Berry}\ \emph
  {et~al.}(2015{\natexlab{b}})\citenamefont {Berry}, \citenamefont {Childs},
  \citenamefont {Cleve}, \citenamefont {Kothari},\ and\ \citenamefont
  {Somma}}]{berry2015simulating}%
  \BibitemOpen
  \bibfield  {author} {\bibinfo {author} {\bibfnamefont {Dominic~W}\
  \bibnamefont {Berry}}, \bibinfo {author} {\bibfnamefont {Andrew~M}\
  \bibnamefont {Childs}}, \bibinfo {author} {\bibfnamefont {Richard}\
  \bibnamefont {Cleve}}, \bibinfo {author} {\bibfnamefont {Robin}\ \bibnamefont
  {Kothari}}, \ and\ \bibinfo {author} {\bibfnamefont {Rolando~D}\ \bibnamefont
  {Somma}},\ }\bibfield  {title} {\enquote {\bibinfo {title} {{Simulating
  {H}amiltonian Dynamics with a {T}runcated Taylor Series}},}\ }\href {\doibase
  10.1103/PhysRevLett.114.090502} {\bibfield  {journal} {\bibinfo  {journal}
  {Phys. Rev. Lett.}\ }\textbf {\bibinfo {volume} {114}},\ \bibinfo {pages}
  {90502} (\bibinfo {year} {2015}{\natexlab{b}})}\BibitemShut {NoStop}%
\bibitem [{\citenamefont {Novo}\ and\ \citenamefont
  {Berry}(2017)}]{Novo2016corrected}%
  \BibitemOpen
  \bibfield  {author} {\bibinfo {author} {\bibfnamefont {Leonardo}\
  \bibnamefont {Novo}}\ and\ \bibinfo {author} {\bibfnamefont {Dominic~W}\
  \bibnamefont {Berry}},\ }\bibfield  {title} {\enquote {\bibinfo {title}
  {{Improved {H}amiltonian simulation via a truncated {T}aylor series and
  corrections}},}\ }\href {http://dl.acm.org/citation.cfm?id=3179553.3179558}
  {\bibfield  {journal} {\bibinfo  {journal} {Quantum Info. Comput.}\ }\textbf
  {\bibinfo {volume} {17}},\ \bibinfo {pages} {623} (\bibinfo {year}
  {2017})}\BibitemShut {NoStop}%
\bibitem [{\citenamefont {Low}\ and\ \citenamefont
  {Chuang}(2016)}]{Low2016qubitization}%
  \BibitemOpen
  \bibfield  {author} {\bibinfo {author} {\bibfnamefont {Guang~Hao}\
  \bibnamefont {Low}}\ and\ \bibinfo {author} {\bibfnamefont {Isaac~L}\
  \bibnamefont {Chuang}},\ }\bibfield  {title} {\enquote {\bibinfo {title}
  {{Hamiltonian Simulation by Qubitization}},}\ }\href
  {https://arxiv.org/abs/1610.06546 http://arxiv.org/abs/1610.06546} {\bibfield
   {journal} {\bibinfo  {journal} {arXiv preprint arXiv:1610.06546}\ }
  (\bibinfo {year} {2016})},\ \Eprint {http://arxiv.org/abs/1610.06546}
  {arXiv:1610.06546} \BibitemShut {NoStop}%
\bibitem [{\citenamefont {Low}\ and\ \citenamefont
  {Chuang}(2017{\natexlab{a}})}]{Low2016HamSim}%
  \BibitemOpen
  \bibfield  {author} {\bibinfo {author} {\bibfnamefont {Guang~Hao}\
  \bibnamefont {Low}}\ and\ \bibinfo {author} {\bibfnamefont {Isaac~L.}\
  \bibnamefont {Chuang}},\ }\bibfield  {title} {\enquote {\bibinfo {title}
  {{Optimal Hamiltonian Simulation by Quantum Signal Processing}},}\ }\href
  {\doibase 10.1103/PhysRevLett.118.010501} {\bibfield  {journal} {\bibinfo
  {journal} {Physical Review Letters}\ }\textbf {\bibinfo {volume} {118}},\
  \bibinfo {pages} {10501} (\bibinfo {year} {2017}{\natexlab{a}})},\ \Eprint
  {http://arxiv.org/abs/1606.02685} {arXiv:1606.02685} \BibitemShut {NoStop}%
\bibitem [{\citenamefont {Low}\ and\ \citenamefont
  {Chuang}(2017{\natexlab{b}})}]{Low2017USA}%
  \BibitemOpen
  \bibfield  {author} {\bibinfo {author} {\bibfnamefont {Guang~Hao}\
  \bibnamefont {Low}}\ and\ \bibinfo {author} {\bibfnamefont {Isaac~L}\
  \bibnamefont {Chuang}},\ }\bibfield  {title} {\enquote {\bibinfo {title}
  {{Hamiltonian Simulation by Uniform Spectral Amplification}},}\ }\href
  {https://arxiv.org/abs/1707.05391 http://arxiv.org/abs/1707.05391} {\bibfield
   {journal} {\bibinfo  {journal} {arXiv preprint arXiv:1707.05391}\ }
  (\bibinfo {year} {2017}{\natexlab{b}})},\ \Eprint
  {http://arxiv.org/abs/1707.05391} {arXiv:1707.05391} \BibitemShut {NoStop}%
\bibitem [{\citenamefont {Whitfield}\ \emph {et~al.}(2011)\citenamefont
  {Whitfield}, \citenamefont {Biamonte},\ and\ \citenamefont
  {Aspuru-Guzik}}]{whitfield2011simulation}%
  \BibitemOpen
  \bibfield  {author} {\bibinfo {author} {\bibfnamefont {James~D}\ \bibnamefont
  {Whitfield}}, \bibinfo {author} {\bibfnamefont {Jacob}\ \bibnamefont
  {Biamonte}}, \ and\ \bibinfo {author} {\bibfnamefont {Al{\'{a}}n}\
  \bibnamefont {Aspuru-Guzik}},\ }\bibfield  {title} {\enquote {\bibinfo
  {title} {{Simulation of electronic structure Hamiltonians using quantum
  computers}},}\ }\href {\doibase 10.1080/00268976.2011.552441} {\bibfield
  {journal} {\bibinfo  {journal} {Molecular Physics}\ }\textbf {\bibinfo
  {volume} {109}},\ \bibinfo {pages} {735--750} (\bibinfo {year}
  {2011})}\BibitemShut {NoStop}%
\bibitem [{\citenamefont {Babbush}\ \emph {et~al.}(2018)\citenamefont
  {Babbush}, \citenamefont {Wiebe}, \citenamefont {McClean}, \citenamefont
  {McClain}, \citenamefont {Neven},\ and\ \citenamefont
  {Chan}}]{babbush2017low}%
  \BibitemOpen
  \bibfield  {author} {\bibinfo {author} {\bibfnamefont {Ryan}\ \bibnamefont
  {Babbush}}, \bibinfo {author} {\bibfnamefont {Nathan}\ \bibnamefont {Wiebe}},
  \bibinfo {author} {\bibfnamefont {Jarrod}\ \bibnamefont {McClean}}, \bibinfo
  {author} {\bibfnamefont {James}\ \bibnamefont {McClain}}, \bibinfo {author}
  {\bibfnamefont {Hartmut}\ \bibnamefont {Neven}}, \ and\ \bibinfo {author}
  {\bibfnamefont {Garnet Kin-Lic}\ \bibnamefont {Chan}},\ }\bibfield  {title}
  {\enquote {\bibinfo {title} {Low-depth quantum simulation of materials},}\
  }\href {\doibase 10.1103/PhysRevX.8.011044} {\bibfield  {journal} {\bibinfo
  {journal} {Phys. Rev. X}\ }\textbf {\bibinfo {volume} {8}},\ \bibinfo {pages}
  {011044} (\bibinfo {year} {2018})}\BibitemShut {NoStop}%
\bibitem [{\citenamefont {Bauer}\ \emph {et~al.}(2016)\citenamefont {Bauer},
  \citenamefont {Wecker}, \citenamefont {Millis}, \citenamefont {Hastings},\
  and\ \citenamefont {Troyer}}]{bauer2016hybrid}%
  \BibitemOpen
  \bibfield  {author} {\bibinfo {author} {\bibfnamefont {Bela}\ \bibnamefont
  {Bauer}}, \bibinfo {author} {\bibfnamefont {Dave}\ \bibnamefont {Wecker}},
  \bibinfo {author} {\bibfnamefont {Andrew~J.}\ \bibnamefont {Millis}},
  \bibinfo {author} {\bibfnamefont {Matthew~B.}\ \bibnamefont {Hastings}}, \
  and\ \bibinfo {author} {\bibfnamefont {Matthias}\ \bibnamefont {Troyer}},\
  }\bibfield  {title} {\enquote {\bibinfo {title} {Hybrid quantum-classical
  approach to correlated materials},}\ }\href {\doibase
  10.1103/PhysRevX.6.031045} {\bibfield  {journal} {\bibinfo  {journal} {Phys.
  Rev. X}\ }\textbf {\bibinfo {volume} {6}},\ \bibinfo {pages} {031045}
  (\bibinfo {year} {2016})}\BibitemShut {NoStop}%
\bibitem [{\citenamefont {Haah}\ \emph {et~al.}(2018)\citenamefont {Haah},
  \citenamefont {Hastings}, \citenamefont {Kothari},\ and\ \citenamefont
  {Low}}]{Haah2018quantum}%
  \BibitemOpen
  \bibfield  {author} {\bibinfo {author} {\bibfnamefont {Jeongwan}\
  \bibnamefont {Haah}}, \bibinfo {author} {\bibfnamefont {Matthew~B}\
  \bibnamefont {Hastings}}, \bibinfo {author} {\bibfnamefont {Robin}\
  \bibnamefont {Kothari}}, \ and\ \bibinfo {author} {\bibfnamefont {Guang~Hao}\
  \bibnamefont {Low}},\ }\bibfield  {title} {\enquote {\bibinfo {title}
  {{Quantum Algorithm for Simulating Real Time Evolution of Lattice
  Hamiltonians}},}\ }\href {\doibase 10.1109/FOCS.2018.00041} {\bibfield
  {journal} {\bibinfo  {journal} {2018 IEEE 59th Annual Symposium on
  Foundations of Computer Science (FOCS)}\ }\bibinfo {series} {FOCS '18},\
  \bibinfo {pages} {350--360} (\bibinfo {year} {2018})}\BibitemShut {NoStop}%
\bibitem [{\citenamefont {Shende}\ \emph {et~al.}(2006)\citenamefont {Shende},
  \citenamefont {Bullock},\ and\ \citenamefont {Markov}}]{shende2006synthesis}%
  \BibitemOpen
  \bibfield  {author} {\bibinfo {author} {\bibfnamefont {Vivek~V}\ \bibnamefont
  {Shende}}, \bibinfo {author} {\bibfnamefont {Stephen~S}\ \bibnamefont
  {Bullock}}, \ and\ \bibinfo {author} {\bibfnamefont {Igor~L}\ \bibnamefont
  {Markov}},\ }\bibfield  {title} {\enquote {\bibinfo {title} {Synthesis of
  quantum-logic circuits},}\ }\href {\doibase 10.1109/TCAD.2005.855930}
  {\bibfield  {journal} {\bibinfo  {journal} {IEEE Transactions on
  Computer-Aided Design of Integrated Circuits and Systems}\ }\textbf {\bibinfo
  {volume} {25}},\ \bibinfo {pages} {1000--1010} (\bibinfo {year} {2006})},\
  \Eprint {http://arxiv.org/abs/quant-ph/0406176} {quant-ph/0406176}
  \BibitemShut {NoStop}%
\bibitem [{\citenamefont {Childs}\ \emph {et~al.}(2018)\citenamefont {Childs},
  \citenamefont {Maslov}, \citenamefont {Nam}, \citenamefont {Ross},\ and\
  \citenamefont {Su}}]{Childs2017Speedup}%
  \BibitemOpen
  \bibfield  {author} {\bibinfo {author} {\bibfnamefont {Andrew~M}\
  \bibnamefont {Childs}}, \bibinfo {author} {\bibfnamefont {Dmitri}\
  \bibnamefont {Maslov}}, \bibinfo {author} {\bibfnamefont {Yunseong}\
  \bibnamefont {Nam}}, \bibinfo {author} {\bibfnamefont {Neil~J}\ \bibnamefont
  {Ross}}, \ and\ \bibinfo {author} {\bibfnamefont {Yuan}\ \bibnamefont {Su}},\
  }\bibfield  {title} {\enquote {\bibinfo {title} {{Toward the first quantum
  simulation with quantum speedup}},}\ }\href {\doibase
  10.1073/pnas.1801723115} {\bibfield  {journal} {\bibinfo  {journal}
  {Proceedings of the National Academy of Sciences}\ }\textbf {\bibinfo
  {volume} {115}},\ \bibinfo {pages} {9456--9461} (\bibinfo {year}
  {2018})}\BibitemShut {NoStop}%
\bibitem [{\citenamefont {Ferris}(2014)}]{Ferris2014}%
  \BibitemOpen
  \bibfield  {author} {\bibinfo {author} {\bibfnamefont {Andrew~J}\
  \bibnamefont {Ferris}},\ }\bibfield  {title} {\enquote {\bibinfo {title}
  {{Fourier Transform for Fermionic Systems and the Spectral Tensor
  Network}},}\ }\href {\doibase 10.1103/PhysRevLett.113.010401} {\bibfield
  {journal} {\bibinfo  {journal} {Phys. Rev. Lett.}\ }\textbf {\bibinfo
  {volume} {113}},\ \bibinfo {pages} {10401} (\bibinfo {year}
  {2014})}\BibitemShut {NoStop}%
\bibitem [{\citenamefont {Babbush}\ \emph {et~al.}(2015)\citenamefont
  {Babbush}, \citenamefont {McClean}, \citenamefont {Wecker}, \citenamefont
  {Aspuru-Guzik},\ and\ \citenamefont {Wiebe}}]{babbush2015chemical}%
  \BibitemOpen
  \bibfield  {author} {\bibinfo {author} {\bibfnamefont {Ryan}\ \bibnamefont
  {Babbush}}, \bibinfo {author} {\bibfnamefont {Jarrod}\ \bibnamefont
  {McClean}}, \bibinfo {author} {\bibfnamefont {Dave}\ \bibnamefont {Wecker}},
  \bibinfo {author} {\bibfnamefont {Al{\'{a}}n}\ \bibnamefont {Aspuru-Guzik}},
  \ and\ \bibinfo {author} {\bibfnamefont {Nathan}\ \bibnamefont {Wiebe}},\
  }\bibfield  {title} {\enquote {\bibinfo {title} {{Chemical basis of
  Trotter-Suzuki errors in quantum chemistry simulation}},}\ }\href {\doibase
  10.1103/PhysRevA.91.022311} {\bibfield  {journal} {\bibinfo  {journal}
  {Physical Review A}\ }\textbf {\bibinfo {volume} {91}},\ \bibinfo {pages}
  {022311} (\bibinfo {year} {2015})}\BibitemShut {NoStop}%
\bibitem [{\citenamefont {Hadfield}\ and\ \citenamefont
  {Papageorgiou}(2018)}]{Hadfield2018divideandconquer}%
  \BibitemOpen
  \bibfield  {author} {\bibinfo {author} {\bibfnamefont {Stuart}\ \bibnamefont
  {Hadfield}}\ and\ \bibinfo {author} {\bibfnamefont {Anargyros}\ \bibnamefont
  {Papageorgiou}},\ }\bibfield  {title} {\enquote {\bibinfo {title} {{Divide
  and conquer approach to quantum {H}amiltonian simulation}},}\ }\href
  {http://iopscience.iop.org/10.1088/1367-2630/aab1ef} {\bibfield  {journal}
  {\bibinfo  {journal} {New Journal of Physics}\ } (\bibinfo {year}
  {2018})}\BibitemShut {NoStop}%
\bibitem [{\citenamefont {Cuccaro}\ \emph {et~al.}(2004)\citenamefont
  {Cuccaro}, \citenamefont {Draper}, \citenamefont {Kutin},\ and\ \citenamefont
  {Moulton}}]{Cuccaro2004Adder}%
  \BibitemOpen
  \bibfield  {author} {\bibinfo {author} {\bibfnamefont {Steven~A}\
  \bibnamefont {Cuccaro}}, \bibinfo {author} {\bibfnamefont {Thomas~G}\
  \bibnamefont {Draper}}, \bibinfo {author} {\bibfnamefont {Samuel~A}\
  \bibnamefont {Kutin}}, \ and\ \bibinfo {author} {\bibfnamefont
  {David~Petrie}\ \bibnamefont {Moulton}},\ }\bibfield  {title} {\enquote
  {\bibinfo {title} {{A new quantum ripple-carry addition circuit}},}\ }\href
  {https://arxiv.org/abs/quant-ph/0410184} {\bibfield  {journal} {\bibinfo
  {journal} {arXiv preprint quant-ph/0410184}\ } (\bibinfo {year} {2004})},\
  \Eprint {http://arxiv.org/abs/0410184} {arXiv:0410184} \BibitemShut {NoStop}%
\bibitem [{\citenamefont {Berry}\ \emph {et~al.}(2018)\citenamefont {Berry},
  \citenamefont {Kieferov{\'{a}}}, \citenamefont {Scherer}, \citenamefont
  {Sanders}, \citenamefont {Low}, \citenamefont {Wiebe}, \citenamefont
  {Gidney},\ and\ \citenamefont {Babbush}}]{babbush2017sorting}%
  \BibitemOpen
  \bibfield  {author} {\bibinfo {author} {\bibfnamefont {Dominic~W.}\
  \bibnamefont {Berry}}, \bibinfo {author} {\bibfnamefont {M{\'{a}}ria}\
  \bibnamefont {Kieferov{\'{a}}}}, \bibinfo {author} {\bibfnamefont {Artur}\
  \bibnamefont {Scherer}}, \bibinfo {author} {\bibfnamefont {Yuval~R.}\
  \bibnamefont {Sanders}}, \bibinfo {author} {\bibfnamefont {Guang~Hao}\
  \bibnamefont {Low}}, \bibinfo {author} {\bibfnamefont {Nathan}\ \bibnamefont
  {Wiebe}}, \bibinfo {author} {\bibfnamefont {Craig}\ \bibnamefont {Gidney}}, \
  and\ \bibinfo {author} {\bibfnamefont {Ryan}\ \bibnamefont {Babbush}},\
  }\bibfield  {title} {\enquote {\bibinfo {title} {{Improved techniques for
  preparing eigenstates of fermionic Hamiltonians}},}\ }\href {\doibase
  10.1038/s41534-018-0071-5} {\bibfield  {journal} {\bibinfo  {journal} {npj
  Quantum Information}\ }\textbf {\bibinfo {volume} {4}},\ \bibinfo {pages}
  {22} (\bibinfo {year} {2018})}\BibitemShut {NoStop}%
\end{thebibliography}%

\appendix

\section{Error Estimates for Truncated Dyson series}
\label{sec:proof_truncation_discretization_dyson}
In this section, we complete the proof of~\cref{Thm:Truncated_Dyson_Algorithm} for the error from truncating the Dyson series at order $K$, and the error from approximating its terms, which are time-ordered integrals, with Riemann sums. These results provide a rigorous upper bound on the error of time-dependent Hamiltonian simulation. Let $D_k$ be the $k^{\text{th}}$ term in the Dyson expansion, and let $B_k$ be the Riemann sum of $D_k$ with each dimension discretized into $M=t/\Delta$ segments.
\begin{align}
\mathcal{T}\left[e^{-i\int_0^t H(s) \mathrm{d}s}\right]&= \sum^\infty_{k=0} (-i)^k D_k =\lim_{M\rightarrow\infty}  \sum^\infty_{k=0} \frac{(-it)^k }{M^k}B_k,
\\\nonumber
D_k&:=\frac{1}{k!}\int_0^t\cdots\int_0^t \mathcal{T}\left[H(t_1)\cdots H(t_k)\right]\mathrm{d}^k t,
\\\nonumber
B_k&:=\sum_{0\le m_k <\cdots < m_1 < M}H\left({m_k \Delta}\right)\cdots H\left({m_1 \Delta}\right).
\end{align}

We now prove bounds on the error $\epsilon_1$ due to truncating the Dyson series at order $K$.
\begin{lemma}
	\label{Lem:Truncated_Dyson_Error}
	Let $H(s):\mathbb{R} \mapsto \mathbb{C}^{N\times N}$ be differentiable on the domain $[0,t]$.  For any $\epsilon_1\in[0,2^{-e}]$, an approximation to the time ordered operator exponential of $-iH(s)$ can be constructed such that
	$$
	\left\|\mathcal{T}\left[e^{-i\int_0^t H(s) \mathrm{d}s}\right]- \sum^K_{k=0} (-i)^k D_k\right\|\le \epsilon_1,
	$$
	if we take
	\begin{enumerate}
		\item $K\ge \left\lceil-1+\frac{2\ln(1/\epsilon_1)}{\ln\ln(1/\epsilon_1) +1}\right\rceil$
		\item $\max_s\|H(s)\|t\le {\ln 2}$
	\end{enumerate}
\end{lemma}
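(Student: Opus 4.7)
The strategy is direct: since the Dyson series converges absolutely under the hypothesis that $H(s)$ is bounded and continuous, the error $\left\|\mathcal{T}[e^{-i\int_0^t H(s)\,\mathrm{d}s}] - \sum_{k=0}^K (-i)^k D_k\right\|$ is simply the norm of the tail $\left\|\sum_{k=K+1}^\infty (-i)^k D_k\right\|$, which by the triangle inequality is at most $\sum_{k=K+1}^\infty \|D_k\|$. So everything reduces to bounding individual terms $\|D_k\|$ and then summing the tail. I will use the abbreviation $\alpha := \max_{s\in[0,t]}\|H(s)\|$.

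For the per-term bound, I would exploit the equivalence between the symmetrized Dyson term and the standard nested-integral form: because the time-ordering operator is invariant under the $k!$ permutations of $(t_1,\dots,t_k)$, one has $D_k = \int_0^t \int_0^{t_k}\cdots\int_0^{t_2} H(t_k)\cdots H(t_1)\,\mathrm{d}t_1\cdots\mathrm{d}t_k$, and the triangle inequality then gives $\|D_k\|\le \alpha^k\cdot t^k/k! = (\alpha t)^k/k!$. Combined with the hypothesis $\alpha t \le \ln 2 < 1$, the tail is dominated by a rapidly convergent geometric-like series. Concretely, for $k\ge K+1$, $\frac{(\alpha t)^{k+1}/(k+1)!}{(\alpha t)^k/k!} = \frac{\alpha t}{k+1} \le \frac{\ln 2}{K+2} \le 1/2$, which yields $\sum_{k=K+1}^\infty \|D_k\| \le 2\,(\alpha t)^{K+1}/(K+1)! \le 2(\ln 2)^{K+1}/(K+1)!$.

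The last step is to convert the bound $2(\ln 2)^{K+1}/(K+1)! \le \epsilon_1$ into the explicit form stated for $K$. I would apply the elementary Stirling-type inequality $(K+1)! \ge ((K+1)/e)^{K+1}$ to obtain
\begin{equation*}
\frac{2(\ln 2)^{K+1}}{(K+1)!} \le 2\left(\frac{e\ln 2}{K+1}\right)^{K+1}.
\end{equation*}
Taking logarithms, it suffices to show $(K+1)\ln\!\left(\frac{K+1}{e\ln 2}\right) \ge \ln(2/\epsilon_1)$. Setting $N = K+1 = \frac{2\ln(1/\epsilon_1)}{\ln\ln(1/\epsilon_1)+1}$, one has $\ln N = \ln\ln(1/\epsilon_1) + \ln 2 - \ln(\ln\ln(1/\epsilon_1)+1)$, so for $\epsilon_1 \le 2^{-e}$ the quantity $\ln(N/(e\ln 2))$ is at least $(\ln\ln(1/\epsilon_1)+1)/2$ after absorbing the $\ln(e\ln 2)$ correction using $\epsilon_1\le 2^{-e}$. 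Plugging back in, $N\cdot \ln(N/(e\ln 2)) \ge \ln(1/\epsilon_1) \ge \ln(2/\epsilon_1)$ for $\epsilon_1 \le 1/2$, which closes the estimate.

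The main obstacle is the last paragraph: showing rigorously that the Lambert-$W$-like inequality $N\ln(N/(e\ln 2)) \ge \ln(2/\epsilon_1)$ is satisfied by the stated choice $N \ge 2\ln(1/\epsilon_1)/(\ln\ln(1/\epsilon_1)+1)$. The competing effects of the numerator $\ln(1/\epsilon_1)$ and the denominator $\ln\ln(1/\epsilon_1)+1$ need to be tracked carefully, and the explicit restriction $\epsilon_1 \le 2^{-e}$ is what makes $\ln\ln(1/\epsilon_1)+1$ large enough (at least $2$) for the lower-order corrections from $\ln(e\ln 2)$ and $\ln(\ln\ln(1/\epsilon_1)+1)$ to be subsumed, giving the clean constant $2$ in the numerator rather than something larger.
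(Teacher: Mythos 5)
Your overall strategy is identical to the paper's: bound each Dyson term by $\|D_k\|\le(\alpha t)^k/k!$ using the permutation symmetry of the time-ordered integrand, control the tail by a geometric comparison enabled by $\alpha t\le\ln 2$, apply Stirling in the form $(K+1)!\ge((K+1)/e)^{K+1}$, and then invert the resulting transcendental inequality for $K$. The first three steps are correct and match the paper essentially line for line. The problem is the final inversion, which you yourself flagged as the main obstacle, and which as written does not close.

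Three of the claims in your last paragraph are false. First, $\ln(1/\epsilon_1)\ge\ln(2/\epsilon_1)$ is backwards for every $\epsilon_1>0$, since $\ln(2/\epsilon_1)=\ln 2+\ln(1/\epsilon_1)$; so even if your intermediate estimate held, you would land at $\ln(1/\epsilon_1)$ and fall short of the target $\ln(2/\epsilon_1)$. Second, $\epsilon_1\le 2^{-e}$ gives only $\ln\ln(1/\epsilon_1)+1\ge 2+\ln\ln 2\approx 1.63$, not $\ge 2$. Third, and most seriously, the key inequality $\ln\left(N/(e\ln 2)\right)\ge(\ln\ln(1/\epsilon_1)+1)/2$ with $N=2\ln(1/\epsilon_1)/(\ln\ln(1/\epsilon_1)+1)$ fails throughout the moderate-precision regime: setting $u=\ln\ln(1/\epsilon_1)$, it is equivalent to $u/2-\ln(u+1)\ge 3/2-\ln 2+\ln\ln 2\approx 0.44$, which is violated for all $u\lesssim 4.2$, i.e., for all $\epsilon_1\gtrsim e^{-65}$; at the boundary $\epsilon_1=2^{-e}$ the left side is about $-0.17$. (The lemma itself is still true there: with $K=2$ the exact tail is $e^{\ln 2}-1-\ln 2-(\ln 2)^2/2\approx 0.067\le 2^{-e}$; it is the Stirling-based chain of bounds that is too lossy at small $K$.) The paper closes this step by a different device: it writes the condition $\left(\frac{e\alpha t}{K+1}\right)^{K+1}\le\epsilon_1$ as a Lambert-$W$ inequality, $K+1\ge\ln(1/\epsilon_1)/W\!\left(\ln(1/\epsilon_1)/(\alpha t e)\right)$, and then invokes $W(x)\ge(\ln x+1)/2$ for $x\ge 1$. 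To repair your version you would need to keep the full argument $\ln(1/\epsilon_1)/(\alpha te)$ inside the inner logarithm and track the constants honestly, and separately verify the small-$K$ cases by direct comparison with the exact tail $\sum_{k>K}(\ln 2)^k/k!$.
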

\begin{proof}
	We start by bounding $\|D_k\|$.
	\begin{align}
	\|D_k\| &= \frac{1}{k!}\left\|\int_0^t\cdots\int_0^t \mathcal{T}\left[H(t_1)\cdots H(t_k)\right]\mathrm{d}^k t\right\|
	\le \frac{1}{k!}\left\|\int_0^t\cdots\int_0^t \prod^k_{j=1}\|H(t_j)\|\mathrm{d}^k t\right\|
	\le \frac{(t\max_s\|H(s)\|)^k}{k!}.
	\end{align}
	At this point, the proof is identical to the time-independent case as $\max_s\|H(s)\|$ is independent of time. Thus using Stirling's approximation and assuming $K\ge 2\max_s\|H(s)\||t|$,
	\begin{align}\nonumber
	\epsilon_1&=\left\|\mathcal{T}\left[e^{-i\int_0^t H(s) \mathrm{d}s}\right]- \sum^K_{k=0} (-i)^k D_k\right\|\le \sum^\infty_{k=K+1} \|D_k\|
	\le \sum^\infty_{k=K+1}\frac{(t\max_s\|H(s)\|)^k}{k!}
	\\\nonumber&
	\le \frac{(t\max_s\|H(s)\|)^{K+1}}{(K+1)!}\sum^{\infty}_{k=K+2}\left(1/2\right)^{k-K-1}
	=\frac{(t\max_s\|H(s)\|)^{K+1}}{(K+1)!}
	\\
	&\le\left(\frac{te\max_s\|H(s)\|}{K+1}\right)^{K+1}
	\end{align}
	Now we find that this in turn is less than $\epsilon_1$ if $ \max_s\|H(s)\|t e < \min\{\ln(1/\epsilon_1),e\ln2\}\le 1$ given that $\epsilon_1\le 2^{-e}$ and
	\begin{align}
	K \ge \max\left\{-1+\frac{\ln(1/\epsilon_1)}{W\left(\frac{\ln(1/\epsilon_1)}{\max_s \|H(s)\| t e} \right)},2\max_s\|H(s)\||t|\right\},
	\end{align}
	where $W$ is the Lambert-W function.  Using the fact that for $x\ge 1$, $W(x) \ge (\ln(x)+1)/2$ and $\ln(e\ln 2) <1$ we obtain the simpler bound
	\begin{equation}
	K = \left\lceil-1 + \frac{2\ln(1/\epsilon_1)}{\ln\ln(1/\epsilon_1) +1}\right\rceil\in{\mathcal{O}}\left(\frac{\ln(1/\epsilon_1)}{\ln\ln(1/\epsilon_1) }\right).\label{eq:qprimebd}
	\end{equation}
	
\end{proof}

We now prove bounds on the error $\epsilon_2$ from approximating the Dyson series with its Riemann sum.
\begin{lemma}
	\label{Lem:Riemann_of_Dyson}
	Let $H(s):\mathbb{R} \mapsto \mathbb{C}^{N\times N}$ be differentiable on the domain $[0,t]$.  Let us also define the quantities 
	$\langle\|\dot{H}\|\rangle:=\frac{1}{t}\int^t_{0} \left\|\frac{\mathrm{d} H(s)}{ \mathrm{d} s}\right\| \mathrm{d}s$. For integer $K\ge 0$  and $\epsilon_2>0$, 
	$$
	\left\|\sum^K_{k=0} (-i)^k D_k- \sum^K_{k=0} {\left(-i\frac{t}{M}\right)^k }B_k\right\|\le \epsilon_2,
	$$
	by choosing any $M$ such that
	\begin{enumerate}
		\item $M \ge  \frac{t^2}{\epsilon_2} 4e^{\max_{s}\|H(s)\| t} \left(\langle \|\dot{H}\| \rangle  +\max_s \|H(s)\|^2\right)$,
		\item $M \ge K^2$.
	\end{enumerate}
\end{lemma}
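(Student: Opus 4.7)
\medskip

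\noindent\textbf{Proof proposal.} The plan is to bound $\|(-i)^k D_k - (-it/M)^k B_k\|$ termwise and then sum from $k=0$ to $K$. Writing $\Delta = t/M$, I would first express the ordered simplex supporting $D_k$ as a disjoint union of the axis-aligned boxes $C_{m_1,\ldots,m_k} = [m_1\Delta,(m_1+1)\Delta)\times\cdots\times[m_k\Delta,(m_k+1)\Delta)$ indexed by $0\le m_1\le\cdots\le m_k<M$, intersected with $\{t_1\le\cdots\le t_k\}$. When $m_1<\cdots<m_k$ strictly, $C_{m_1,\ldots,m_k}$ lies entirely in the simplex; otherwise (the ``collision'' indices) only a sub-simplex of volume strictly less than $\Delta^k$ contributes. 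Matching these strict boxes one-to-one with the summands in $\Delta^k B_k$ reduces the estimate to (a) a discretization error on each strict box, and (b) a collision error from the missing simplex volume.

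For the discretization part, on a strict box I would use the telescoping identity
\begin{equation*}
H(t_k)\cdots H(t_1) - H(m_k\Delta)\cdots H(m_1\Delta) = \sum_{j=1}^k H(t_k)\cdots H(t_{j+1})\bigl(H(t_j)-H(m_j\Delta)\bigr)H(m_{j-1}\Delta)\cdots H(m_1\Delta),
\end{equation*}
combined with $\|H(t_j)-H(m_j\Delta)\|\le\int_{m_j\Delta}^{(m_j+1)\Delta}\|\dot H(s)\|\,\mathrm{d}s$ and $\|H(s)\|\le\max_s\|H(s)\|$. Integrating over the box yields a factor $\Delta^k$, and summing over strict multi-indices with a fixed $m_j$ gives $\binom{m_j}{j-1}\binom{M-1-m_j}{k-j}\le\binom{M-1}{k-1}$ (Vandermonde-type bound); collapsing $\sum_{m_j}\int_{m_j\Delta}^{(m_j+1)\Delta}\|\dot H\|\mathrm{d}s = t\langle\|\dot H\|\rangle$ produces a per-$k$ discretization bound of order $\frac{k^2 t^{k+1}\|H\|^{k-1}\langle\|\dot H\|\rangle}{k!M}$.

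For the collision part, the total simplex volume not covered by strict boxes equals $\frac{t^k}{k!} - \binom{M}{k}\Delta^k = \frac{t^k}{k!}\bigl(1-\prod_{j=0}^{k-1}(1-j/M)\bigr)\le\frac{t^k k(k-1)}{2\,k!\,M}$, valid because $K^2\le M$ keeps the product bounded below by a positive quantity. Multiplying by the integrand bound $\max_s\|H(s)\|^k$ gives a per-$k$ collision bound of order $\frac{k^2 t^k\|H\|^k}{k!M}$. Summing both contributions over $k\le K$ and using the closed forms $\sum_k \frac{k^2 x^k}{k!} = x(1+x)e^x$ and $\sum_k \frac{(k+1)x^k}{k!} = (1+x)e^x$, the total error collapses to $\mathcal{O}\bigl(\frac{t^2 e^{\|H\|t}(\langle\|\dot H\|\rangle+\|H\|^2)}{M}\bigr)$, matching the stated requirement on $M$.

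The main obstacle I anticipate is the combinatorial bookkeeping in step (a): one must verify that, despite the telescoping introducing $k$ different positions at which to insert the derivative estimate, the sum over strict multi-indices does not blow up faster than $\binom{M}{k}$, so that the final bound carries $\binom{M-1}{k-1}\Delta^k\le t^k/[(k-1)!\,M]$ rather than a much larger combinatorial factor. The bound $\binom{m_j}{j-1}\binom{M-1-m_j}{k-j}\le\binom{M-1}{k-1}$ via Vandermonde is the critical ingredient; the rest is largely mechanical.
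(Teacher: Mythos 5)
Your proposal follows essentially the same route as the paper's proof: the same decomposition of the ordered simplex into strictly-ordered boxes plus a collision region, the same telescoping-plus-$\int\|\dot H\|$ estimate on each strict box summed against a $\binom{M-1}{k-1}\Delta^{k-1}\le t^{k-1}/(k-1)!$ count, and a collision contribution of the same order $k^2(t\max_s\|H(s)\|)^k/(k!\,M)$. The only notable difference is cosmetic: you compute the uncovered simplex volume directly as $\frac{t^k}{k!}\bigl(1-\prod_{j=0}^{k-1}(1-j/M)\bigr)$ and apply a union bound, whereas the paper counts boundary hypercubes via binomial coefficients; both yield the same estimate, with yours being marginally cleaner.
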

\begin{proof}
	We first expand the time-ordered evolution operator using the Dyson series.  We then examine the error incurred in evaluating a given order of the Dyson series for a small hypercubic region of side-length $\Delta=t/M$.  We then upper bound the maximum number of such hypercubes within the allowed volume and use the triangle inequality to argue that the error is the product of the number of such hypercubes and the maximum error per hypercube.
	
	Since $H(s)$ is a differentiable function it holds from Taylor's theorem that for any $\delta\ll 1$ and  computational basis states $\ket{x},\ket{y}$,
	\begin{equation}
	\bra{x} H(s+\delta) \ket{y} = \bra{x} H(s) \ket{y} +\delta \bra{x} \dot{H}(s) \ket{y} +o(\max_s \|\dot{H}(s)\|_{\max}\delta). 
	\end{equation}
	Since computational basis states form a complete orthonormal basis it follows through norm inequalities that
	\begin{equation}
	H(s+\delta) = H(s) +\delta \dot{H}(s) + o(\|\dot{H}(s)\|_{\max}N^2 \delta).
	\end{equation}
	We then have from Taylor's theorem and the triangle inequality that
	\begin{align}
	\|H(s+\delta) - H(s)\| &= \left\| \sum_{j=1}^r [H(s+j\delta/r)-H(s+[j-1]\delta/r)]\right\|\nonumber\\
	&\le \left\| \sum_{j=1}^r \dot{H}(s+[j-1]\delta/r) \delta/r \right\|+ r\left[o(\max_s \|\dot{H}(s)\|_{\max} N^2\delta/r)\right] \nonumber\\
	&\le \int_0^\delta \|\dot{H}(s)\| \mathrm{d}s + r\left[o(\max_s \|\dot{H}(s)\|_{\max} N^2\delta/r)\right].
	\end{align}
	Since this equation holds for all $r$, it also holds in the limit as $r$ approaches infinity.  Therefore
	\begin{equation}
	\label{eq:Error_of_difference_Matrix}
	\| H(s+\delta) - H(s)\| \le \int_0^\delta \|\dot{H}(s)\| \mathrm{d}s.
	\end{equation}
	
	Next, let us consider the error in approximating the integral over a hypercube to lowest order and let us define the hypercube to be $C$ with $x_1,\ldots,x_q$ being the corner of the hypercube with smallest norm.
	First note that in general if $A_j$ and $B_j$ are a sequence of bounded operators and $\|\cdot \|$ is a sub-multiplicative norm then it is straight forward to show using an inductive argument that for all positive integer $q$.
	\begin{equation}
	\left\|\prod_{j=1}^q A_j - \prod_{j=1}^q B_j\right\|\le \sum_{k=1}^q \left( \prod_{j=1}^{k-1} \left\|A_j \right\|\right)\left\| A_k-B_k\right\|\left( \prod_{j=k+1}^q \left\|B_j\right\|\right).
	\end{equation}
	By applying this in combination with~\cref{eq:Error_of_difference_Matrix} to region $C$, the error induced is
	\begin{align}
	&\left\|\int_C H(x_1+y_1)\cdots H(x_q+y_q)\mathrm{d}^q y-{ \Delta^q\prod_{j=1}^q H(x_j)}\right\|\le  \int_C \left\|\prod_{j=1}^q H(x_j +y_j) - \prod_{j=1}^q H(x_j) \right\| \mathrm{d}y^q,\nonumber\\
	&\le \sum_{k=1}^q \left( \prod_{j=1}^{k-1}\int_0^\Delta \left\|H(x_j+s) \right\|\mathrm{d}s\right)\left(\int_0^\Delta\int_0^s \|\dot{H}(x_k+y)\| \mathrm{d}y\mathrm{d}s\right)\left( \prod_{j=k+1}^q \int_0^\Delta\left\|H(x_j)\right\|\mathrm{d}s\right)\nonumber\\
	&\le \sum_{k=1}^q \left( \prod_{j=1}^{k-1}\int_0^\Delta \left\|H(x_j+s) \right\|\mathrm{d}s\right)\left(\int_0^\Delta\int_0^s \|\dot{H}(x_k+y)\| \mathrm{d}y\mathrm{d}s\right)\left( \prod_{j=k+1}^{q}\int_0^\Delta \alpha \;\mathrm{d}s\right) \nonumber\\
	&\le (\alpha)^{q-1}\Delta\sum_{k=1}^q \left( \prod_{j\ne k}^{q}\int_0^\Delta \mathrm{d}s\right)\left(\int_0^\Delta \|\dot{H}(x_k+s)\| \mathrm{d}s\right)\label{eq:hypercubeerr}
	\end{align}
	where $\alpha:=\max_s\left\|H(s) \right\|$.

	There are two regions in the problem.  The first region, which we call the bulk, is the region that satisfies all the constraints of the problem namely ${\rm bulk}:=\{(t_1,\ldots,t_q):\lfloor t_1/\Delta \rfloor > \cdots > \lfloor t_q/\Delta \rfloor\}$. Thus for any index $x_1,\cdots,x_q$ to a hypercube in the bulk, the ordering of terms $H(x_1+t_1)\cdots H(x_q+t_q)$ in the integrand of~\cref{eq:hypercubeerr} is fixed. The second region is called the boundary which is the region in which the hypercubes used in the Riemann sum would stretch outside the allowed region for the integral. 
	Since we approximate the integral to be zero on all hypercubes that intersect the boundary, the maximum error in the approximation is the maximum error that the discrete approximation to the integrand can take within the region scaled to the volume of the corresponding region.

  Finally we have from~\cref{eq:hypercubeerr} that the contribution to the error from integration over the bulk of the simplex is
	\begin{align}
	&\sum_{\substack{\vec{x}\in \{0,\Delta,\cdots,(M-1)\Delta\}^{q}\\x_1<x_2<\cdots< x_q}}
	\left\|\int_C H(x_1+t_1)\cdots H(x_q+t_q)\mathrm{d}^q t-{ \Delta^q\prod_{j=1}^q H(x_j)}\right\|
	\\\nonumber
	&\le \sum_{\substack{\vec{x}\in \{0,\Delta,\cdots,(M-1)\Delta\}^{q}\\x_1<x_2<\cdots< x_q}}(\alpha)^{q-1}\Delta\sum_{k=1}^q \left( \prod_{j\ne k}^{q}\int_0^\Delta \mathrm{d}s\right)\left(\int_0^\Delta \|\dot{H}(x_k+s)\| \mathrm{d}s\right)
	\end{align}
	
In order to understand how the error scales let us examine the partial sum over $x_1$ for fixed $k>1$ is

\begin{align}
&\sum_{x_2,\ldots,x_q}\sum_{x_1=0}^{x_2-1} \int_0^{\Delta} \|\dot{H}(x_1+s)\|\mathrm{d}s \left[(\alpha)^{q-1}\Delta\left( \prod_{j\ne k}^{q}\int_0^\Delta \mathrm{d}s\right)\left(\int_0^\Delta \|\dot{H}(x_k+s)\| \mathrm{d}s\right)\right]\nonumber\\
&\le \sum_{x_2,\ldots,x_q}\int_{0}^{x_2\Delta} \|\dot{H}(s)\|\mathrm{ d}s \left[(\alpha)^{q-1}\Delta\left( \prod_{j\ne k}^{q}\int_0^\Delta \left\|H(x_j+s) \right\|\mathrm{d}s\right)\left(\int_0^\Delta \|\dot{H}(x_k+s)\| \mathrm{d}s\right)\right].
\end{align}
The integral then takes exactly the same form as the original integral and so by repeating the argument $q-1$ times it is easy to see, even in the case where $k=1$, that
\begin{align}
&\sum_{\substack{\vec{x}\in \{0,\Delta,\cdots,(M-1)\Delta\}^{q}\\x_1<x_2<\cdots< x_q}}(\alpha)^{q-1}\Delta\sum_{k=1}^q \left( \prod_{j\ne k}^{q}\int_0^\Delta \mathrm{d}s\right)\left(\int_0^\Delta \|\dot{H}(x_k+s)\| \mathrm{d}s\right)\nonumber\\
& \le (\alpha)^{q-1}\Delta\sum_{k=1}^q \left( \prod_{j\ne k}^{q}\int_0^{x_{j+1}\Delta} \mathrm{d}s\right)\left(\int_0^{x_{k+1}\Delta} \|\dot{H}(s)\| \mathrm{d}s\right)\le\frac{(\alpha t)^{q-1}}{[q-1]!}\Delta\int_0^t \|\dot{H}(s)\|\mathrm{d}s,
\end{align}
where we have used the definition that $x_{q+1}:=M=t/\Delta$ and the fact that $\int_{0}^x \|\dot{H}(s)\|\mathrm{d}s$ is a monotonically increasing function of $x$.  Thus the contribution to the error from the boundary is at most

\begin{equation}
\sum_{q=1}^K \frac{(\alpha t)^{q-1}}{[q-1]!}\Delta \int_0^t \|\dot{H}\|(s)\mathrm{d}s\le \sum_{q=1}^\infty \frac{(\alpha )^{q-1}}{[q-1]!}\Delta \int_0^t \|\dot{H}(s)\|\mathrm{d}s = e^{\alpha t}\Delta \int_0^t \|\dot{H}(s)\|\mathrm{d}s.\label{eq:bulkbd}
\end{equation}

	Next we need to estimate the volume of the boundary.  If a point is on the boundary then by definition there exists at least one $t_j$ such that $\lfloor t_j/\Delta\rfloor = \lfloor t_{j+1} /\Delta \rfloor$ taking $t_0 =t$.  All other values are consistent with points within the bulk.  It is then straight forward to see that (after relabeling the indexes for the summation) that the volume can be expressed as the sum over all possible choices of such sums with at least one matched index.  If we further assume that $q^2\Delta \max_s \|H(s)\|/[\alpha t] \le \ln(2)$ then we have the following upper bound on boundary contribution to the error in the Dyson series:
	\begin{align}
	\int \prod_{j=1}^q \|H(t_q)\| \delta_{t\in {\rm bdy}} \mathrm{d}t^q &\le  \sum_{p=1}^q\Delta^q \max_s\|H(s)\|^p\alpha^{q-p} \binom{q}{p} \sum_{j_1=1}^{t/\Delta}\sum_{j_2=1}^{j_1 -1} \cdots \sum_{j_{q-p}=1}^{j_{q-p-1} -1} 1\nonumber\\
	&= \Delta^q\alpha^q\sum_{p=1}^q \max_s\|H(s)\|^p\alpha^{-p} \binom{q}{p} \binom{t/\Delta}{q-p}.\nonumber\\
	&\le \Delta^q\alpha^q\sum_{p=1}^q \max_s\|H(s)\|^p\alpha^{-p} \frac{(t/\Delta)^q}{q!p!}\left(\frac{q^2\Delta}{t}\right)^p\nonumber\\
	&=\frac{t^q \alpha^q}{q!}\sum_{p=1}^q \frac{1}{p!}\left(\frac{q^2 \Delta \max_s \|H(s)\|}{\alpha t}\right)^p
	\\\nonumber
	&\le\frac{t^q \alpha^q}{q!}\left(\frac{q^2 \Delta \max_s \|H(s)\|}{\alpha t}\right)\sum_{p=0}^\infty \frac{1}{p!}\left(\frac{q^2 \Delta \max_s \|H(s)\|}{\alpha t}\right)^p
	\\\nonumber
	&\le\frac{2q (t\alpha)^{q-1}}{(q-1)!}\left(\Delta \max_s \|H(s)\|\right)
	\label{eq:bdy1term}
	\end{align}
	Note that when $q=1$, there is no boundary contribution. Using the fact that $q/(q-1)\le 2,\; \forall q \ge 2$, this upper bound 
	\begin{align}
	\int \prod_{j=1}^q \|H(t_q)\| \delta_{t\in {\rm bdy}} \mathrm{d}t^q
	\le\frac{4 \Delta \max_s{\|H(s)\|}(\alpha t)^{q-1} }{(q-2)!} 
	\end{align}
	It then follows from summing~\cref{eq:bdy1term} that the error is
	\begin{align}
	\delta_{\text{bdy}}\le&\sum_{q=1}^K \int \prod_{j=1}^q \|H(t_q)\| \delta_{t\in {\rm bdy}} \mathrm{d}t^q
	\le
	\sum_{q=2}^\infty \frac{4 \Delta \max_s{\|H(s)\|}(\alpha t)^{q-1} }{ (q-2)!} 
	\le
	4 \Delta \alpha t\max_s{\|H(s)\|}  e^{\alpha t}.
	\end{align}
	
	By adding this result to that of~\cref{eq:bulkbd}
	\begin{align}
	\sum_{q=1}^K \|D_q - \Delta^qB_q\| 
	\le \delta_{\text{bulk}}+\delta_{\text{bdy}}
	\le 4\Delta t e^{\max_{s}\|H(s)\| t}(\langle\|\dot H\|\rangle + \max_{s}\|H(s)\|^2)
	\end{align}
	It follows from elementary algebra that this total error is at most $\epsilon_2$ if
	\begin{equation}
	\Delta \le \frac{\epsilon_2}{4te^{\max_{s}\|H(s)\| t}[\langle \|\dot{H}\| \rangle  +\max_s \|H(s)\|^2]}.\label{eq:delbd}
	\end{equation}
	Expressed in terms of the number of points $M=\frac{t}{\Delta}$, the total error is at most $\epsilon_2$ if choose any $M$ such that
	\begin{align}
	M &\ge\frac{t^2}{\epsilon_2} 4e^{\max_{s}\|H(s)\| t} [\langle \|\dot{H}\| \rangle  +\max_s \|H(s)\|^2]
	\end{align}
	The final bound on $M$ quoted immediately follows from $\frac{q^2\Delta}{\alpha t} \max_s \|H(s)\| \le \ln(2)
	\Rightarrow
	K \le \sqrt{\ln(2) M}\le \sqrt{M}$.

\end{proof}

Now that we have proved the necessary results regarding the error in the truncated Dyson series simulation, we are now ready to prove~\cref{Thm:Truncated_Dyson_Algorithm}, which we restate for convenience.
\truncatedDysonError*
\begin{proof}
	This is proven by combining two intermediate results using a triangle inequality. The approximation error is upper-bounded by
	\begin{align}\nonumber
	&\left\|\mathcal{T}\left[e^{-i\int_0^t H(s) \mathrm{d}s}\right]- \sum^K_{k=0} \left(-i \frac{t}{M}\right)^k B_k\right\|
	=
	\left\|\mathcal{T}\left[e^{-i\int_0^t H(s) \mathrm{d}s}\right]- \sum^K_{k=0} (-i)^k D_k+\sum^K_{k=0} (-i)^k D_k-\sum^K_{k=0} \left(-i \frac{t}{M}\right)^k B_k\right\|
	\\
	&\le
	\underbrace{\left\|\mathcal{T}\left[e^{-i\int_0^t H(s) \mathrm{d}s}\right]- \sum^K_{k=0} (-i)^k D_k\right\|}_{\epsilon_1}+
	\underbrace{\left\|\sum^K_{k=0} (-i)^k D_k-\sum^K_{k=0} \left(-i \frac{t}{M}\right)^k B_k\right\|}_{\epsilon_2}\le\epsilon.
	\end{align}
	We choose the errors in both cases to obey $\epsilon_1=\epsilon/2$ and $\epsilon_2 = \epsilon/2$.  The result then follows by taking the 
	most restrictive of the assumptions of~\cref{Lem:Truncated_Dyson_Error} and~\cref{Lem:Riemann_of_Dyson}.
\end{proof}

\section{Truncated Dyson series algorithm with low space overhead}
\label{sec:compresson_gadget}
The quantum algorithm described by~\cref{Thm:Compressed_TDS} applies an $\epsilon$-approximation $\tilde{U}=\sum^K_{k=0} \frac{(-it)^k }{M^k}B_k$ of the time-ordered evolution operator $\mathcal{T}\left[e^{-i\int_0^t H(s) \mathrm{d}s}\right]$, where the truncation order $K$ and the number of discretization points $M$ are given by~\cref{Thm:Truncated_Dyson_Algorithm}. 

The simulation algorithm then proceeds in three steps. First, we construct a sequence of $K$ unitary operators $U_1,U_2,\cdots,U_K$, that encode some matrix $H_k$ such that $H_k\cdots H_2H_1\propto B_k$ implements the $k^{\text{th}}$ term of the Dyson series. We call $\operatorname{\operatorname{DYS}_{K}}$ the unitary that applies this sequence of unitaries $U_1\cdots U_k$ controlled on an index state $\ket{k}_b$. A naive implementation of this idea, as worked out in~\cref{sec:dyson_series_alg_duplicated_registers}, requires the $K$-fold duplication of registers $a$ and $d$ of the oracle $\operatorname{HAM-T}$. We present a compression gadget described by~\cref{Thm:compression_gadget} avoids this overhead. Second, we take a linear combination of $B_k$ to apply $\tilde{U}$ with some success probability on any input state to the $s$ register. As $B_K$ contains products of $K$ Hamiltonians, this step requires $K$ queries to $\operatorname{HAM-T}$. Third, since $\tilde{U}$ is $\epsilon$-close to unitary, we apply oblivious amplitude amplification~\cite{berry2015simulating} to boost this probability to $1-\mathcal{O}(\epsilon)$. 

We now prove the compression gadget, which is followed by a proof~\cref{Thm:Compressed_TDS}.
\begin{lemma}[Compression gadget]
	\label{Thm:compression_gadget}
	Let $\{U_k \;:k\in[K]\}$ be a set of $K$ unitaries that encode matrices $H_k\in\mathbb{C}^{2^{n_s}\times2^{n_s}}$ such that
	\begin{align}
	\label{eq:compression_U_definition}
	(\bra{0}_{a}\otimes \openone_s)  U_k (\ket{0}_{a}\otimes \openone_s) = H_k, \quad\|H_k\|\le 1,\quad \ket{0}_a \in \mathbb{C}^{2^{n_a}}.
	\end{align}
	Then there exists a quantum circuit $V$ such that on input states spanned by $\{\ket{k}_b: k\in\{0,\cdots,K\}\}$,
	\begin{align}
	\label{eq:compression_V_definition}
	(\bra{0}_{ac}\otimes \openone_s)  V (\ket{0}_{ac}\otimes \openone_s) = \ket{0}\bra{0}_b \otimes \openone_s + \sum_{k=1}^K\ket{k}\bra{k}_b\otimes\left(\prod^{k}_{j=1}H_j\right), 
	\quad \ket{0}_a \in \mathbb{C}^{2^{n_a}},
	\quad \ket{k}_b \in \mathbb{C}^{2^{n_b}},
	\quad \ket{0}_c \in \mathbb{C}^{2^{n_c}},
	\end{align}
	where the number of qubits $n_b\in\mathcal{O}(n_c)={\mathcal{O}}(\log{(K)})$.  The cost of $V$ is one query each to controlled-controlled-$U_k$, and $\mathcal{O}(K(n_a+\log{(K)}))$ additional primitive quantum gates. 
\end{lemma}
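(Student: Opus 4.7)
The plan is to build $V$ out of $K$ step unitaries $W_1, \ldots, W_K$ acting on the combined registers $abcs$, followed by a single reversible arithmetic operation between $c$ and $b$. The counter register will have $n_c = \lceil \log_2(K+1) \rceil$ qubits and will track the number of "successful" $U_j$ applications. Concretely, $W_j$ will perform two successive controlled sub-operations: first, apply $U_j$ to $(a,s)$ doubly controlled on $c = j-1$ and $b \ge j$; second, increment $c$ by $1$ triply controlled on $c = j-1$, $b \ge j$, and $a = 0$. After all $W_j$ are applied in order, I would subtract $b$ from $c$ reversibly modulo $2^{n_c}$; this is the circuit $V$.

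To prove correctness I would argue by induction on $j$. Letting $k$ denote the value of register $b$ on input, the inductive claim is that after $W_1 \cdots W_j$ acts on $\ket{0}_a \ket{0}_c \ket{k}_b \ket{\psi}_s$, the state splits as
\[
\ket{0}_a \ket{\min(j,k)}_c \ket{k}_b \Bigl(\prod_{i=1}^{\min(j,k)} H_i\Bigr)\ket{\psi}_s \;+\; \ket{\Phi_j},
\]
with $\ket{\Phi_j}$ supported on the subspace where the $a$-component is orthogonal to $\ket{0}_a$ and $c < \min(j,k)$. The base case $j=0$ is trivial. For the step, the double control ensures $U_j$ acts only on the good branch from the previous step; then the identity $(\bra{0}_a \otimes \openone_s) U_j (\ket{0}_a \otimes \openone_s) = H_j$ together with the triple-controlled increment advance the $\ket{0}_a$ piece to $c = j$ with the updated product $H_j \cdots H_1$ on $s$, while the orthogonal piece contributes to $\ket{\Phi_j}$. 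Crucially, because $\ket{\Phi_{j-1}}$ has $c < j-1$, neither sub-operation of $W_j$ fires on it, so it passes through unchanged and never re-contaminates the good branch. When $k < j$, the $b \ge j$ control fails and $W_j$ acts trivially, preserving the invariant with $\min(j,k) = k$.

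After all $K$ steps, the good branch has $c = k$ and system content equal to $H_k \cdots H_1 \ket{\psi}_s$ (with the empty product $\openone_s$ when $k = 0$). The concluding modular subtraction $c \mapsto c - b$ sends the good branch's $c = k$ to $c = 0$, while mapping any junk counter value $c' \in \{0, \ldots, k-1\}$ to $c' - k \pmod{2^{n_c}} \ne 0$ (since $0 < k - c' \le K < 2^{n_c}$). Projecting onto $\ket{0}_{ac}$ therefore annihilates all of $\ket{\Phi_K}$ (which fails $a = 0$) and extracts exactly $\ket{k}_b \otimes H_k \cdots H_1 \ket{\psi}_s$ from the good branch, matching the target formula for $V$. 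The resource count is immediate: exactly one query to controlled-controlled-$U_j$ per iteration, plus $O(\log K)$ gates to compare $c$ and $b$ against classical constants, $O(n_a)$ gates to test $a = 0$, and $O(\log K)$ gates for the controlled increment, with a final modular subtraction of $O(\log K)$ gates, giving a total of $O(K(n_a + \log K))$ primitive gates and $n_c = O(\log K)$ additional qubits.

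The main obstacle I expect to navigate carefully is verifying that junk components never leak back into the good pipeline. This is what the $c = j-1$ trigger on $U_j$ enforces: as soon as the $a$ register departs from $\ket{0}_a$ at some step $j$, the counter freezes at $j-1$, so every subsequent $W_{j'}$ sees $c \ne j'-1$ and acts as the identity on that branch. Keeping this invariant clean through the induction, together with correctly handling the case $k < j$ where $W_j$ trivializes because $b < j$, is the delicate part of the argument; the rest is routine reversible arithmetic.
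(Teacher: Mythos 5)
Your proof is correct and follows essentially the same strategy as the paper's: both replace the $K$-fold duplication of the ancilla register with an $\mathcal{O}(\log K)$-qubit counter that coherently records whether every block-encoding so far has succeeded, gate each $U_j$ on that counter so that failed branches freeze and can never re-enter the $\ket{0}_a$ subspace, and finish with reversible arithmetic that zeroes the counter only on the all-success branch. The differences are cosmetic --- the paper decrements $b$ on success and a second counter $c$ on failure, detecting the blocked condition via the leading bit after modular wraparound, whereas you increment a single counter and compare it against the step index (note only that your ``increment $c$ controlled on $c=j-1$'' should be realized as the transposition $\ket{j-1}_c\leftrightarrow\ket{j}_c$ or with the self-referential control dropped, since the literal map is not unitary; this does not affect the argument).
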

\begin{proof}
	Though binary control logic for this sequence is trivial when $H_k$ is unitary, the complication here is that $H_k$ is in general non-unitary and so the probability of successfully measuring $\ket{0}_{a}$ is less than one. Any other measurement outcome corresponds failure as it applies on register $a$ an operator that is not $H_k$. This complication is overcome by introducing two more registers $b,c$ of size $\mathcal{O}(\log(K))$ qubits that coherently count the number of successful measurements, and then applying $U_k$ conditional on there being no failures.
	
	Let the counter register $b$ represent an $n_{b}$-bit integer $l_{b}=\sum^{n_{b}-1}_{r=0}2^{r}q_r$ in the number state $\ket{l_{b}}_{b}:=\ket{q_0q_1\cdots q_{n_{b,c}-1}}_{b}$, where $q_r\in\{0,1\}$, and similarly for the counter register $c$.  The size of these integers are determined by $n_b=n_c+1=\lceil\log_2{(K+1)}\rceil+1$. The unitaries $U_j$ will be applied conditional on both the leading bits $q_{n_c-1}=0$ and $q_{n_b-1}=0$, that is 
	\begin{align}
	\label{eq:cc-U_k}
	\operatorname{CC-}U_k:=I^{\otimes n_b+n_c-2}\otimes\left(\ket{0}\bra{0}_{b_{n_b-1}}\otimes\ket{0}\bra{0}_{c_{n_c-1}}\otimes U_k + \cdots\right).
	\end{align}
	
	Consider the circuit in~\cref{fig:Compression-Gadget}. There, we apply $\operatorname{CC-}U_k$, then increment $k$ by one, decrement $l_c$ by one conditional on the $a$ register not being in the $\ket{0}_a$ state, and decrement $l_b$ by one conditional on the $a$ register being in the $\ket{0}_a$ state. This is accomplished by multiply-controlled modular addition
	\begin{align}
	\operatorname{ADD}_{ca}&= \operatorname{ADD}^\dag_b\otimes \openone_c\otimes \ket{0}\bra{0}_a+\openone_b\otimes \operatorname{ADD}^\dag_c\otimes\sum^{2^{n_a}-1}_{l=1}{\ket{l}\bra{l}_a},
	\\\nonumber
	\operatorname{ADD}_{b}&=\sum^{2^{n_{b}}-1}_{l=0}\ket{l+1\mod{2^{n_{b}}}}\bra{l}_{b},
	\quad
	\operatorname{ADD}_{c}=\sum^{2^{n_{c}}-1}_{l=0}\ket{l+1\mod{2^{n_{c}}}}\bra{l}_{c}.
	\end{align}
	As we add integers of size $\mathcal{O}(K)$, each application of modular addition costs $\mathcal{O}(\log (K))$ primitive gates and requires $\mathcal{O}(\log (K))$ qubits~\cite{Cuccaro2004Adder}. Implementing the multiple controls costs $\mathcal{O}(n_a)$ primitive gates and up to $n_a$ extra qubits. 
	\begin{figure*}[t]
		\begin{tikzpicture}[thick]

\matrix[row sep=0.0cm, column sep=0.1cm] (circuit) {
	\node (c) {\ketb{0}{c}}; & \node {\textbackslash}; & \node[joint] (cV) {};
	&\node () {}; &\node (cDoubleStart) {};& \node (cdots) {};&\node {};
	& \node[doubleLineVdots] (c0dots) {$\vdots$}; 
	& \node[ctrlDoubleLine] (c0) {};& \node[operatorDoubleLine] () {$\operatorname{ADD}_c^\dag$};   & \node[operatorDoubleLine] (addc0) {$\operatorname{ADD}_c$}; 
	& \node[ctrlDoubleLine] (c1) {};& \node[operatorDoubleLine] () {$\operatorname{ADD}_c^\dag$};   & \node[operatorDoubleLine] (addc1) {$\operatorname{ADD}_c$}; 
	&\node {};& \node (cdots) {};&\node {};
	& \node[ctrlDoubleLine] (c2) {};& \node[operatorDoubleLine] () {$\operatorname{ADD}_c^\dag$};   & \node[operatorDoubleLine] (addc2) {$\operatorname{ADD}_c$}; 
	& 
	&
	&
	\coordinate (endc); \\
	\node (b) {\ketb{k}{b}}; & \node {\textbackslash}; & \node[joint] () {}; 
	&&\node (bDoubleStart) {};&&
	& \node[doubleLineVdots] (b0dots) {$\vdots$}; 
	& \node[ctrlDoubleLine] (b0) {}; && \node[operatorDoubleLine] (addb0) {$\operatorname{ADD}_b^\dag$};  
	& \node[ctrlDoubleLine] (b1) {}; && \node[operatorDoubleLine] (addb1) {$\operatorname{ADD}_b^\dag$};  
	&&&
	& \node[ctrlDoubleLine] (b2) {}; && \node[operatorDoubleLine] (addb2) {$\operatorname{ADD}_b^\dag$};
	& \node[operatorDoubleLine] () {$\operatorname{ADD}_b^K$}; 	  
	&	\coordinate (endb);\\
	\node (a) {\ketb{0}{a}}; & \node {\textbackslash}; & \node[joint] () {}; 
	&&&&
	&\node {\textbackslash}; 
	& \node[joint] {}; && \node[ctrl] (aAdd) {};	
	& \node[joint] {}; && \node[ctrl] (adda1) {};	
	&&&
	& \node[joint] {}; && \node[ctrl] (adda2) {};	 
	&&	\coordinate (enda);\\
	\node (s) {\ketb{\psi}{s}}; & \node {\textbackslash}; & \node[operator] (sV) {$V$}; 
	&&\node (sDoubleStart) {};&&
	&\node {\textbackslash}; 
	& \node[operator] (U1) {$U_1$}; &	& 
	& \node[operator] (U2) {$U_2$}; &	& 
	&& \node (sdots) {};&
	& \node[operator] (UK) {$U_K$}; &	& 
	&&	\coordinate (ends);\\
};
\begin{pgfonlayer}{background}
\draw[doubleLine, thin] (cDoubleStart) -- (endc) (bDoubleStart) -- (endb);
\draw[thick] (c) -- (cDoubleStart) (b) -- (bDoubleStart); 
\draw[thick] (a) -- (enda) (s) -- (ends); 
\draw[thick] (a) -- (enda) (s) -- (ends); 
\draw[thick] (c0) -- (U1) (addc0) -- (aAdd);
\draw[thick] (c1) -- (U2) (addc1) -- (adda1);
\draw[thick] (c2) -- (UK) (addc2) -- (adda2);
\draw[thick] (cV) -- (sV);
\node[fit = (cdots) (sdots), fill=white,inner sep=8pt]  (dots) {$\hspace{-1pt}\cdots$};
\node[fit = (cDoubleStart) (sDoubleStart), fill=white,inner sep=5pt]  (dots) {$=$};
\end{pgfonlayer}
\end{tikzpicture}
		\caption{	\label{fig:Compression-Gadget} Quantum circuit representations of the gadget $V$ for probabilistically applying a sequence of operators $H_k \cdots H_2H_1$, encoded in $(\bra{0}_{a}\otimes \openone_s)  U_k (\ket{0}_{a}\otimes \openone_s) = H_k$, controlled on number state $\ket{k}_b,\; k\in\{0,1,\cdots,K\}$. Horizontal lines without a backslash depict single-qubit registers. Filled circles depict a unitary controlled by the $\ket{0}\cdots\ket{0}$ state.}	
	\end{figure*}
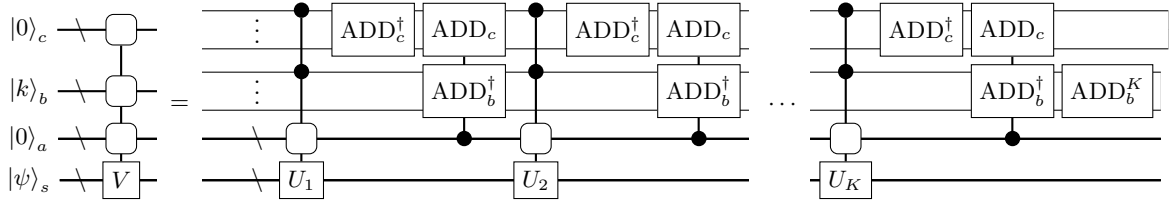
	
	Restricted to input states $\ket{0}_a\ket{l_b}_b\ket{0}_c$, where $l_b\in\{2^{n_b}-1,0,1,2,3,\cdots,K-1\}$, this implements $V$. For example, consider the evolution of an input state $\ket{0}_a\ket{1}_b\ket{0}_c\ket{\psi}_s$ for $K=3$.
	\begin{align}
	\label{eq:compression_sequence}
	\ket{0}_a\ket{1}_b\ket{0}_c\ket{\psi}_s
	\underset{\operatorname{CC-}U_1}{\rightarrow}
	&\ket{0}_a\ket{1}_b\ket{0}_c H_1\ket{\psi}_s+\ket{0^{\perp,1}}_a\ket{2}_b\ket{0}_c \cdots
	\\\nonumber
	\underset{\operatorname{ADD}_{ca}}{\rightarrow}
	&\ket{0}_a\ket{0}_b\ket{0}_c H_1\ket{\psi}_s+\ket{0^{\perp,1}}_a\ket{2}_b\ket{2^{n_c}-1}_c \cdots
	\\\nonumber
	\underset{\operatorname{CC-}U_2}{\rightarrow}
	&\ket{0}_a\ket{0}_b\ket{0}_c H_2H_1\ket{\psi}_s+\ket{0^{\perp,2}}_a\ket{1}_b\ket{0}_c \cdots+\ket{0^{\perp,1}}_a\ket{2}_b\ket{2^{n_c}-1}_c \cdots
	\\\nonumber
	\underset{\operatorname{ADD}_{ca}}{\rightarrow}
	&\ket{0}_a\ket{2^{n_b}-1}_b\ket{0}_c H_2H_1\ket{\psi}_s+\ket{0^{\perp,2}}_a\ket{1}_b\ket{2^{n_c}-1}_c \cdots+\ket{0^{\perp,1}}_a\ket{2}_b\ket{2^{n_c}-2}_c \cdots
	\\\nonumber
	\underset{\operatorname{CC-}U_3}{\rightarrow}
	&\ket{0}_a\ket{2^{n_b}-1}_b\ket{0}_c H_2H_1\ket{\psi}_s+\ket{0^{\perp,2}}_a\ket{1}_b\ket{2^{n_c}-1}_c \cdots+\ket{0^{\perp,1}}_a\ket{2}_b\ket{2^{n_c}-2}_c \cdots
	\\\nonumber
	\underset{\operatorname{ADD}_{ca}}{\rightarrow}
	&\ket{0}_a\ket{2^{n_b}-2}_b\ket{0}_c H_2H_1\ket{\psi}_s+\ket{0^{\perp,2}}_a\ket{1}_b\ket{2^{n_c}-2}_c \cdots+\ket{0^{\perp,1}}_a\ket{2}_b\ket{2^{n_c}-3}_c \cdots
	\\\nonumber
	\underset{\operatorname{ADD}^K_{b}}{\rightarrow}
	&\ket{0}_a\ket{1}_b\ket{0}_c H_2H_1\ket{\psi}_s+\cdots.
	\end{align}
	In the above, subtracting from $0$ results in the largest possible integer, hence the leading bit becomes $q_{n_b-1}=1$, ans similarly for $q_{n_c-1}=1$. As a result, the controls in~\cref{eq:cc-U_k} do not apply $U_k$. By choosing the largest integer representable by the $b$ register to be at least two times of $K$, we also ensure that this leading bit is set to $1$, it will remain in the same state after the at most $K$ subtractions. Note that~\cref{eq:compression_V_definition} applies $H_k\cdots H_1$ controlled on $\ket{k}_b$, whereas~\cref{eq:compression_sequence} applies $H_{l_b}\cdots H_1$ controlled on $\ket{l_b-1}_b$ -- we simply relabel $k=l_b-1\mod{2^{n_b}}$. 
\end{proof}

Before proceeding to the proof of~\cref{Thm:Compressed_TDS} we need to use a well-known result called `robust oblivious amplitude amplification', restated below for convenience.
\begin{lemma}[Robust oblivious amplitude amplification~\cite{berry2015simulating}]
	\label{lem:Robust_OAA}
	Let $V,U$ be unitary and let $\tilde{U}$ be an arbitrary matrix such that $\|U-\tilde{U}\|\in\mathcal{O}(\epsilon)$, and $(\bra{0}_a\otimes \openone_s)V(\ket{0}_a\otimes \openone_s)=\frac{\tilde{U}}{2}$. Let $W=-V\cdot(\operatorname{REF}\otimes \openone_s)\cdot V^\dag\cdot(\operatorname{REF}\otimes \openone_s)\cdot V$, where $\operatorname{REF}=\openone_a-2\ket{0}\bra{0}_a$. Then
	$\|(\bra{0}_a\otimes \openone_s)W(\ket{0}_a\otimes \openone_s) - U\| \in \mathcal{O}(\epsilon)$.
\end{lemma}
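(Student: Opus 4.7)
The plan is a direct algebraic computation: I will derive a closed form for $M := (\bra{0}_a\otimes \openone_s)\, W\, (\ket{0}_a\otimes \openone_s)$ in terms of $\tilde U$, and then estimate $\|M-U\|$ by a one-line perturbation argument. First, I introduce the shorthand $B := (\bra{0}_a\otimes \openone_s)\, V\, (\ket{0}_a\otimes \openone_s) = \tilde U/2$ and $K := (({\openone}_a - \ket{0}\bra{0}_a)\otimes \openone_s)\, V\, (\ket{0}_a\otimes \openone_s)$, so that $V(\ket{0}_a\otimes \openone_s) = \ket{0}_a\otimes B + K$ with $(\ket{0}\bra{0}_a\otimes \openone_s)K = 0$. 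Unitarity of $V$ forces $B^\dag B + K^\dag K = \openone_s$. Writing $V^\dag(\ket{0}_a\otimes \openone_s) = \ket{0}_a \otimes B^\dag + K'$ defines the analogous object for the adjoint, and applying $V$ to the identity $V^\dag\ket{0}_a - \ket{0}_a B^\dag = K'$ yields $VK' = \ket{0}_a(\openone - BB^\dag) - K B^\dag$, whose $\ket{0}_a$-component gives the crucial identity $(\bra{0}_a\otimes \openone_s) V K' = \openone - BB^\dag$.

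Second, I would expand $W = -V R V^\dag R V$ using $R = \operatorname{REF}\otimes \openone_s = \openone - 2(\ket{0}\bra{0}_a\otimes \openone_s)$ and the rule that $R$ flips the sign of any $\ket{0}_a$-component while fixing its orthogonal complement. A four-step chain (suppressing $\otimes \openone_s$ throughout) gives
\begin{align*}
RV\ket{0}_a &= -\ket{0}_a B + K,\\
V^\dag R V \ket{0}_a &= \ket{0}_a(\openone - 2B^\dag B) - 2K' B, \\
R V^\dag R V \ket{0}_a &= -\ket{0}_a(\openone - 2B^\dag B) - 2K' B,\\
\bra{0}_a V R V^\dag R V \ket{0}_a &= -B(\openone - 2B^\dag B) - 2(\openone - BB^\dag) B = -3B + 4 B B^\dag B,
\end{align*}
so that, after substituting $B = \tilde U/2$,
\begin{align*}
M \;=\; 3B - 4 BB^\dag B \;=\; \tfrac{1}{2}\bigl(3\tilde U - \tilde U \tilde U^\dag \tilde U\bigr).
\end{align*}
This is the familiar amplitude-amplification formula $\sin(3\theta)=3\sin\theta-4\sin^3\theta$ evaluated at $\sin\theta=1/2$; in the exact case $\tilde U = U$ unitary it collapses to $U$.

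Third, I would close with a perturbation estimate. Writing $\tilde U = U + E$ with $\|E\| = O(\epsilon)$ and exploiting $U^\dag U = \openone$, the ``defect from unitarity'' $\delta := \tilde U^\dag \tilde U - \openone = U^\dag E + E^\dag U + E^\dag E$ satisfies $\|\delta\| \le 2\|E\| + \|E\|^2 = O(\epsilon)$. Since $\tilde U \tilde U^\dag \tilde U = \tilde U(\openone + \delta) = \tilde U + \tilde U \delta$, the formula for $M$ simplifies to $M = \tilde U - \tfrac{1}{2}\tilde U \delta$, and
\begin{align*}
\|M - U\| \;\le\; \|\tilde U - U\| + \tfrac{1}{2}\|\tilde U\|\,\|\delta\| \;\le\; O(\epsilon) + (1+O(\epsilon))\cdot O(\epsilon) \;=\; O(\epsilon).
\end{align*}

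The only point requiring care is the bookkeeping in the second step: $B,B^\dag,K,K'$ must be treated consistently as operators between $\mathcal{H}_s$ and subspaces of $\mathcal{H}_a\otimes \mathcal{H}_s$, and the identity $(\bra{0}_a\otimes \openone_s) V K' = \openone - BB^\dag$ must be derived rather than guessed. Once the closed form $M = (3\tilde U - \tilde U\tilde U^\dag \tilde U)/2$ is in hand, robustness is essentially automatic, because the only non-unitary contribution to $M$ enters through the small quantity $\tilde U \tilde U^\dag \tilde U - \tilde U = \tilde U\delta$, which is linear in the perturbation.
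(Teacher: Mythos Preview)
Your proof is correct. The derivation of the closed form $M = \tfrac{1}{2}(3\tilde U - \tilde U\tilde U^\dag\tilde U)$ is clean and the intermediate identities you derive (in particular $(\bra{0}_a\otimes\openone_s)VK' = \openone - BB^\dag$ and the vanishing $\ket{0}_a$-component of $K'$) are exactly what is needed to make the four-step chain go through. The perturbation step is also sound: grouping $\tilde U\tilde U^\dag\tilde U = \tilde U(\openone+\delta)$ with $\delta = \tilde U^\dag\tilde U - \openone$ immediately isolates the only non-unitary contribution as an $O(\epsilon)$ correction.

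Note, however, that the paper does not supply its own proof of this lemma: it is stated as a known result imported from~\cite{berry2015simulating} and used as a black box inside the proof of Theorem~\ref{Thm:Compressed_TDS}. So there is nothing to compare against in this paper. That said, your argument is essentially the standard one behind robust oblivious amplitude amplification: the ``$3\sin\theta - 4\sin^3\theta$'' identity at $\sin\theta = 1/2$ forces the exact case to hit $U$ on the nose, and stability under $O(\epsilon)$ perturbations of $\tilde U$ follows because the amplified block depends smoothly (in fact polynomially of degree three) on $\tilde U$.
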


The proof of~\cref{Thm:Compressed_TDS} follows.
\mainThm*
\begin{proof}[Proof of~\cref{Thm:Compressed_TDS}.]
	The unitary $\operatorname{DYS}_{K}$, as defined in~\cref{eq:DYS-K-compressed}, may be implemented through~\cref{Thm:compression_gadget} provided that we find a sequence $\{U_k\}$ such that $H_k\cdots H_2H_1\propto B_k$. -- in other words, 
	\begin{align}
	\label{eq:DYS-K-compressed}
	&(\bra{0}_{ac,\text{others}}\otimes \openone_{bs}) \operatorname{DYS}_{K} (\ket{0}_{ac,\text{others}}\otimes \openone_{bs}) 
	= \sum^K_{k=0}\ket{k}\bra{k}_{b}\otimes \gamma_k B_k,
	\end{align}
	where `$\text{others}$' represent registers with size independent of $K$, and $\gamma_k$ is a scaling factor depends on the choice of $U_k$. This sequence is obtained by combining three matrices. First, a unitary matrix $U$ that prepares a uniform superposition $U\ket{0}_d=\sum^{M-1}_{m=0}\frac{1}{\sqrt{M}}\ket{m}_d$. Second, the block-diagonal matrix 
	\begin{align}
	D=\sum^{M-1}_{m=0}\ket{m}\bra{m}_d\otimes H(\Delta m), \quad\Delta=t/M,
	\end{align}
	implemented by $\operatorname{HAM-T}$. Third, a strictly upper-triangular matrix $G\in \mathbb R^{M\times M}$ with elements
	\begin{align}
	G_{ij}=
	\begin{cases}
	\frac{1}{M}, & i < j, \\
	0, & \text{otherwise},
	\end{cases}
	\quad
	G=\frac{1}{M}\sum^{M-1}_{i=0}\sum^{M-1}_{j=i+1}\ket{i}\bra{j}_d.
	\end{align}
	The non-unitary triangular operator $G$ is implemented by using an integer comparator $\operatorname{COMP}$ acting on registers $d,e,f$ consisting of $n_d=n_e\in\mathcal{O}(\log(M))$ and $n_f=1$ qubits, and thus costs $\mathcal{O}(\log{(M)})$ primitive gates.
	For any input number state index $\ket{j}_d$, let us compare $j$ with a uniform superposition state $\sum^{M-1}_{i=0}\frac{1}{\sqrt{M}}\ket{i}_e$. Conditional on $i\ge j$, the comparator perform a $\operatorname{NOT}$ gate on register $f$. We then swap registers $d,e$, and unprepare the uniform superposition. On input $\ket{j}_d\ket{0}_e\ket{0}_f$, this implements the sequence
	\begin{align}
	\ket{j}_d\ket{0}_e\ket{0}_f
	\underset{U\;\text{on}\;e}{\rightarrow} 
	&\frac{\ket{j}_d}{\sqrt{M}}\sum^{M-1}_{i=0}\ket{i}_e\ket{0}_f
	\underset{\operatorname{COMP}}{\rightarrow}  
	\frac{\ket{j}_d}{\sqrt{M}}\sum^{M-1}_{i=0}\ket{i}_e\ket{i\ge j}_f
	\underset{\operatorname{SWAP}_{de}}{\rightarrow}  
	\frac{\ket{j}_e}{\sqrt{M}}\sum^{M-1}_{i=0}\ket{i}_d\ket{i\ge j}_f
	\\\nonumber
	\underset{U^\dag\;\text{on}\;e}{\rightarrow}  
	&\frac{1}{M}\sum^{M-1}_{i=0}\ket{i}_d\ket{0}_e\ket{i\ge j}_f+\cdots,
	\end{align}
	where $\ket{i\ge j}_f = \ket{1}_f$ if $i\ge j$ and is $\ket{0}_f$ if $i<j$. This defines the following circuit $\operatorname{LT}$ that encodes $G$ using $\mathcal{O}(\log{(M)})$ primitive gates.
	\begin{align}
	\operatorname{LT}&=(\openone_f\otimes U^\dag\otimes \openone_{d})\cdot(\openone_f\otimes\operatorname{SWAP}_{de})\cdot\operatorname{COMP}\cdot(\openone_f\otimes U\otimes \openone_{d}),
	\\ \nonumber
	&\Rightarrow(\bra{0}_{ef}\otimes \openone_d)\operatorname{LT}(\ket{0}_{ef}\otimes \openone_d)=\frac{1}{M}\sum^{M-1}_{i=0}\sum^{M-1}_{j=i+1}\ket{i}\bra{j}_d=G.
	\end{align}
	\begin{figure*}[t]
		\tikzstyle{DYSKspace} = [text width = 25pt]
\tikzstyle{TDSspace} = [text width = 25pt]
\tikzstyle{REFspace} = [text width = 20pt]
{\centering
	\begin{tikzpicture}
	
	\matrix[row sep=0.0cm, column sep=0.1cm] (circuit) {
		\node (f) {\ketb{0}{f}}; && \node[DYSKspace] (fV) {}; & \node[minimum size = 10pt] (fStart) {};&\node[minimum size = 7pt] () {};&&&&
		&\node[ctrl] (addf0) {};& \node (leftbracketf) {};
		&& \node[oplus] (COMPf) {$\pmb\oplus$}; &&&& \node[ctrl] (addCTRLf-2) {};
		&	\coordinate (endfbracket); &&& \coordinate (endf);\\
		
		\node (e) {\ketb{0}{e}}; & \node {\textbackslash};&&&&\node {\textbackslash};&&&
		&\node[ctrl] () {};&
		& \node[operator] () {$U$}; &\node[operatorDoubleRegisterWidth] (COMPe) {};&\node[operatorDoubleRegisterWidth] (SWAPe) {};&& \node[operator] () {$U^\dag$};& \node[ctrl] () {};&&
		&	\coordinate (ende);\\
		
		\node (d) {\ketb{0}{d}}; & \node {\textbackslash};&&&&\node {\textbackslash};&
		\node[operator] {$U$}; &\node[joint] (HAMT-d) {T};&
		\node[operator] {$U^\dag$};&\node[ctrl] () {}; &
		&\node[operator] {$U$};&\node (COMPd) {};&\node (SWAPd) {};&\node[joint] (HAMTd-2) {T}; & \node[operator] {$U^\dag$};& \node[ctrl] () {};
		&&&	\coordinate (endd);\\
		
		\node (c) {\ketb{0}{c}}; & \node {\textbackslash}; &&\node (cDoubleStart) {};& 
		& \node[doubleLineVdots] (c0dots) {$\vdots$}; 
		&& \node[ctrlDoubleLine] (c0) {}; & \node[operatorDoubleLine] () {$\operatorname{ADD}_c^\dag$};& \node[operatorDoubleLine] (addc0) {$\operatorname{ADD}_c$};   
		&&&&
		& \node[ctrlDoubleLine] (HAMTc-2) {}; & \node[operatorDoubleLine] () {$\operatorname{ADD}_c^\dag$};  & \node[operatorDoubleLine] (addc1) {$\operatorname{ADD}_c$}; 
		&&&	\coordinate (endc); \\
		
		\node (b) {\ketb{k}{b}}; & \node {\textbackslash}; &&\node (bDoubleStart) {};&
		& \node[doubleLineVdots] (b0dots) {$\vdots$}; 
		&& \node[ctrlDoubleLine] (b0) {}; && \node[operatorDoubleLine] (addb0) {$\operatorname{ADD}_b^\dag$};  & 
		&&& & \node[ctrlDoubleLine] (HAMTb-2) {};
		&&  \node[operatorDoubleLine] (addb1) {$\operatorname{ADD}_b^\dag$};  
		&\node[minimum size = 13pt] () {};& \node[operatorDoubleLine] () {$\operatorname{ADD}_b^K$}; &	\coordinate (endb);\\
		
		\node (a) {\ketb{0}{a}}; & \node {\textbackslash}; &&&
		&\node {\textbackslash}; 
		&& \node[joint] {}; && \node[ctrl] (aAdd) {};	& 
		  &	& 
		&& \node[joint] {};&& \node[ctrl] (adda1) {}; 
		&&&	\coordinate (enda);\\
		
		\node (s) {\ketb{\psi}{s}}; & \node {\textbackslash}; & \node (sV) {};
		&\node (sStart) {};&
		&\node {\textbackslash}; 
		&& \node[operator] (U1) {$\operatorname{HAM-T}$}; &	& 
		&\node (leftbrackets) {};&&&& \node[operator] (U2) {$\operatorname{HAM-T}$}; &	& 
		&	\coordinate (endsbracket);&&\coordinate (ends);\\
	};
	\begin{pgfonlayer}{background}
	\draw[doubleLine,thin] (cDoubleStart) -- (endc) (bDoubleStart) -- (endb);
	\draw[]   (f) -- (endf); 
	\draw[thick] (a) -- (enda) (c) -- (cDoubleStart) (b) -- (bDoubleStart) (s) -- (ends) (d) -- (endd) (e) -- (ende); 
	\draw[thick] (HAMT-d) -- (U1) (addf0) -- (aAdd);
	\draw[thick] (HAMTd-2) -- (U2) (addCTRLf-2) -- (adda1);
	\draw[thick] (fV) -- (sV);
	\draw[thick] (COMPf) -- (COMPe);
	\node[fit = (fStart) (sStart), fill=white,inner sep=6pt, xshift=6pt]  (dots) {$=$};
	\node[fit = (fV) (sV),operatorDoubleRegister,DYSKspace] () {$\operatorname{DYS}_K$};
	
	\node[fill=white,inner sep = 1pt] () [fit = (leftbracketf) (leftbrackets)] {};
	\draw[decorate,decoration={brace, amplitude = 5pt,mirror},thick, fill=white] ($(leftbracketf.north west)+(0.3cm,+0.2cm)$) to ($(leftbrackets.south west)+(0.3cm,-0.2cm)$);
	\draw[decorate,decoration={brace, amplitude = 5pt},thick, fill=white] ($(endfbracket.north west)+(0.0cm,+0.2cm)$) to ($(endsbracket.south west)+(0.0cm,-0.2cm)$);
	\node[fit = (SWAPe) (SWAPd),operatorDoubleRegister] () {$\operatorname{SWAP}$};
	\node[fit = (COMPe) (COMPd),operatorDoubleRegister] () {$\operatorname{COMP}$};
	\node at ($(endfbracket.north west)+(0.7cm,+0.2cm)$) {$K-1$};
	\end{pgfonlayer}
	\end{tikzpicture}
}
\newline
\begin{tabular}{cc}
	\begin{tikzpicture}
	
	\matrix[row sep=-0.1cm, column sep=0.1cm] (circuit) {
		\node (f) {\ketb{0}{f}}; & & \node[TDSspace] (fV) {}; & \node[minimum size = 10pt] (fStart) {};&\node[minimum size = 7pt] () {};&&
		&\node[TDSspace] (fTDS2) {};&
		&	\coordinate (endf);\\
		
		\node (e) {\ketb{0}{e}}; & \node {\textbackslash};&&&&\node {\textbackslash};&
		&&
		&	\coordinate (ende);\\
		
		\node (d) {\ketb{0}{d}}; & \node {\textbackslash};&&&&\node {\textbackslash};&
		&&
		&	\coordinate (endd);\\
		
		\node (c) {\ketb{0}{c}}; & \node {\textbackslash}; &&&&\node {\textbackslash};&
		&&
		&	\coordinate (endc); \\
		
		\node (b) {\ketb{0}{b}}; & \node {\textbackslash}; &&&&\node {\textbackslash};&
		\node[operator] () {$\operatorname{COEF}$}; &&\node[operator] () {$\operatorname{COEF}'^\dag$};
		&	\coordinate (endb);\\
		
		\node (a) {\ketb{0}{a}}; & \node {\textbackslash}; &&&&\node {\textbackslash};&
		&&
		&	\coordinate (enda);\\
		
		\node (s) {\ketb{\psi}{s}}; & \node {\textbackslash}; & \node (sV) {}; &\node (sStart) {};&&\node {\textbackslash};  &
		&\node (sTDS2) {};&
		&	\coordinate (ends);\\
	};
	\begin{pgfonlayer}{background}
	\draw[thin]   (f) -- (endf); 
	\draw[thick] (a) -- (enda) (c) -- (endc) (b) -- (endb) (s) -- (ends) (d) -- (endd) (e) -- (ende);
	\draw[thick] (fV) -- (sV);
	\node[fit = (fStart) (sStart), fill=white,inner sep=6pt, xshift=6pt]  (dots) {$=$};
	\node[fit = (fV) (sV),operatorDoubleRegister,TDSspace] () {$\operatorname{TDS}_\beta$};
	\node[fit = (fTDS2) (sTDS2),operatorDoubleRegister,TDSspace] () {$\operatorname{DYS}_K$};
	\end{pgfonlayer}
	\end{tikzpicture}
	&
	\begin{tikzpicture}
	
	\matrix[row sep=-0.1cm, column sep=0.1cm] (circuit) {
		\node (f) {\ketb{0}{f}}; & & \node[TDSspace] (fV) {}; & \node[minimum size = 10pt] (fStart) {};&\node[minimum size = 7pt] () {};&&
		\node[TDSspace] (fTDS1) {};&
		\node[REFspace] (fREF2) {};&
		\node[TDSspace] (fTDS3) {};&
		\node[REFspace] (fREF4) {};&
		\node[TDSspace] (fTDS5) {};&
		&	\coordinate (endf);\\
		
		\node (e) {\ketb{0}{e}}; & \node {\textbackslash};&&&&\node {\textbackslash};&
		&&&&&
		&	\coordinate (ende);\\
		
		\node (d) {\ketb{0}{d}}; & \node {\textbackslash};&&&&\node {\textbackslash};&
		&&&&&
		&	\coordinate (endd);\\
		
		\node (c) {\ketb{0}{c}}; & \node {\textbackslash}; &&&&\node {\textbackslash};&
		&&&&&
		&	\coordinate (endc); \\
		
		\node (b) {\ketb{0}{b}}; & \node {\textbackslash}; &&&&\node {\textbackslash};&
		&&&&&
		&	\coordinate (endb);\\
		
		\node (a) {\ketb{0}{a}}; & \node {\textbackslash}; &&&&\node {\textbackslash};&
		\node (aTDS1) {};&
		\node (aREF2) {};&
		\node (aTDS3) {};&
		\node (aREF4) {};&
		\node (aTDS5) {};&
		&	\coordinate (enda);\\
		
		\node (s) {\ketb{\psi}{s}}; & \node {\textbackslash}; & \node (sV) {}; &\node (sStart) {};&&\node {\textbackslash}; &
		\node (sTDS1) {};&
		\node (sREF2) {};&
		\node (sTDS3) {};&
		\node (sREF4) {};&
		\node (sTDS5) {};&
		&	\coordinate (ends);\\
	};
	\begin{pgfonlayer}{background}
	\draw[thin]   (f) -- (endf); 
\draw[thick] (a) -- (enda) (c) -- (endc) (b) -- (endb) (s) -- (ends) (d) -- (endd) (e) -- (ende);
	\draw[thick] (fV) -- (sV);
	\node[fit = (fStart) (sStart), fill=white,inner sep=6pt, xshift=6pt]  (dots) {$=$};
	\node[fit = (fV) (sV),operatorDoubleRegister,REFspace] () {$\operatorname{TDS}$};
	\node[fit = (fTDS1) (sTDS1),operatorDoubleRegister,TDSspace] () {$\operatorname{TDS}_\beta$};
	\node[fit = (fTDS3) (sTDS3),operatorDoubleRegister,TDSspace] () {$\operatorname{TDS}^\dag_\beta$};
	\node[fit = (fTDS5) (sTDS5),operatorDoubleRegister,TDSspace] () {$\operatorname{TDS}_\beta$};
	\node[fit = (fREF2) (aREF2),operatorDoubleRegister,REFspace] () {$\operatorname{REF}$};
	\node[fit = (fREF4) (aREF4),operatorDoubleRegister,REFspace] () {$\operatorname{REF}$};
	\end{pgfonlayer}
	\end{tikzpicture}
\end{tabular}
		\caption{	\label{fig:DYS_compressed} Quantum circuit representation of (top) $\operatorname{DYS}_K$ in~\cref{eq:DYS-K-compressed}, implemented using the compression gadget~\cref{Thm:compression_gadget} depicted in~\cref{fig:Compression-Gadget}; (bottom, left) a single step of time-evolution by the truncated Dyson series algorithm from~\cref{eq:LCU_TDS} before oblivious amplitude amplification; (bottom, right) a single step of time-evolution by the truncated Dyson series algorithm from~\cref{eq:TDS_compressed}. Note that when $\beta=2$, a single-round of oblivious amplitude amplification is used.}
	\end{figure*}
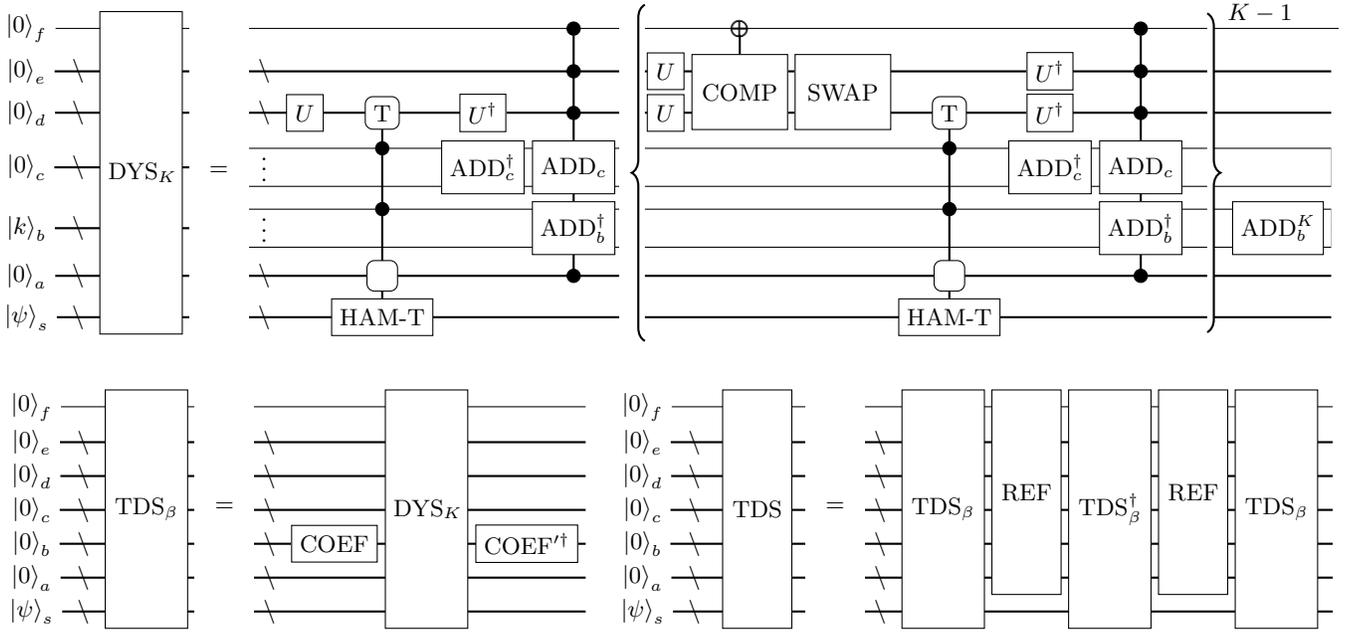
	
	One may then verify that the terms $B_k$ are generated by the following sequence
	\begin{align}
	\label{eq:TDS_intermediateStep}
	\bra{0}_d U^\dag \cdot D\cdot U\ket{0}_d &=\frac{B_1}{M}= \frac{1}{M}\sum^{M-1}_{m_1=0}H(\Delta m),
	\\ \nonumber
	\bra{0}_d U^\dag \cdot (D\cdot G)\cdot D\cdot U\ket{0}_d &=\frac{B_2}{M^2}= \frac{1}{M^2}\sum_{0\le m_1 < m_2 < M} H(\Delta m_2)H(\Delta m_1),
	\\\nonumber
	\vdots
	\\\nonumber
	\bra{0}_d U^\dag\cdot (D \cdot G)^{k-1}\cdot D \cdot U\ket{0}_d &=\frac{B_k}{M^k}= \frac{1}{M^k}\sum_{0\le m_1< m_2 < \cdots m_j < M} H(\Delta m_j)\cdots H(\Delta m_2)\cdots H(\Delta m_1).
	\end{align}
	Thus we make the choice
	\begin{align}
	U_k:=
	\begin{cases}
	(U^\dag\otimes \openone_{aefs})\cdot(\operatorname{HAM-T}\otimes \openone_{ef})\cdot(U\otimes \openone_{aefs}),& k = 1,\\
	(U^\dag\otimes \openone_{aefs})\cdot(\operatorname{HAM-T}\otimes \openone_{ef})\cdot(\operatorname{LT}\otimes \openone_{as})\cdot(U\otimes \openone_{aefs}),& k > 1.\\
	\end{cases}
	\end{align}
	Combined with~\cref{Thm:compression_gadget}, this leads to the circuit of~\cref{fig:DYS_compressed} which implements $\operatorname{DYS}_K$ in~\cref{eq:DYS-K-compressed} by identifying `$\text{others}$' with the $d$, $e$ and $f$ registers, and recognizing from~\cref{eq:TDS_intermediateStep} that the scaling factor $\gamma_k=\frac{1}{M^k}$. In other words, 
	\begin{align}
	\label{eq:DYS-K-compressed_complete}
	&(\bra{0}_{acdef}\otimes \openone_{bs}) \operatorname{DYS}_{K} (\ket{0}_{acdef}\otimes \openone_{bs}) 
	= \sum^K_{k=0}\ket{k}\bra{k}_{b}\otimes \frac{B_k}{M^k},
	\end{align}
	According to~\cref{Thm:compression_gadget}, the number of primitive gates required by $\operatorname{DYS}_{K}$, excluding that for the $U_k$, is $\mathcal{O}(K(n_a+n_d+n_e+n_f+\log{(K)}))=\mathcal{O}(K(n_a+\log{(M)}+\log{(K)}))$.
	
	We then select the desired linear combination of different orders in the Dyson series with the state preparation unitary
	\begin{align}
	\operatorname{COEF}\ket{0}_b=\frac{1}{\sqrt{\beta}}\sum^K_{k=0}\sqrt{(-it)^k}\ket{k}_b,
	\quad
	\operatorname{COEF}'\ket{0}_b=\frac{1}{\sqrt{\beta}}\sum^K_{k=0}\sqrt{t^k}\ket{k}_b,
	\quad
	\beta
	=\sum^K_{j=0}t^k
	\le \sum^\infty_{k=0}t^k=\frac{1}{1-t},
	\end{align}
	which can be implemented using $\mathcal{O}(K)$ primitive gates~\cite{shende2006synthesis}. In summing the $t^k$, we assume that $t< 1$ for convergence. The resulting unitary $\operatorname{TDS}_\beta$, is defined as follows.
	\begin{align}
	\label{eq:LCU_TDS}
	\operatorname{TDS}_\beta&:= (\operatorname{COEF}'^\dag\otimes \openone_{acdefs})\cdot \operatorname{DYS}_{K}\cdot (\operatorname{COEF}\otimes \openone_{acdefs})
	\\\nonumber
	&\Rightarrow(\bra{0}_{abcdef}\otimes \openone_s) \operatorname{TDS}_\beta (\ket{0}_{abcdef}\otimes \openone_s)
	=\frac{\sum^K_{k=0}(-it)^k B_k}{M^k\beta}\approx \frac{\mathcal{T} e^{-i\int_0^t H(s) \mathrm{d} s} }{\beta}.
	\end{align}
	Using the provided parameters for $K\in\mathcal{O}\left(\frac{\log{(1/\epsilon)}}{\log\log{(1/\epsilon)}}\right)$ and $M=\frac{t^2}{\epsilon} \left(\langle \|\dot{H}\| \rangle  +\max_s \|H(s)\|^2\right)$, the numerator, by~\cref{Thm:Truncated_Dyson_Algorithm}, approximates a unitary operation to error $\mathcal{O}(\epsilon)$.
	
	The probability of applying this operation can be boosted from  $|\frac{1+\Theta(\epsilon)}{\beta}|^2$ to $1-\mathcal{O}(\epsilon)$. If we choose $t=\Theta(1)\approx 1/2$ to be sufficiently small such that $\beta=2$, then a single round of robust oblivious amplitude amplification, outlined in~\cref{lem:Robust_OAA},  suffices. This implements a single time-step of the truncated Dyson series algorithm  $\operatorname{TDS}$ in~\cref{fig:DYS_compressed} as follows.
	\begin{align}
	\label{eq:TDS_compressed}
	\operatorname{TDS}&=-\operatorname{TDS}_2\cdot(\operatorname{REF}\otimes \openone_s)\cdot\operatorname{TDS}_2^\dag\cdot(\operatorname{REF}\otimes \openone_s)\cdot\operatorname{TDS}_2,
	\\\nonumber
	&\Rightarrow\|(\bra{0}_{abcdef}\otimes \openone_s) \operatorname{TDS} (\ket{0}_{abcdef}\otimes \openone_s)
	- \mathcal{T} [e^{-i\int_0^t H(s) \mathrm{d} s}]\|\in\mathcal{O}(\epsilon).
	\end{align}
	Note that each reflection $\operatorname{REF}=\openone_{abcdef}-2\ket{0}\bra{0}_{abcdef}$ acts on $n_a+\mathcal{O}(\log{(K)}+\log{(M)})$ qubits and therefore costs $\mathcal{O}(n_a+\log{(K)}+\log{(M)})$ gates. If we wish to simulate time-evolution for any $t\le \frac{1}{2}$, this will lead to $\beta\le 2$. In this situation, there are a variety of methods to boost $\beta$ back to $2$. Unlike oblivious amplitude amplification, this corresponds to decreasing the success probability and is easy to accomplish. For instance, introducing an additional qubit together with a $1$ single-qubit rotation may be used as described by~\cite{berry2015simulating} to artificially decrease the overlap.
	
	We now tally the query, gate, and qubit complexity. From~\cref{fig:DYS_compressed}, the number of $\operatorname{HAM-T}$ queries is $3K\in\mathcal{O}\left(\frac{\log{(1/\epsilon)}}{\log\log{(1/\epsilon)}}\right)$. The gate complexity is that of $\operatorname{REF}$, $\operatorname{COEF}$, $K$ times of $\operatorname{LT}$, and $K$ times of the multiply-controlled modular addition circuits $\operatorname{ADD}$. This is dominated by the addition circuits, with gate complexity $\mathcal{O}(K(n_a+\log{(M)}+\log{(K)}))=\mathcal{O}(n_a+\log{(M)}\frac{\log{(1/\epsilon)}}{\log\log{(1/\epsilon)}})$ as $M$ has the dominant $\epsilon$ scaling. The number of qubits in each register is $n_b=n_c\in\mathcal{O}(\log{(K)})$, $n_d=n_e\in\mathcal{O}(\log{(M)})$, and $n_f=1$. Thus $n_s+n_a+n_b+n_c+n_d+n_e+n_f=n_s+n_a+\mathcal{O}(\log{(K)}+\log{(M)})=n_s+n_a+\mathcal{O}(\log{(M)})$. However, the control logic for multiply-controlled unitaries can require up to a single duplication of the control registers. Thus the qubit complexity is $n_s+\mathcal{O}(n_a+\log{(M)})$. Note that we leave $n_s$ out of the big-$\mathcal{O}$ set as this register is never duplicated.
\end{proof}

\section{Truncated Dyson series algorithm by duplicating control registers}
\label{sec:dyson_series_alg_duplicated_registers}
In this section, we present a quantum algorithm that applies the $\epsilon$-approximation $\tilde{U}=\sum^\infty_{k=0} \frac{(-it)^k }{M^k}B_k$ to the time-ordered evolution operator $\mathcal{T}\left[e^{-i\int_0^t H(s) \mathrm{d}s}\right]$, where the truncation order $K$ and the number of discretization points $M$ are given by~\cref{Thm:Truncated_Dyson_Algorithm}. This version is based on the original proposal by~\cite{berry2015simulating}, and applies the same operator as~\cref{Thm:Compressed_TDS} with the same query and gate complexity, but has worse space complexity. Our contributions here are rigorous bounds on $K$ and $M$, and the implementation of a key step not discussed previously -- the efficient preparation of a particular quantum state that correctly selects a desired linear combination of time-ordered products of Hamiltonians. This step is non-obvious as the state has $\mathcal{O}(M!)$ different amplitudes, and in the worst-case would take $\mathcal{O}(M!)$ gates to create by arbitrary state preparation techniques. The cost of this implementation is captured by the following theorem.
\begin{theorem}[Hamiltonian simulation by a truncated Dyson series with duplicated registers]
	\label{Thm:Original_TDS}
	Let $H(s) : [0,t] \rightarrow \mathbb{C}^{2^{n_s}\times 2^{n_s}}$, let it be promised that $\max_{s}\|H(s)\|\le\alpha$ and $\langle\|\dot H\|\rangle=\frac{1}{t}\int^t_{0} \left\|\frac{\mathrm{d} H(s)}{ \mathrm{d} s}\right\| \mathrm{d}s$ and assume that the number of discretization points obeys $M\in{\mathcal{O}}\left( \frac{t^2}{\epsilon}\left({\langle \|\dot{H}\| \rangle} +{\max_s \|H(s)\|^2}\right)\right)$ in \cref{def:HAM-T}. 
	 For all $t\in[0,\frac{1}{2\alpha}]$ and $\epsilon> 0$, an operation $W$  can be implemented  such that $\left\|W-\mathcal{T}\left[ e^{-i\int_0^t H(s) \mathrm{d} s}\right]\right\| \le \epsilon$ with failure probability $\mathcal{O}(\epsilon)$ with the following costs.
	\begin{enumerate}
		\item Queries to $\operatorname{HAM-T}$: $\mathcal{O}\left(\frac{\log{(1/\epsilon)}}{\log\log{(1 /\epsilon)}}\right)$.
		\item Qubits: $n_s + \mathcal{O}\left(\left(n_a+\log{ \frac{t^2}{\epsilon}\left({\langle \|\dot{H}\| \rangle} +{\max_s \|H(s)\|^2}\right)}\right)\right)$.
		\item Primitive gates: $\mathcal{O}\left(\left(n_a + \log{ \frac{t^2}{\epsilon}\left({\langle \|\dot{H}\| \rangle} +{\max_s \|H(s)\|^2}\right)}\right)\frac{\log{(1/\epsilon)}}{\log\log{(1/\epsilon)}}\right)$.
	\end{enumerate}
\end{theorem}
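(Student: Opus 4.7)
The plan is to mirror the structure of the proof of~\cref{Thm:Compressed_TDS}, but replace the compression gadget of~\cref{Thm:compression_gadget} with a direct implementation that uses $K$ duplicated copies of the ancilla registers $a$ and $d$ of $\operatorname{HAM-T}$. As before, the starting point is~\cref{Thm:Truncated_Dyson_Algorithm}: truncate at order $K=\mathcal{O}(\log(1/\epsilon)/\log\log(1/\epsilon))$ and discretize with $M$ points chosen according to its hypotheses, so that the target operator equals $\sum_{k=0}^{K}(-it/M)^{k} B_k$ up to error $\mathcal{O}(\epsilon)$. The goal then reduces to implementing a block-encoding of this linear combination and boosting its success probability via robust oblivious amplitude amplification (\cref{lem:Robust_OAA}).

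To implement the $k$-th term $(-it/M)^{k}B_k$, I would allocate $K$ independent copies of the registers $(d_j,a_j)$, and apply $\operatorname{HAM-T}_j$ (the $j$-th copy) to each. On input $\ket{0}_{d_j a_j}$ replaced by a uniform superposition $\frac{1}{\sqrt M}\sum_{m_j}\ket{m_j}_{d_j}\ket{0}_{a_j}$ on each $d_j$, this produces the unordered sum $\tilde B_k/M^k$. Then I would select out only the strictly ordered tuples $0\le m_1<\cdots<m_k<M$ using $K-1$ comparators acting on consecutive pairs $(d_j,d_{j+1})$, each flipping an additional flag qubit to $\ket{1}$ upon violation; postselecting all flags to $\ket{0}$ isolates $B_k/M^k$ on the $s$ register. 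A separate coefficient register $b$, prepared as $\operatorname{COEF}\ket{0}_b=\beta^{-1/2}\sum_k \sqrt{(-it)^k}\ket{k}_b$ (implementable in $\mathcal{O}(K)$ gates by arbitrary state preparation), controls which of the $K$ HAM-T applications are active: for a value $\ket{k}_b$, only the first $k$ of the $K$ copies apply $\operatorname{HAM-T}_j$, while the remaining copies apply the identity. Summing over $k$ with the $\operatorname{COEF}$ amplitudes gives exactly the desired block-encoding of $\sum_k (-it/M)^k B_k / \beta$.

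The main obstacle, as flagged in the theorem statement, is the preparation of the ``ordered-indices'' state without incurring an $\mathcal{O}(M!)$ cost from generic state preparation. My approach sidesteps this by never preparing this state as a single object: the uniform product state $\otimes_j (U\ket{0}_{d_j})$ is cheap ($\mathcal{O}(K\log M)$ gates), and the ordering constraint is enforced a posteriori by the $K-1$ pairwise comparators, each of cost $\mathcal{O}(\log M)$, acting on a single flag ancilla. Since the amplitudes we want are all equal on the ordered tuples, this effectively produces the correct superposition after postselection on the flags. This is the discrete analogue of the triangular operator $G$ used in the compressed version. Putting everything inside a call $\operatorname{TDS}_\beta$ with $\beta=2$ (choosing $t\le 1/(2\alpha)=\Theta(1)$ as in the segment assumption), one round of robust oblivious amplitude amplification turns the $1/\beta$ success amplitude into $1-\mathcal{O}(\epsilon)$.

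Finally I would tally costs. Each of the three copies of $\operatorname{TDS}_\beta$ in the amplitude-amplified circuit uses $K$ queries to $\operatorname{HAM-T}$ (one per duplicated register), giving $3K=\mathcal{O}(\log(1/\epsilon)/\log\log(1/\epsilon))$ queries in total. Primitive-gate cost is dominated by the $K$ comparators, the $K$ state preparations $U$ on the $d_j$ registers, the $\operatorname{COEF}$ preparation, and the two reflections of amplitude amplification, summing to $\mathcal{O}\bigl((n_a+\log M)\log(1/\epsilon)/\log\log(1/\epsilon)\bigr)$. Qubit overhead is the sum of the $K$ duplicated $(a,d)$ registers plus $\mathcal{O}(\log K)$ for the $b$ register and $\mathcal{O}(K)$ flag qubits, giving $n_s+\mathcal{O}\bigl((n_a+\log M)\log(1/\epsilon)/\log\log(1/\epsilon)\bigr)$, which matches the stated bound once $M$ is substituted from~\cref{Thm:Truncated_Dyson_Algorithm}. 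The query and gate counts coincide with~\cref{Thm:Compressed_TDS}; the only difference is that the qubit count now carries the extra multiplicative $K$ factor, exactly the overhead that the compression gadget was designed to remove.
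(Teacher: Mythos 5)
Your proposal is correct and follows essentially the same route as the paper's own proof: duplicate the $(a,d)$ registers $K$ times with a unary-style control selecting how many $\operatorname{HAM-T}$ copies fire, enforce the ordering $m_1<\cdots<m_k$ by preparing uniform superpositions and flagging violations with pairwise comparators (accepting the $|\gamma_k|^2$ amplitude loss, which is exactly compensated by the $\operatorname{COEF}$ normalization so that $\beta=\sum_k t^k\le 2$), and finish with one round of robust oblivious amplitude amplification. The only detail you leave implicit is that the uniform superpositions on the $d_j$ registers must also be \emph{unprepared} before the final projection onto $\ket{0}$ so that the block-encoding yields the coherent Riemann sum $B_k$ rather than a single term -- the paper packages this step inside $\operatorname{PREP}_K^\dagger$.
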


\begin{proof}
	Let $\operatorname{HAM-T}_K$ be a unitary that acts jointly on registers $s,\vec{a},\vec{b},c,\vec{d}$. This unitary is defined to apply products of Hamiltonians
	\begin{multline}
	\label{eq:HAM-T_k}
	(\bra{0}_{\vec{a}}\otimes \openone_s) \operatorname{HAM-T}_{K} (\ket{0}_{\vec{a}}\otimes \openone_s) 
	\\
	:=
	\left(\sum^K_{k=0}\ket{k}\bra{k}_{\vec{b}}\otimes\left(\sum_{\vec{m}\in[M]^k}\ket{\vec{m}}\bra{\vec{m}}_{d_1\cdots d_k}\otimes \openone_{d_{k+1}\cdots d_K}\otimes \left(\prod^k_{j=1}H(m_j\Delta)\right)\right)+\cdots\right)\otimes \operatorname{SWAP}_c,
	\end{multline}
	where $\operatorname{SWAP}_c$ swaps the two qubits of register $c$, and a possible implementation is depicted in~\cref{fig:HAM-T_K}. Note that we only define the action of $\operatorname{HAM-T}_K$ for input states to register $\vec{b}$ that are spanned by basis states of the unary encoding $\ket{k}_{\vec{b}}=\ket{0}^{\otimes k}\ket{1}^{\otimes K-k}$, which determines the number of terms in the product. As seen in the figure, $\operatorname{HAM-T}_K$ makes $K$ queries to $\operatorname{HAM-T}$ and copies the $a, b$, and $d$ registers $K$ times.
	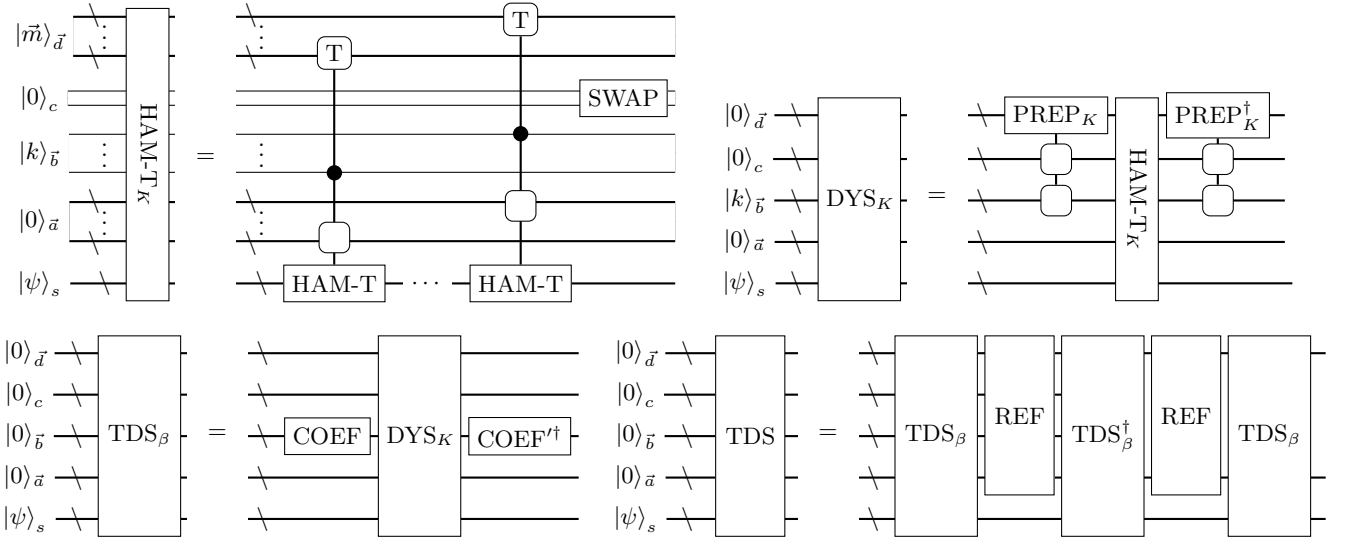
\begin{figure*}[t]
		\tikzstyle{HAMTKspace} = [text width = 10pt]
\tikzstyle{TDSspace} = [text width = 25pt]
\tikzstyle{REFspace} = [text width = 20pt]
\tikzstyle{DLup} = [above=-0pt]
\tikzstyle{DLdown} = [below=-0pt]
\tikzstyle{DLBSdup} = [DLup,xshift=-4pt]
\tikzstyle{DLBSddown} = [DLdown,xshift=-4pt]
\begin{tabular}{ll}
\begin{tikzpicture}[thick]
	\matrix[row sep=0.0cm, column sep=0.1cm] (circuit) {
	\node (d) {$|\vec{m}\rangle_{\vec{d}}$}; &\node[doubleLineVdots] (d0dots) {$\vdots$};\node[DLBSdup] {\textbackslash};\node[DLBSddown] {\textbackslash}; & \node[HAMTKspace,yshift = 4pt] (dV) {}; & \node[minimum size = 10pt] (dStart) {};&\node[minimum size = 7pt] () {};&\node[doubleLineVdots] (d0dots) {$\vdots$};\node[DLBSdup] {\textbackslash};\node[DLBSddown] {\textbackslash};&
	\node[joint, DLdown] (HAMd-1) {T};&
	\node[text width=20pt] () {};&
	\node[joint, DLup] (HAMd-2) {T};&
	&	\coordinate (endd);\\
	
	\node (c) {\ketb{0}{c}}; &  &&&&&
	&&&\node[operator] () {SWAP};
	&	\coordinate (endc); \\
	
	\node (b) {\ketb{k}{}$_{\vec{b}}$}; &\node[doubleLineVdots] (b0dots) {$\vdots$}; &&&&\node[doubleLineVdots] () {$\vdots$};&
	\node[ctrlDoubleLineDown] () {};&&
	\node[ctrlDoubleLine] () {};&
	&	\coordinate (endb);\\
	
	\node (a) {\ketb{0}{}$_{\vec{a}}$}; & \node[doubleLineVdots] (a0dots) {$\vdots$};\node[DLBSdup] {\textbackslash};\node[DLBSddown] {\textbackslash}; &&&&\node[doubleLineVdots] () {$\vdots$};\node[DLBSdup] {\textbackslash};\node[DLBSddown] {\textbackslash};&
	\node[joint, DLdown] () {};&&
	\node[joint, DLup] () {};&
	&	\coordinate (enda);\\
	
	\node (s) {\ketb{\psi}{s}}; & \node {\textbackslash}; & \node (sV) {}; &\node (sStart) {};&&\node {\textbackslash}; &
	\node[operator] (HAMs-1) {$\operatorname{HAM-T}$};&\node[fill=white] () {$\cdots$};&
	\node[operator] (HAMs-2) {$\operatorname{HAM-T}$};&
	&	\coordinate (ends);\\
};
\begin{pgfonlayer}{background}
\draw[doubleLineNarrow] (c) -- (endc);
\draw[doubleLine] (b) -- (endb) ;
\draw[doubleLine, thick] (a) -- (enda) (d) -- (endd); 
\draw[thick] (s) -- (ends); 
\draw[thick] (HAMs-1) -- (HAMd-1) (HAMs-2) -- (HAMd-2);
\node[fit = (dStart) (sStart), fill=white,inner sep=6pt, xshift=6pt]  (dots) {$=$};
\node[fit = (dV) (sV),operatorDoubleRegister,HAMTKspace] (HAMTKfit) {};
\node[rotate=-90] at (HAMTKfit) {$\operatorname{HAM-T}_K$};
\end{pgfonlayer}
\end{tikzpicture}
&
\begin{tikzpicture}[thick]
\matrix[row sep=-0.0cm, column sep=0.1cm] (circuit) {
	\node (d) {\ketb{0}{}$_{\vec{d}}$}; &\node {\textbackslash}; & \node[TDSspace] (dV) {}; & \node[minimum size = 10pt] (dStart) {};&\node[minimum size = 7pt] () {};&\node  {\textbackslash};&
	\node[operator] (PREPd-1) {$\operatorname{PREP}_K$};&
	\node[HAMTKspace] (HAMTKd) {};&
	\node[operator] (PREPd-2) {$\operatorname{PREP}^\dag_K$};&
	&	\coordinate (endd);\\
	
	\node (c) {\ketb{0}{c}}; &\node  {\textbackslash};  &&&&\node  {\textbackslash};&
	\node[joint] {};&&\node[joint] {};&
	&	\coordinate (endc); \\
	
	\node (b) {\ketb{k}{}$_{\vec{b}}$}; &\node  {\textbackslash}; &&&&\node  {\textbackslash}; &
	\node[joint] (bJoint1) {};&&	\node[joint] (bJoint2) {};&
	&	\coordinate (endb);\\
	
	\node (a) {\ketb{0}{}$_{\vec{a}}$}; &\node {\textbackslash}; &&&&\node  {\textbackslash};&
	&&&
	&	\coordinate (enda);\\
	
	\node (s) {\ketb{\psi}{s}}; & \node {\textbackslash}; & \node (sV) {}; &\node (sStart) {};&&\node {\textbackslash}; &&\node (HAMTKs) {};&&&
	&	\coordinate (ends);\\
};
\begin{pgfonlayer}{background}
\draw[thick] (c) -- (endc);
\draw[thick] (a) -- (enda) (b) -- (endb) (d) -- (endd); 
\draw[thick] (s) -- (ends); 
\draw[thick] (bJoint1) -- (PREPd-1) (bJoint2) -- (PREPd-2);
\node[fit = (dStart) (sStart), fill=white,inner sep=6pt, xshift=6pt]  (dots) {$=$};
\node[fit = (dV) (sV),operatorDoubleRegister,TDSspace] () {$\operatorname{DYS}_K$};
\node[fit = (HAMTKs) (HAMTKd),operatorDoubleRegister,HAMTKspace] (HAMTKfit) {};
\node[rotate=-90] at (HAMTKfit) {$\operatorname{HAM-T}_K$};
\end{pgfonlayer}
\end{tikzpicture}
\end{tabular}
\newline
\begin{tabular}{ll}
	\begin{tikzpicture}[thick]
	
	\matrix[row sep=0.0cm, column sep=0.1cm] (circuit) {
		\node (f) {\ketb{0}{}$_{\vec{d}}$}; &\node {\textbackslash}; & \node[TDSspace] (fV) {}; & \node[minimum size = 10pt] (fStart) {};&\node[minimum size = 7pt] () {};&\node {\textbackslash};&
		&\node[TDSspace] (fTDS2) {};&
		&	\coordinate (endf);\\
	
		\node (c) {\ketb{0}{c}}; & \node {\textbackslash}; &&&&\node {\textbackslash};&
		&&
		&	\coordinate (endc); \\
		
		\node (b) {\ketb{0}{}$_{\vec{b}}$}; & \node {\textbackslash}; &&&&\node {\textbackslash};&
		\node[operator] () {$\operatorname{COEF}$}; &&\node[operator] () {$\operatorname{COEF}'^\dag$};
		&	\coordinate (endb);\\
		
		\node (a) {\ketb{0}{}$_{\vec{a}}$}; & \node {\textbackslash}; &&&&\node {\textbackslash};&
		&&
		&	\coordinate (enda);\\
		
		\node (s) {\ketb{\psi}{s}}; & \node {\textbackslash}; & \node (sV) {}; &\node (sStart) {};&&\node {\textbackslash};  &
		&\node (sTDS2) {};&
		&	\coordinate (ends);\\
	};
	\begin{pgfonlayer}{background}
	\draw[thick] (c) -- (endc) (b) -- (endb); 
	\draw[thick] (a) -- (enda) (s) -- (ends) (f) -- (endf); 
	\draw[thick] (fV) -- (sV);
	\node[fit = (fStart) (sStart), fill=white,inner sep=6pt, xshift=6pt]  (dots) {$=$};
	\node[fit = (fV) (sV),operatorDoubleRegister,TDSspace] () {$\operatorname{TDS}_\beta$};
	\node[fit = (fTDS2) (sTDS2),operatorDoubleRegister,TDSspace] () {$\operatorname{DYS}_K$};
	\end{pgfonlayer}
	\end{tikzpicture}
	&
	\begin{tikzpicture}[thick]
	
	\matrix[row sep=0.0cm, column sep=0.1cm] (circuit) {
		\node (f) {\ketb{0}{}$_{\vec{d}}$}; &\node {\textbackslash}; & \node[TDSspace] (fV) {}; & \node[minimum size = 10pt] (fStart) {};&\node[minimum size = 7pt] () {};&\node {\textbackslash};&
		\node[TDSspace] (fTDS1) {};&
		\node[REFspace] (fREF2) {};&
		\node[TDSspace] (fTDS3) {};&
		\node[REFspace] (fREF4) {};&
		\node[TDSspace] (fTDS5) {};&
		&	\coordinate (endf);\\

		\node (c) {\ketb{0}{c}}; & \node {\textbackslash}; &&&&\node {\textbackslash};&
		&
		&	\coordinate (endc); \\
		
		\node (b) {\ketb{0}{}$_{\vec{b}}$}; & \node {\textbackslash}; &&&&\node {\textbackslash};&
		&
		&	\coordinate (endb);\\
		
		\node (a) {\ketb{0}{}$_{\vec{a}}$}; & \node {\textbackslash}; &&&&\node {\textbackslash};&
		\node (aTDS1) {};&
		\node (aREF2) {};&
		\node (aTDS3) {};&
		\node (aREF4) {};&
		\node (aTDS5) {};&
		&	\coordinate (enda);\\
		
		\node (s) {\ketb{\psi}{s}}; & \node {\textbackslash}; & \node (sV) {}; &\node (sStart) {};&&\node {\textbackslash}; &
		\node (sTDS1) {};&
		\node (sREF2) {};&
		\node (sTDS3) {};&
		\node (sREF4) {};&
		\node (sTDS5) {};&
		&	\coordinate (ends);\\
	};
	\begin{pgfonlayer}{background}
	\draw[thick] (c) -- (endc) (b) -- (endb); 
	\draw[thick] (a) -- (enda) (s) -- (ends)  (f) -- (endf); 
	\draw[thick] (fV) -- (sV);
	\node[fit = (fStart) (sStart), fill=white,inner sep=6pt, xshift=6pt]  (dots) {$=$};
	\node[fit = (fV) (sV),operatorDoubleRegister,REFspace] () {$\operatorname{TDS}$};
	\node[fit = (fTDS1) (sTDS1),operatorDoubleRegister,TDSspace] () {$\operatorname{TDS}_\beta$};
	\node[fit = (fTDS3) (sTDS3),operatorDoubleRegister,TDSspace] () {$\operatorname{TDS}^\dag_\beta$};
	\node[fit = (fTDS5) (sTDS5),operatorDoubleRegister,TDSspace] () {$\operatorname{TDS}_\beta$};
	\node[fit = (fREF2) (aREF2),operatorDoubleRegister,REFspace] () {$\operatorname{REF}$};
	\node[fit = (fREF4) (aREF4),operatorDoubleRegister,REFspace] () {$\operatorname{REF}$};
	\end{pgfonlayer}
	\end{tikzpicture}
\end{tabular}
		\caption{	\label{fig:HAM-T_K} Quantum circuit representation of (top, left) $\operatorname{HAM}_L$ in~\cref{eq:HAM-T_k}; (top, right) $\operatorname{DYS}_K$ in~\cref{eq:DYS-K}; (bottom, left) a single step of time-evolution by the truncated Dyson series algorithm from~\cref{eq:LCU_TDS_duplicated} before oblivious amplitude amplification; (bottom, right) a single step of time-evolution by the truncated Dyson series algorithm with duplicated ancilla registers . Note that when $\beta=2$, a single-round of oblivious amplitude amplification is used.}
	\end{figure*}

	\begin{align}
	\ket{s_k}_{\vec{d}}:=\sqrt{\frac{k!(M-k)!}{M!}}\left(\sum_{0\le m_1< m_2<\cdots <m_k<M}\ket{\vec{m}}_{d_1\cdots d_k}\right) \ket{0}_{d_{k+1}\cdots d_K}.
	\end{align}
	This state is easy to prepare when $k=1$ -- there, it is simply a uniform superposition over $M$ number states, and costs $\mathcal{O}(\log{M})$ gates. Otherwise, naive methods based on rejection sampling have some success probability $|\gamma_k|^2$ that decreases exponentially with large $k$. Let $\operatorname{PREP}_{K}$ be one such unitary that prepares $\ket{s_k}_{\vec{d}}$ on measurement outcome $\ket{00}_c$.
	\begin{align}
	\label{eq:DYS-state-prep}
	\operatorname{PREP}_{K}\ket{k}_{\vec{b}}\ket{0}_{c\vec{d}}&:=\ket{k}_{\vec{b}}\left(
	\gamma_k\ket{00}_c\ket{s_k}_{\vec{d}}+\sqrt{1-|\gamma_k|^2}\ket{01}_c\cdots\right).
	\end{align}
	For each order $k$, the Riemann sum $B_k$ may be implemented by $\operatorname{DYS}_{K}
	:=(\operatorname{PREP}^\dag_{K}\otimes \openone_{as})\cdot \operatorname{HAM-T}_{K}\cdot (\operatorname{PREP}_{K}\otimes \openone_{as})$, as depicted in~\cref{fig:HAM-T_K}. The unitary $\operatorname{DYS}_{K}$ encodes precisely terms $B_k$ of the Dyson series as follows
	\begin{align}
	\label{eq:DYS-K}
	&(\bra{0}_{\vec{a}c\vec{d}}\otimes \openone_{\vec{b}s}) \operatorname{DYS}_{K} (\ket{0}_{\vec{a}c\vec{d}}\otimes \openone_{\vec{b}s}) 
	= \sum^K_{k=0}\ket{k}\bra{k}_{\vec{b}}\otimes \frac{|\gamma_k|^2 k!(M-k)!}{M!}B_k.
	\end{align}
	
	Now, a linear combination of Dyson series terms is implemented by preparing a state with the appropriate amplitudes in the basis $\ket{k}_{\vec{b}}$. The required state preparation operators are
	\begin{align}
	\label{eq:TDS_components}
	\operatorname{COEF}\ket{0}_{\vec{b}}&:=\frac{1}{\sqrt{\beta}}\sum^K_{k=0}\sqrt{\frac{M!(-it)^k}{M^k|\gamma_k|^2 k! (M-k)!}}\ket{k}_{\vec{b}},
	\quad
	\beta=\sum^K_{k=0}\frac{M!t^k}{M^k|\gamma_k|^2  k!(M-k)!},
	\\\nonumber
	\operatorname{COEF}'\ket{0}_{\vec{b}}&:=\frac{1}{\sqrt{\beta}}\sum^K_{k=0}\sqrt{\frac{M!t^k}{M^k|\gamma_k|^2  k!(M-k)!}}\ket{k}_{\vec{b}},
	\end{align}
	and may be implemented using $\mathcal{O}(K)$ primitive gates. Up to a proportionality factor $\beta$, we obtain the desired linear combination for simulating time-evolution.
	\begin{align}
	\label{eq:LCU_TDS_duplicated}
	\operatorname{TDS}_\beta&:= (\operatorname{COEF}'^\dag\otimes \openone_{\vec{a}c\vec{d}s})\cdot \operatorname{DYS}_{K}\cdot (\operatorname{COEF}\otimes \openone_{\vec{a}c\vec{d}s})
	\\\nonumber
	(\bra{0}_{\vec{a}\vec{b}c\vec{d}}\otimes \openone_s) \operatorname{TDS}_\beta (\ket{0}_{\vec{a}\vec{b}c\vec{d}}\otimes \openone_s)
	&=\frac{\sum^K_{k=0}(-it)^k B_k}{M^k\beta}\approx \frac{\mathcal{T} e^{-i\int_0^t H(s) \mathrm{d} s} }{\beta}.
	\end{align}
	Using the provided parameters for $K$ and $M$, the numerator, by~\cref{Thm:Truncated_Dyson_Algorithm}, approximates a unitary operation to error $\mathcal{O}(\epsilon)$. The probability of applying this operation can be boosted from  $|\frac{1+\Theta(\epsilon)}{\beta}|^2$ to $1-\mathcal{O}(\epsilon)$ using oblivious amplitude amplification~\cite{Berry2014Exponential}. If we choose $t$ to be sufficiently small such that $\beta=2$, then a single round of oblivious amplitude amplification suffices, and we obtain a single time-step of the truncated Dyson series algorithm  $\operatorname{TDS}$ in~\cref{fig:HAM-T_K}, where each reflection $\operatorname{REF}$ acts on $K(n_a+n_d+1)+2$ qubits and therefore costs $K(n_a+n_d+1)+2$ gates. All that remains is to find an implementation of $\operatorname{PREP}_K$ that prepares $\ket{s_k}_{\vec{d}}$ with an amplitude that $|\gamma_k|$ that is sufficiently large so that $t =\Theta(1)$.

	The state $\ket{s_k}_{\vec{b}c\vec{d}}$ can be prepared in a number of ways. The most straightforward approach creates a uniform superposition of states over the dimension-$k$ hypercube using $n_d \times k$ Hadamard gates $\operatorname{HAD}$, then uses $k$ reversible adders to flag states $\ket{\vec{m}}_{d_1\cdots d_k}$ with the correct ordering. This circuit $\operatorname{PREP}_{\ket{s_k}}$ produces $\ket{s_k}_{\vec{d}}$ with amplitude $\gamma_k=\sqrt{\frac{M!}{M^k k!(M-k)!}}$.  $\operatorname{PREP}_K$ is then obtained by controlling $\operatorname{PREP}_{\ket{s_k}}$ on input state $\ket{k}_{\vec{b}}$. Thus
	\begin{align}
	\beta=\sum^K_{k=0}t^k\le \sum^\infty_{k=0}t^k = \frac{1}{1-t}.
	\end{align}
	Thus by choosing $t=\Theta(1)\approx 1/2$, we obtain the desired $\beta=2$. Notably, even though the success probability of naive state preparation $|\gamma_k|^2$ decays rapidly, this only amounts to a constant factor slowdown compared to more sophisticated techniques that effectively prepare $\ket{s_k}_{\vec{d}}$ with success probability $\approx 1$. For example, rather than rejection sampling, one may perform a reversible sort on on uniform superposition of states $\frac{1}{\sqrt{M^k}}\sum_{\vec m}\ket{\vec{m}}_{d_1\cdots d_k}\ket{0}_{\text{garbage}}\rightarrow \frac{1}{\sqrt{M^k}}\sum_{\vec{m}}\ket{\mathcal{T}[m_{d_1}\cdots m_{d_k}]}_{d_1\cdots d_k}\ket{\vec{m}}_{\text{garbage}}$, such as with the quantum bitonic sorting network~\cite{babbush2017sorting}. This effectively increases $\gamma^2_k$ by a factor of $k!$, and uses significantly more ancilla qubits, but ultimately allows us to implement time steps $t\approx \ln 2\approx 0.693$ larger by a constant factor.
	
	If we wish to simulate time-evolution for any $t\le \frac{1}{2}$, this will lead to $\beta\le 2$. In this situation, $\beta$ may be increased to $2$ using single-qubit rotations, as described by~\cite{berry2015simulating}, to artificially worsen the success probability.
\end{proof}

\section{Truncated Taylor series algorithm}
\label{sec:truncated_taylor_algorithm}
The truncated Taylor series simulation algorithm was a major advance in quantum simulation for its conceptual simplicity and computational efficiency.  The original algorithm~\cite{berry2015simulating} is motivated by truncating the Taylor expansion of the time-evolution operator at degree $K$.
\begin{align}
e^{-iHt}&=1-iHt+\frac{(-iHt)^2}{2!}+\frac{(-iHt)^3}{3!}\cdots = \underbrace{\sum^{K}_{k=0}\frac{(-iHt)^k}{k!}}_{\bar{R}_K} + \underbrace{\sum^{\infty}_{k=K+1}\frac{(-iHt)^k}{k!}}_{R_K}.
\end{align}
Assuming that $t>0$ and that the truncation order $K\ge 2\|H\|t$, the norms of $\bar{R}_K$ and the remainder term $R_K$ are bounded by 
\begin{align}
\|\bar{R}_K\|&=\|e^{-iHt}-R_K\| \le 1+\|R_K\|,
\\ \nonumber
\|R_K\| 
&\le \sum^{\infty}_{k=K+1}\frac{(\|H\|t)^k}{k!}
\le \frac{(\|H\|t)^{K+1}}{(K+1)!}\sum^{\infty}_{k=K+2}\left(1/2\right)^{k-K-1}
=\frac{2(\|H\|t)^{K+1}}{(K+1)!}.
\end{align}
Thus any unitary quantum circuit $\operatorname{TTS}$ that acts jointly on registers $a,b,s$ and applies the non-unitary operator $(\bra{00}_{ab}\otimes \openone_s) \operatorname{TTS} (\ket{00}_{ab}\otimes \openone_s)\approx \bar{R}_K$ approximates the time-evolution operator with error $\delta$ and failure probability $p$ given by
\begin{align}
\label{eq:TTS_error}
\delta
&=\left\|e^{-iHt}-\bar{R}_K\right\|
= \left\|R_K\right\|
\le \frac{2(\|H\|t)^{K+1}}{(K+1)!},
\\\nonumber
p&
\le 1-\min_{\ket{\psi}_s}\left|\frac{\bar{R}_K\ket{\psi}_s}{1+\|R_K\|}\right|^2
=1-\min_{\ket{\psi}_s}\left|\frac{(e^{-iHt}-R_K)\ket{\psi}_s}{1+\|R_K\|}\right|^2
\le 1-\left|\frac{1-\|R_K\|}{1+\|R_K\|}\right|^2
= 4\|R_K\|= 4\delta.
\end{align}
Solving~\cref{eq:TTS_error} for $\|H\|t\in{\mathcal{O}}(1)$ gives the required truncation order $K\in{\mathcal{O}}\left(\frac{\log{(1/\delta)}}{\log\log{(1/\delta)}}\right)$. 

The simulation algorithm $\operatorname{TTS}$ in~\cref{Fig:TTS_algorithm} is obtained by constructing two oracles. $\operatorname{HAM}_{K}$, which applies positive integer powers of $(-iH)^k$ up to $k=K$, and $\operatorname{COEF}$, which prepares a quantum state that selects these terms with the right coefficients. $\operatorname{HAM}_{K}$ will require additional ancilla registers, which we index with $\vec{a}$ and $\vec{b}$. Note that the gate and space complexity in the truncated Taylor series algorithm is dominated by that of $\operatorname{HAM}_K$.
\begin{align}
\label{eq:TTS_components}
(\bra{0}_{\vec{a}}\otimes \openone_{s\vec{b}}) \operatorname{HAM}_{K} (\ket{0}_{\vec{a}}\otimes \openone_{s\vec{b}}) &:= \sum^K_{k=0}\ket{k}\bra{k}_{\vec{b}}\otimes (-i H)^k,
\\\nonumber
\operatorname{COEF}\ket{0}_{\vec{b}}&:=\frac{1}{\sqrt{\beta}}\sum^K_{k=0}\sqrt{\frac{t^k}{k!}}\ket{k}_{\vec{b}},
\quad
\beta=\sum^K_{k=0}\frac{t^k}{k!}\le e^{t}.
\end{align}

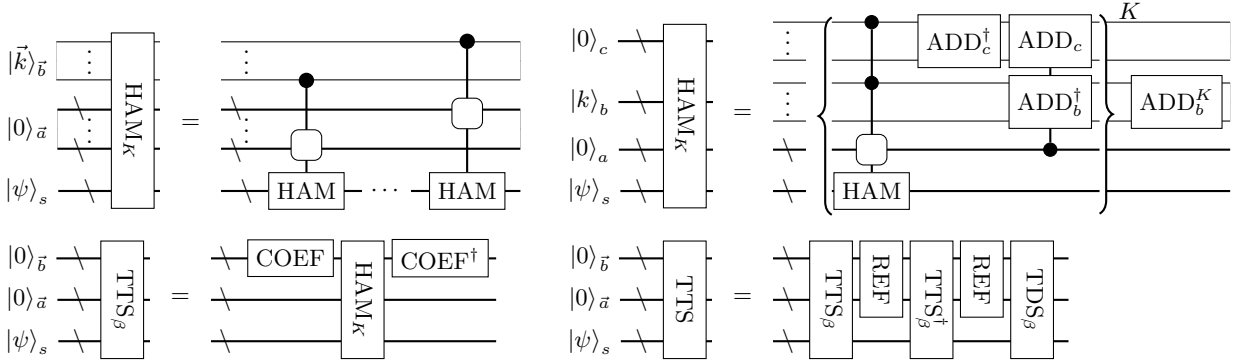
\begin{figure*}[h]
	\tikzstyle{HAMTKspace} = [text width = 10pt]
\tikzstyle{DLup} = [above=-0pt]
\tikzstyle{DLdown} = [below=-0pt]
\tikzstyle{DLBSdup} = [DLup,xshift=-4pt]
\tikzstyle{DLBSddown} = [DLdown,xshift=-4pt]

\begin{tabular}{ll}
\begin{tikzpicture}[thick]
	\matrix[row sep=0.0cm, column sep=0.1cm] (circuit) {
	\node (d) {$|\vec{k}\rangle_{\vec{b}}$}; &\node[doubleLineVdots] (d0dots) {$\vdots$}; & \node[HAMTKspace,yshift = 4pt] (dV) {}; & \node[minimum size = 10pt] (dStart) {};&\node[minimum size = 7pt] () {};&\node[doubleLineVdots] (d0dots) {$\vdots$};&
	\node[ctrlDoubleLineDown] (HAMd-1) {};&
	\node[text width=20pt] () {};&
	\node[ctrlDoubleLine] (HAMd-2) {};&
	&	\coordinate (endd);\\
	
	\node (a) {\ketb{0}{}$_{\vec{a}}$}; & \node[doubleLineVdots] (a0dots) {$\vdots$};\node[DLBSdup] {\textbackslash};\node[DLBSddown] {\textbackslash}; &&&&\node[doubleLineVdots] () {$\vdots$};\node[DLBSdup] {\textbackslash};\node[DLBSddown] {\textbackslash};&
	\node[joint, DLdown] () {};&&
	\node[joint, DLup] () {};&
	&	\coordinate (enda);\\
	
	\node (s) {\ketb{\psi}{s}}; & \node {\textbackslash}; & \node (sV) {}; &\node (sStart) {};&&\node {\textbackslash}; &
	\node[operator] (HAMs-1) {$\operatorname{HAM}$};&\node[fill=white] () {$\cdots$};&
	\node[operator] (HAMs-2) {$\operatorname{HAM}$};&
	&	\coordinate (ends);\\
};
\begin{pgfonlayer}{background}
\draw[doubleLine]  (d) -- (endd); 
\draw[doubleLine, thick] (a) -- (enda) ; 
\draw[thick] (s) -- (ends); 
\draw[thick] (HAMs-1) -- (HAMd-1) (HAMs-2) -- (HAMd-2);
\node[fit = (dStart) (sStart), fill=white,inner sep=6pt, xshift=6pt]  (dots) {$=$};
\node[fit = (dV) (sV),operatorDoubleRegister,HAMTKspace] (HAMTKfit) {};
\node[rotate=-90] at (HAMTKfit) {$\operatorname{HAM}_K$};
\end{pgfonlayer}
\end{tikzpicture}
&
\begin{tikzpicture}
\matrix[row sep=0.0cm, column sep=0.1cm] (circuit) {
	\node (c) {\ketb{0}{c}}; & \node {\textbackslash}; & \node[HAMTKspace] (fV) {}; &\node[minimum size = 10pt] (cDoubleStart) {};& \node[minimum size = 7pt] () {};
	& \node[doubleLineVdots] (c0dots) {$\vdots$}; 
	& \node (leftbracketf) {};
	& \node[ctrlDoubleLine] (HAMTc-2) {}; & \node[operatorDoubleLine] () {$\operatorname{ADD}_c^\dag$};  & \node[operatorDoubleLine] (addc1) {$\operatorname{ADD}_c$}; 
	&	\coordinate (endc); &\node[minimum size = 1pt] () {};&& \coordinate (endcZ); \\
	
	\node (b) {\ketb{k}{b}}; & \node {\textbackslash}; &&\node (bDoubleStart) {};&
	& \node[doubleLineVdots] (b0dots) {$\vdots$}; 
	&& \node[ctrlDoubleLine] (b0) {};&&  \node[operatorDoubleLine] (addb1) {$\operatorname{ADD}_b^\dag$};  &
	&&  \node[operatorDoubleLine] (addb1) {$\operatorname{ADD}_b^K$};
	&	\coordinate (endb);\\
	
	\node (a) {\ketb{0}{a}}; & \node {\textbackslash}; &&&
	&\node {\textbackslash}; 
	&& \node[joint] {};&&\node[ctrl] (addCTRLa) {} ;&
	&&&	\coordinate (enda);\\
	
	\node (s) {\ketb{\psi}{s}}; & \node {\textbackslash}; & \node (sV) {};
	&\node (sStart) {};&
	&\node {\textbackslash}; 
	&\node (leftbrackets) {};& \node[operator] (U2) {$\operatorname{HAM}$}; &	& 
	&	\coordinate (ends); &&& \coordinate (endsZ);\\
};
\begin{pgfonlayer}{background}
\draw[doubleLine,thin] (cDoubleStart) -- (endcZ) (bDoubleStart) -- (endb);
\draw[thick] (a) -- (enda) (c) -- (cDoubleStart) (b) -- (bDoubleStart) (s) -- (endsZ); 
\draw[thick] (HAMTc-2) -- (U2) (addCTRLa) -- (addc1);
\draw[thick] (fV) -- (sV);
\node[fit = (cDoubleStart) (sStart), fill=white,inner sep=6pt, xshift=6pt]  (dots) {$=$};
\node[fit = (fV) (sV),operatorDoubleRegister,HAMTKspace] (HAMKfit) {};
\node[rotate=-90] at (HAMKfit) {$\operatorname{HAM}_K$};

\node[fill=white,inner sep = 1pt] () [fit = (leftbracketf) (leftbrackets)] {};
\draw[decorate,decoration={brace, amplitude = 5pt,mirror},thick, fill=white] ($(leftbracketf.north west)+(0.3cm,+0.2cm)$) to ($(leftbrackets.south west)+(0.3cm,-0.2cm)$);
\draw[decorate,decoration={brace, amplitude = 5pt},thick, fill=white] ($(endc.north west)+(0.0cm,+0.35cm)$) to ($(ends.south west)+(0.0cm,-0.3cm)$);

\node at ($(endc.north west)+(0.4cm,+0.4cm)$) {$K$};
\end{pgfonlayer}
\end{tikzpicture}
\\
	\begin{tikzpicture}[thick]
	
	\matrix[row sep=0.0cm, column sep=0.1cm] (circuit) {
		\node (f) {\ketb{0}{}$_{\vec{b}}$}; &\node {\textbackslash}; & \node[HAMTKspace] (fV) {}; & \node[minimum size = 10pt] (fStart) {};&\node[minimum size = 7pt] () {};&\node {\textbackslash};&\node[operator] () {$\operatorname{COEF}$};
		&\node[HAMTKspace] (fTDS2) {};&\node[operator] () {$\operatorname{COEF}^\dag$};
		&	\coordinate (endf);\\
	
		\node (a) {\ketb{0}{}$_{\vec{a}}$}; & \node {\textbackslash}; &&&&\node {\textbackslash};&
		&&
		&	\coordinate (enda);\\
		
		\node (s) {\ketb{\psi}{s}}; & \node {\textbackslash}; & \node (sV) {}; &\node (sStart) {};&&\node {\textbackslash};  &
		&\node (sTDS2) {};&
		&	\coordinate (ends);\\
	};
	\begin{pgfonlayer}{background}
	\draw[thick] (a) -- (enda) (s) -- (ends) (f) -- (endf); 
	\draw[thick] (fV) -- (sV);
	\node[fit = (fStart) (sStart), fill=white,inner sep=6pt, xshift=6pt]  (dots) {$=$};
	\node[fit = (fV) (sV),operatorDoubleRegister,HAMTKspace] (TTSfit) {};
	\node[fit = (fTDS2) (sTDS2),operatorDoubleRegister,HAMTKspace] (HAMTKfit) {};
	\node[rotate=-90] at (TTSfit) {$\operatorname{TTS}_\beta$};
	\node[rotate=-90] at (HAMTKfit) {$\operatorname{HAM}_K$};
	\end{pgfonlayer}
	\end{tikzpicture}
	&
	\begin{tikzpicture}[thick]
	
	\matrix[row sep=0.0cm, column sep=0.1cm] (circuit) {
		\node (f) {\ketb{0}{}$_{\vec{b}}$}; &\node {\textbackslash}; & \node[HAMTKspace] (fV) {}; & \node[minimum size = 10pt] (fStart) {};&\node[minimum size = 7pt] () {};&\node {\textbackslash};&
		\node[HAMTKspace] (fTDS1) {};&
		\node[HAMTKspace] (fREF2) {};&
		\node[HAMTKspace] (fTDS3) {};&
		\node[HAMTKspace] (fREF4) {};&
		\node[HAMTKspace] (fTDS5) {};&
		&	\coordinate (endf);\\
		
		\node (a) {\ketb{0}{}$_{\vec{a}}$}; & \node {\textbackslash}; &&&&\node {\textbackslash};&
		\node (aTDS1) {};&
		\node (aREF2) {};&
		\node (aTDS3) {};&
		\node (aREF4) {};&
		\node (aTDS5) {};&
		&	\coordinate (enda);\\
		
		\node (s) {\ketb{\psi}{s}}; & \node {\textbackslash}; & \node (sV) {}; &\node (sStart) {};&&\node {\textbackslash}; &
		\node (sTDS1) {};&
		\node (sREF2) {};&
		\node (sTDS3) {};&
		\node (sREF4) {};&
		\node (sTDS5) {};&
		&	\coordinate (ends);\\
	};
	\begin{pgfonlayer}{background}
	\draw[thick] (a) -- (enda) (s) -- (ends)  (f) -- (endf); 
	\draw[thick] (fV) -- (sV);
	\node[fit = (fStart) (sStart), fill=white,inner sep=6pt, xshift=6pt]  (dots) {$=$};
	\node[fit = (fV) (sV),operatorDoubleRegister,HAMTKspace] (TDSfit0) {};
	\node[fit = (fTDS1) (sTDS1),operatorDoubleRegister,HAMTKspace] (TDSfit1) {};
	\node[fit = (fTDS3) (sTDS3),operatorDoubleRegister,HAMTKspace] (TDSfit3) {};
	\node[fit = (fTDS5) (sTDS5),operatorDoubleRegister,HAMTKspace] (TDSfit5) {};
	\node[fit = (fREF2) (aREF2),operatorDoubleRegister,HAMTKspace] (REF2) {};
	\node[fit = (fREF4) (aREF4),operatorDoubleRegister,HAMTKspace] (REF4) {};
	\node[rotate=-90] at (TDSfit0) {$\operatorname{TTS}$};
	\node[rotate=-90] at (TDSfit1) {$\operatorname{TTS}_\beta$};
	\node[rotate=-90] at (TDSfit3) {$\operatorname{TTS}^\dag_\beta$};
	\node[rotate=-90] at (TDSfit5) {$\operatorname{TDS}_\beta$};
	\node[rotate=-90] at (REF2) {$\operatorname{REF}$};
	\node[rotate=-90] at (REF4) {$\operatorname{REF}$};
		
	\end{pgfonlayer}
	\end{tikzpicture}
\end{tabular}
	\caption{\label{Fig:TTS_algorithm}Quantum circuit representation of (top, left) an example implementation of $\operatorname{HAM}_K$ from~\cref{eq:TTS_components} using $K$ queries to controlled-$\operatorname{HAM}$; (top, right) an example implementation of $\operatorname{HAM}_K$ with fewer ancilla qubits using the compression gadget of~\cref{Thm:compression_gadget} ;(bottom, left) a single step of the truncated Taylor series algorithm before oblivious amplitude amplification;  (bottom, right) a single step of time-evolution by the truncated Taylor series algorithm from~\cref{eq:TTS}. Note that $\beta=2$ as a single-round of oblivious amplitude amplification is used.}
\end{figure*}

The original algorithm~\cite{berry2015simulating} implements $\operatorname{HAM}_{K}$ using $K$ queries to controlled-$\operatorname{HAM}$
\begin{align}
\operatorname{C-HAM}:= \ket{1}\bra{1}_b\otimes \openone_{as} + \ket{0}\bra{0}_b\otimes (-i\operatorname{HAM})
\end{align}
with $K$ copies of registers $a$ and $b$. The state $\ket{k}_{\vec{b}}=\ket{0}^{\otimes k}\ket{1}^{\otimes K-k}$ that selects desired powers of $H$ is encoded in unary, and so $\operatorname{COEF}$ may be implemented using $\mathcal{O}(K)$ primitive gates. 	Up to a proportionality factor $\beta$, the unitaries of~\cref{eq:TTS_components} allow us to implement the desired linear combination $\bar{R}_K$ for simulating time-evolution.
\begin{align}
\label{eq:LCU_taylor}
\operatorname{TTS}_\beta&:= (\operatorname{COEF}^\dag\otimes \openone_{\vec{a}s}) \operatorname{HAM}_{K} (\operatorname{COEF}\otimes \openone_{\vec{a}s})
\\\nonumber
(\bra{0}_{\vec{a}b}\otimes \openone_s) \operatorname{TTS}_\beta (\ket{0}_{\vec{a}b}\otimes \openone_s)
&=\frac{\bar{R}_K}{\beta}\approx \frac{e^{-iHt}}{\beta} .
\end{align}
As $\bar{R}_k$ is close to unitary, the success probability $\approx 1/\beta^2$ may be boosted using oblivious amplitude amplification~\cite{Berry2014Exponential}. When $\beta=2$, a single round of oblivious amplitude amplification suffices to boost the success probability to $1-\mathcal{O}(\delta)$. Thus we chose $\ln 2\le t \in{\mathcal{O}}(1)$ such that $\beta=2$. If we desire $|t|< \ln 2$, $\beta$ may be decreased by appending a single-qubit ancilla and noting that $|\bra{0}e^{i\theta X}\ket{0}|=|\cos{\theta}|\le 1$. Thus simulation is accomplished with the circuit
\begin{align}
\label{eq:TTS}
\operatorname{TTS} &= \operatorname{TTS}_{\beta=2}\cdot(\operatorname{REF}\otimes \openone_s)\cdot\operatorname{TTS}^\dag_{\beta=2}\cdot(\operatorname{REF}\otimes \openone_s)\cdot\operatorname{TTS}_{\beta=2},
\\\nonumber
\operatorname{REF}&=\openone_{\vec{a}\vec{b}}-2\ket{0}\bra{0}_{\vec{a}\vec{b}}.
\end{align}
This approximates time-evolution by $e^{-iHt}$ with error 
$
\left\|(\bra{0}_{\vec{a}b}\otimes \openone_s)\operatorname{TTS}(\ket{0}_{\vec{a}b}\otimes \openone_s)-e^{-iHt}\right\|\in{\mathcal{O}}(\delta).
$
In order to simulate evolution $e^{-iHT}$ by longer times $T> t$, we apply $\operatorname{TTS}^{T/ t}$ -- here $t=\Theta(1)$ is chosen such that $T/t$ is an integer. The overall error
\begin{align}
\epsilon=\left\|[\operatorname{TTS}^{T/t}-\openone_{\vec{a}\vec{b}}\otimes e^{-iHT}](\ket{0}_{\vec{a}\vec{b}}\otimes \openone_s)\right\|\in{\mathcal{O}}(T\delta),
\end{align}
and success probability $1-\mathcal{O}(\epsilon)$ may thus be controlled by choosing the error of each segment to be $\delta\in{\mathcal{O}}\left(\frac{\epsilon}{T}\right)$. This requires a truncation order of $K\in{\mathcal{O}}\left(\frac{\log{(\alpha T/\epsilon)}}{\log\log{(\alpha T/\epsilon)}}\right)$. We may drop the implicit assumption that $\|H\|\le 1$, by rescaling $H\rightarrow H/\alpha$, for some normalization constant $\alpha\ge \|H\|$. Thus simulation of $e^{-iHt}$ requires $\mathcal{O}\left(\alpha T\frac{\log{(\alpha T/\epsilon)}}{\log\log{(\alpha T/\epsilon)}}\right)$ queries to $\operatorname{C-HAM}$. Note that the gate cost of all queries to $\operatorname{COEF}$ at $\mathcal{O}\left(\alpha T\frac{\log{(\alpha T/\epsilon)}}{\log\log{(\alpha T/\epsilon)}}\right)$ and that of $\operatorname{REF}$ at $\mathcal{O}\left( n_a\alpha T  \frac{\log{(\alpha T/\epsilon)}}{\log\log{(\alpha T/\epsilon)}}\right)$, is typically dominated by the gate cost of all applications of $\operatorname{C-HAM}$.

The ancilla overhead of the truncated Taylor series algorithm, at $n_s+\mathcal{O}(n_a\frac{\log{(1/\epsilon)}}{\log\log{(1/\epsilon)}})$ qubits, may be significantly improved by choosing the sequence of unitaries in the compression gadget~\cref{Thm:compression_gadget} of~\cref{sec:compresson_gadget} to be $U_j=-i\operatorname{HAM}$. This straightforwardly furnishes the following result.
\begin{corollary}[Hamiltonian simulation by a compressed truncated Taylor series]
	\label{Thm:Compressed_TTS}
	Let a time-independent Hamiltonian $H$ be encoded in standard-form with normalization $\alpha$ and $n_s + n_a$ qubits, as per~\cref{eq:standard-form-TI}. Then the truncated Taylor series algorithm approximates the time-evolution operator $e^{-iHt}$ for any $|\alpha t|\le \ln{2}$ to error $\epsilon$ using
	\begin{enumerate}
		\item Queries to $\operatorname{HAM}$: $\mathcal{O}\left(\frac{\log{(1/\epsilon)}}{\log\log{(1/\epsilon)}}\right)$.
		\item Qubits: $n_s+\mathcal{O}(n_a+\log\log{(1/\epsilon)})$.
		\item Primitive gates: $\mathcal{O}\left((n_a + \log\log{(1/\epsilon)})\frac{\log{(1/\epsilon)}}{\log\log{(1/\epsilon)}}\right)$.
	\end{enumerate}
\end{corollary}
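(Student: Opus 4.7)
The plan is to apply the compression gadget of \lem{compression_gadget} with the trivial sequence $U_j = -i\operatorname{HAM}$ for every $j \in \{1,\ldots,K\}$, using this in place of the duplicated implementation of $\operatorname{HAM}_K$ in \eq{TTS_components}, and then to run the rest of the truncated Taylor series algorithm (state-preparation operator $\operatorname{COEF}$ plus a single round of oblivious amplitude amplification) exactly as presented in \sec{truncated_taylor_algorithm}. The only thing that changes is the subroutine that prepares $\sum_{k=0}^K \ket{k}\bra{k}_b \otimes (-iH/\alpha)^k$; everything downstream is unchanged, so the error and success-probability bounds follow verbatim.

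Concretely, first I would invoke \lem{compression_gadget} with $H_j := -iH/\alpha$ and $U_j := -i\operatorname{HAM}$ for every $j$, since $\|{-iH}/\alpha\|\le 1$ by assumption. The gadget yields a circuit $V$ on the $s,a,b,c$ registers (with $b$ and $c$ each of size $\mathcal{O}(\log K)$) such that
\begin{equation}
(\bra{0}_{ac}\otimes \openone_s)\, V\, (\ket{0}_{ac}\otimes \openone_s)
= \ket{0}\bra{0}_b\otimes \openone_s \;+\; \sum_{k=1}^{K} \ket{k}\bra{k}_b \otimes \bigl({-}iH/\alpha\bigr)^{k},
\end{equation}
at a cost of $K$ queries to controlled-controlled-$\operatorname{HAM}$ and $\mathcal{O}(K(n_a+\log K))$ additional gates. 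This plays exactly the role of $\operatorname{HAM}_K$ in \eq{TTS_components}, but with the counter register $b$ only $\mathcal{O}(\log K)$ qubits wide rather than $K$ qubits, and without duplicating the $n_a$-qubit $\vec a$ register.

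Second, I would wrap $V$ with a state-preparation unitary $\operatorname{COEF}$ that prepares $\frac{1}{\sqrt{\beta}}\sum_{k=0}^K \sqrt{(\alpha t)^k/k!}\,\ket{k}_b$ in the compact binary encoding (not the unary one of \sec{truncated_taylor_algorithm}). Because $b$ now has dimension $K+1$ and $\operatorname{COEF}$ can be realized by an arbitrary-state preparation on $\mathcal{O}(\log K)$ qubits, it costs $\mathcal{O}(K)$ gates. This yields $\operatorname{TTS}_\beta$ satisfying $(\bra{0}\otimes\openone_s)\operatorname{TTS}_\beta(\ket{0}\otimes\openone_s) = \bar R_K/\beta$ with $\beta \le e^{\alpha t}$. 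Picking $|\alpha t|\le \ln 2$ so that we can arrange $\beta=2$ (inflating with an ancilla rotation if $\beta<2$, as in \sec{truncated_taylor_algorithm}), a single round of robust oblivious amplitude amplification (\lem{Robust_OAA}) gives a circuit $\operatorname{TTS}$ whose block encoding is $\epsilon$-close to $e^{-iHt}$ provided $K = \lceil -1 + 2\ln(2/\epsilon)/(\ln\ln(2/\epsilon)+1)\rceil \in \mathcal{O}(\log(1/\epsilon)/\log\log(1/\epsilon))$, by the same truncation estimate used in \eq{TTS_error}.

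Finally, I would tally the costs. The amplified circuit uses three invocations of $\operatorname{TTS}_\beta$ and two reflections $\operatorname{REF}$ about $\ket{0}_{abc}$; the reflections act on $n_a+\mathcal{O}(\log K)$ qubits and hence cost $\mathcal{O}(n_a+\log K)$ gates each. The query count is $3K = \mathcal{O}(\log(1/\epsilon)/\log\log(1/\epsilon))$. The qubit count is $n_s + n_a + \mathcal{O}(\log K) = n_s + \mathcal{O}(n_a + \log\log(1/\epsilon))$, where only the $s$ register is excluded from the big-$\mathcal{O}$ since it is never duplicated. The primitive-gate count is dominated by the $\mathcal{O}(K(n_a+\log K))$ overhead of the compression gadget, which is $\mathcal{O}((n_a+\log\log(1/\epsilon))\log(1/\epsilon)/\log\log(1/\epsilon))$, matching the claimed bounds. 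There is no real obstacle here: the substantive analytic work (truncation error, robustness of oblivious amplitude amplification, correctness of the compression gadget) has all been done earlier in the paper, and the only thing to verify is that the choice $U_j = -i\operatorname{HAM}$ satisfies the hypothesis $\|H_j\|\le 1$ of \lem{compression_gadget}, which is immediate from the standard-form normalization.
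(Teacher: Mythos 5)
Your proposal is correct and follows essentially the same route as the paper, which likewise obtains this corollary by instantiating the compression gadget of \cref{Thm:compression_gadget} with $U_j=-i\operatorname{HAM}$ (so $H_j=-iH/\alpha$ with $\|H_j\|\le 1$ from the standard-form promise) and reusing the $\operatorname{COEF}$ preparation, truncation bound of \cref{eq:TTS_error}, and single round of oblivious amplitude amplification from \cref{sec:truncated_taylor_algorithm}. Your cost tally matches the paper's accounting in every respect.
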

For longer-time simulations $e^{-iHT}$ of duration $T>t$,~\cref{Thm:Compressed_TTS} is applied $\alpha T/\ln{(2)}$ times, each with error $\mathcal{O}(\frac{\epsilon}{\alpha T})$. This leads a query complexity $\mathcal{O}(\alpha T \frac{\log{(\alpha T/\epsilon)}}{\log\log{(\alpha T/\epsilon)}})$. Though the compressed algorithm is still worse than the quantum signal processing approach, which uses $n_s+\mathcal{O}(n_a)$ qubits, the technique is applicable to simulating time-dependent Hamiltonians, as demonstrated in~\cref{sec:truncated_dyson_series}.

\end{document}